\definecolor{SkyBlue}{RGB}{14, 118, 188}
\definecolor{BrightRed}{RGB}{223, 82, 78}
\definecolor{Green638}{RGB}{165,255,118} % from colours.cafe on instagram; pallete638
\newcommand{\R}{\mathbb{R}} % boldfaced R for the reals
\newcommand{\E}{\mathbb{E}} % boldfaced E for expectations
\def\P{\mathbb{P}} % boldfaced P for probability. overriding \P for paragraph symbol
\newcommand{\ind}[1]{\mathbbm{1}\left( #1 \right)} % indicator function, with an argument
\newcommand{\mvnormaldist}[3]{\mathcal{N}_{#1}\left(#2,#3\right)} % multivariate normal distribution
\newcommand{\berndist}[1]{\textrm{Bernoulli}\left(#1\right)} % Bernoulli
\newcommand{\by}{\bm{y}}
\newcommand{\bx}{\bm{x}}
\newcommand{\bz}{\bm{z}}
\newcommand{\br}{\bm{r}}
\newcommand{\bY}{\bm{Y}}
\newcommand{\bX}{\bm{X}}
\newcommand{\bZ}{\bm{Z}}
\newcommand{\bmu}{\boldsymbol{\mu}}
\newcommand{\bdelta}{\boldsymbol{\delta}}
\theoremstyle{plain}
\newtheorem{theorem}{Theorem}
\newtheorem{lemma}{Lemma}
\newtheorem{proposition}{Proposition}
\theoremstyle{definition}
\newtheorem{ex}{Example}
\theoremstyle{plain}
\newcommand{\includesupp}{1}
\newcommand{\switchref}[2]{%
  \if\includesupp1%
    #1%
  \else%
    #2%
  \fi%
}
\title{High-dimensional regression with outcomes of mixed-type using the multivariate spike-and-slab LASSO}
\author{Soham Ghosh\thanks{Dept.~of Statistics, University of Wisconsin--Madison. \texttt{sghosh39@wisc.edu}} \and Sameer K. Deshpande\thanks{Dept.~of Statistics, University of Wisconsin--Madison. \texttt{sameer.deshpande@wisc.edu}}.}
\begin{document}
\maketitle

%\begin{abstract}
We consider a high-dimensional multi-outcome regression in which $q,$ possibly dependent, binary and continuous outcomes are regressed onto $p$ covariates.
We model the observed outcome vector as a partially observed \emph{latent} realization from a multivariate linear regression model.
Our goal is to estimate simultaneously a sparse matrix ($\bm{B}$) of latent regression coefficients (i.e., partial covariate effects) and a sparse latent residual precision matrix ($\Omega$), which induces partial correlations between the observed outcomes.
To this end, we specify continuous spike-and-slab priors on all entries of $\bm{B}$ and the off-diagonal elements of $\Omega$ and derive a Monte Carlo Expectation-Conditional Maximization algorithm to compute the maximum a posteriori estimate.
Under a set of mild assumptions, we derive the posterior contraction rate for our model in the high-dimensional regimes where both $p$ and $q$ diverge with the sample size $n$ and establish a sure screening property, which implies that, as $n$ increases, we can recover all truly non-zero elements of $\bm{B}$ with probability tending to one.
We demonstrate the excellent finite-sample properties of our proposed method, which we call \texttt{mixed-mSSL}, using extensive simulation studies and three applications spanning medicine to ecology.
%\end{abstract}

\newpage

\section{Introduction}
\label{sec:introduction}
Several scientific applications involve estimating the effects of numerous covariates on multiple, possibly interdependent outcomes.
When all outcomes are continuous, a multi-output linear regression model is a natural starting point.
In this setting, jointly modeling the regression coefficients and the residual covariance matrix confers many advantages over fitting separate models to each outcome: parameter estimates from the joint model are asymptotically more efficient \citep{Zellner1962_sur} and, in high-dimensional settings, joint models can detect smaller covariate effects \citep{Deshpande2019_mSSL} and produce more accurate predictions \citep{LiGhosh2023}.
But when outcomes are of \emph{mixed-type} --- that is, when some outcomes are continuous and others
are discrete; specifying joint models that capture the residual dependence between the outcomes is much more complicated. 
By expressing the vector of continuous and discrete observations as a partially observed realization from a sparse multi-output linear regression model, we show that many of the benefits of joint modeling can be realized in the mixed-type outcome setting. 

Specifically, for $n$ observations of a $p$-dimensional covariate vector $\bx$ and a $q$-dimensional outcome $\by$, we introduce a latent Gaussian variable $\bz_i \sim \mvnormaldist{q}{\bm{B}^{\top}\bx_i}{\Omega^{-1}}$ such that $\by_i = g(\bz_i),$ where $g$ is a deterministic function (defined formally in \Cref{sec:method}). This framework allows us to simultaneously estimate the $p \times q$ regression matrix $\bm{B}$ and the $q \times q$ latent residual precision matrix $\Omega$. In each of the following applications, our goal is to estimate $\bm{B}$ to characterize covariate effects on multiple outcomes and $\Omega$ to quantify residual cross-outcome dependence.
% In each of the following applications, our goal is to identify the key predictors of multiple outcomes (a sparse $\bm{B}$) and quantify the residual dependencies between those outcomes (a sparse $\Omega$). 
\begin{itemize}
\item{\textbf{Biomarkers for kidney disease:} Like \citet{CKDdata}, we seek the most important clinical predictors for two outcomes -- chronic kidney disease status (binary) and urine specific gravity (continuous) -- using data on $n=400$ patients and $p=24$ covariates.}
\item{\textbf{Colorectal cancer and the microbiome:} In a high-dimensional setting ($p=849$ bacterial species $> \ n =574 $ subjects) obtained through meta studies outlined in \citet{Wirbel2019meta} and \citet{Qin2012T2D}, we aim to identify gut microbes associated with three linked health indicators: colorectal cancer status (binary), Type-2 diabetes status (binary), and BMI (continuous). 
}
\item{\textbf{Finnish birds:} Following \citet{Lindstrom15}, we want to determine how climate covariates affect the presence-absence of $q=50$ most common Finnish bird species across $137$ locations and, from the residual dependencies, infer a network of inter-species interactions unexplained by climate.}
\end{itemize}

For the CRC and Finnish bird datasets, the total number of parameters $(pq+q(q-1)/2)$ exceeds $n$, so we assume sparsity in $\bm B$ and/or $\Omega$. We enforce this via \emph{element-wise} continuous spike-and-slab LASSO (\texttt{SSLASSO}) priors on entries of $\bm B$ and on the off-diagonals of $\Omega$ \citep{RockovaGeorge2018_ssl}. Intuitively, \texttt{SSLASSO} uses a continuous mixture of a sharp ``spike" penalty and a diffuse ``slab" penalty to perform automatic variable selection, aggressively shrinking noise while preserving large signals. Motivated by this setting, we propose \texttt{mixed-mSSL} to learn $(\bm B,\Omega)$ for mixed-type outcomes with large $p$ or $q$. The method extends \texttt{SSLASSO} and its multivariate variant \texttt{mSSL} \citep{Deshpande2019_mSSL} to mixed outcomes via a latent Gaussian formulation.

We develop a Monte Carlo Expected--Conditional Maximization (MCECM) algorithm that targets the maximum a posteriori (MAP) estimates of $(\bm{B}, \Omega)$.
Our algorithm alternates between a Monte Carlo E-step and two conditional maximization steps that sequentially update $\bm{B}$ and $\Omega$. 
In simulation studies, our \texttt{mixed-mSSL} algorithm demonstrated strong accuracy and specificity for joint variable and covariance selection, outperforming a recently proposed Bayesian competitor \citep[\texttt{mt-MBSP}]{wang2023twostep} and two baseline approaches, one that fits separate spike-and-slab lasso models (\texttt{sepSSL}) to each outcome and another using frequentist elastic-net penalized GLMs (\texttt{sepglm}). 

We derive posterior contraction rates for both  $\bm{B}$ and $\Omega$ in the regime where both the number of covariates $p$ and the number of outcomes $q$ grow with the sample size $n$. 
In doing so, our results significantly extend earlier results on posterior consistency for $\bm{B}$ like \citet{BAIGhoshJMVA}, who assumed that $q$ is fixed, and \citet{wang2023twostep}, who did not establish posterior contraction for $\Omega.$ 
Under an additional ``beta-min'' condition, we further show that \texttt{mixed-mSSL} can consistently recover the non-zero elements of $\bm{B}$ (\Cref{thm:surescreening}). 

On the motivating datasets, \texttt{mixed-mSSL} obtained promising empirical results. 
For the CKD data, it not only recovered known biomarkers from the original study \citep{CKDdata} but also uniquely identified random blood glucose as a risk factor, a finding corroborated in follow-up studies. 
Furthermore, it achieved the highest predictive accuracy (AUC and prediction error) on the CRC microbiome data and uncovered an ecologically plausible species interaction network in the Finnish bird data. 

The rest of the paper is organized as follows. 
In \Cref{sec:method}, we introduce the \texttt{mixed-mSSL} model in detail before deriving our MCECM algorithm for MAP estimation in \Cref{sec:algorithm}.
We state our main theoretical results in \Cref{sec:theory}, deferring all proofs to \switchref{\Cref{sec:proofs}}{Section S1 in the Supplementary Materials}. 
Then, in \Cref{sec:experiments}, we report the results from several synthetic data simulations comparing \texttt{mixed-mSSL} to several competing methods.
We apply \texttt{mixed-mSSL} to each of the three motivating datasets in \Cref{sec:realdata} and conclude with a discussion of potential extensions in \Cref{sec:discussion}.

\section{Modeling Mixed-type Responses}
\label{sec:method}
Suppose we have $n$ pairs $(\bx_{1}, \by_{1}), \ldots, (\bx_{n}, \by_{n})$ of covariates $\bx \in \R^{p}$ and outcomes $\by \in \R^{q_{c}} \times \{0,1\}^{q_{b}}.$ 
We model, for each $i = 1, \ldots, n$,  $\bz_{i} \vert \bm{B}, \Omega \sim \mvnormaldist{q}{\bm{B}^{\top}\bx_{i}}{\Omega^{-1}}$ such that $\by_{i} = g(\bz_{i}),$  where the function $g: \R^{q} \mapsto \R^{q_{c}} \times \{0,1\}^{q_{b}}$ operates element-wise and is given by:
\begin{equation}
\label{eq:g_function}
g(z_1, \ldots, z_{q_c}, z_{q_c+1}, \ldots, z_q) = (z_{1}, \ldots, z_{q_{c}}, \ind{z_{q_{c}+1} \geq 0}, \ldots, \ind{z_{q} \geq 0})^{\top}.
\end{equation}

That is, $g$ leaves the first $q_{c}$ components of its argument unchanged and thresholds the remaining $q_{b}$ components at zero, assuming $q=q_{c}+q_{b}$. Because only the \emph{sign} of each latent binary coordinate is observed, its scale is not identifiable. 
We therefore fix the residual variance of every binary latent component to be one;
that is, we constrain $\sigma^{2}_{k}: = (\Omega^{-1})_{k,k} = 1$ for $k = q_{c}+1, \ldots, q.$

Our proposed model generalizes \citet{Chib1998AnalysisOM}'s multivariate probit regression model.
It is also closely related to \citet{CanaleDunson2011}'s rounding-based approach to modeling count processes.
%Though we do not pursue it here, by using a slightly different transformation $g,$ our model can easily accommodate count data. 
Our proposed model is similar to the one studied in \citet{wang2023twostep}, which introduces $q_{b}$ latent P\'{o}lya-Gamma \citep{Polson2013_polyagamma} variables (one for each binary outcome) that are related to a $q$-dimensional latent Gaussian $\bz$; see Equation 2.6 of \citet{wang2023twostep} for details. 
\citet{ekvall}'s \texttt{mmrr} procedure also models mixed-type outcomes using latent Gaussian variables.
However, their approach fundamentally differs from ours: while our model treats the observed outcomes as deterministic transformations of latent variables $(\by=g(\bz)),$ their method assumes canonical generalized linear model (GLM) links relating the conditional mean of each observed outcome directly to the latent variables (i.e., $\E(\bm{Y} \vert \bm{Z})=h^{-1}(\bm{Z})$ for a known link function $h$).

The precision matrix $\Omega$ captures residual dependence between the components of the latent variable $\bz,$ which induces dependence between the observed outcomes.
Although the model allows for arbitrary correlations between elements of the latent $\bz,$ it restricts the range of observable correlations between outcomes of mixed type (\Cref{lem:correlation_bounds}).

\begin{lemma}[Maximal observed correlation]\label{lem:correlation_bounds}
Fix an $\bx \in \R^{p}$ and let $\bY=g(\bZ)$ where $g$ is the function in \Cref{eq:g_function} and $\bZ \sim \mvnormaldist{q}{\bm{B}^{\top}\bx}{\Omega^{-1}}.$
For $k \neq k',$ the maximum absolute conditional correlation $\lvert \mathrm{Corr}(Y_{k}, Y_{k'} \vert \bX = \bx) \rvert$ is 1 if $Y_{k}$ and $Y_{k'}$ are both binary or both continuous and $\sqrt{2/\pi} \approx 0.8$ if one of $Y_{k}$ and $Y_{k'}$ is continuous and the other binary.\end{lemma}
We prove \Cref{lem:correlation_bounds} in \switchref{\Cref{sec:mixedcorr}}{Section S1.3 in the Supplementary Materials} using a similar argument to the proof of Lemma 2.1 in \citet{ekvall}.

\subsection{Prior specification}
\label{sec:priorspec}
When the total number of parameters in the latent model, $pq + q(q-1)/2,$ exceeds the sample size, not all the entries in $\bm{B}$ and $\Omega$ are likelihood-identified.
To make the estimation problem tractable, it is common to assume that both $\bm{B}$ and $\Omega$ are sparse.
Rather than specifying a prior fully supported on exactly sparse matrices, we follow \citet{Deshpande2019_mSSL} and specify continuous Laplacian spike-and-slab priors on the elements $\beta_{j,k}$ and $\omega_{k,k'}.$
%We specifically employ Laplacian spike and slab distributions, which were first introduced in \citet{RockovaGeorge2018_ssl}.

Formally, we fix positive constants $0 < \lambda_{1} \ll \lambda_{0}$ and model the entries $\beta_{j,k}$ as being drawn from either a very diffuse $\textrm{Laplace}(\lambda_{1})$ slab or $\textrm{Laplace}(\lambda_{0})$ spike that is sharply concentrated around zero.
Letting $\theta \in [0,1]$ be the probability that each $\beta_{j,k}$ is drawn from the slab, our conditional prior density of $\bm{B}$ is 
$$
p(\bm{B} \vert \theta) = \prod_{j = 1}^{p}{\prod_{k = 1}^{q}{\left[\frac{\theta\lambda_{1}}{2}e^{-\lambda_{1}\lvert \beta_{j,k} \rvert} + \frac{(1-\theta)\lambda_{0}}{2}e^{-\lambda_{0}\lvert \beta_{j,k} \rvert}\right]}}.
$$

For $\Omega,$ it suffices to specify a prior for the diagonal elements $\omega_{k,k}$ and the off-diagonal elements in the upper triangle $\omega_{k,k'}$ with $k < k'.$
To this end, entirely analogously with $\bm{B},$ we fix two more positive constants $0 < \xi_{1} \ll \xi_{0};$ introduce a mixing proportion $\eta \in [0,1];$ and model the $\omega_{k,k'}$'s as conditionally independently drawn from a $\textrm{Laplace}(\xi_{1})$ slab with probability $\eta$ or a $\textrm{Laplace}(\xi_{0})$ spike with probability $(1-\eta).$
We further place independent $\textrm{Exponential}(\xi_{1})$ priors on the diagonal elements and truncate the prior to the positive definite cone, yielding the conditional density
\begin{align*}
p(\Omega \vert \eta) &\propto \ind{\Omega \succ 0} \times \prod_{k=1}^{q}{\xi_{1}e^{-\xi_1 \omega_{k,k}}} \times \prod_{1 \leq k < k' \leq q}{\left[\eta \xi_{1}e^{-\xi_{1}\lvert \omega_{k,k'} \rvert} + (1 - \eta)\xi_{0}e^{-\xi_{0}\lvert \omega_{k,k'}\rvert} \right]}
\end{align*}
To model our uncertainty about the proportion of entries in $\bm{B}$ and $\Omega$ drawn from the slab (i.e., their overall sparsity), we place independent Beta $(a_{\theta},b_{\theta})$ and Beta $(a_{\eta},b_{\eta})$ priors on $\theta$ and $\eta$ respectively where $a_{\theta}, b_{\theta}, a_{\eta},$ and $b_{\eta}$ are fixed positive constants.

% \section{Parameter estimation with Monte Carlo Expectation-Conditional Maximization}
% \label{sec:algorithm}
% \input{new_algorithm}
\section{The MCECM Algorithm for parameter estimation}
\label{sec:algorithm}
We approximate the MAP of $\Xi=(\bm B,\theta,\Omega,\eta)$ via a Monte Carlo ECM (MCECM) routine. 
Let $\bX\in\R^{n\times p}$, $\bY\in\R^{n\times q}$, $\bZ\in\R^{n\times q}$ collect rows $\bx_i^\top,\by_i^\top,\bz_i^\top$. 
Throughout, we assume that the columns of $\bX$ are centered and scaled to have $\ell_{2}$ norm $\sqrt{n}.$
We partition $\by_i=(\by_i^{(C)},\by_i^{(B)})\in\R^{q_c}\times\{0,1\}^{q_b}$ and similarly partition $\bz_i=(\bz_i^{(C)},\bz_i^{(B)}).$
For $\by_i^{(B)}=(y_{i1}^{(B)},\dots,y_{iq_b}^{(B)})$, we define the orthant $\mathcal H(\by_i^{(B)})=\{\bz\in\R^{q_b}:y_{ik}^{(B)}=\mathbbm{1}(z_k>0)\ \forall k\}$.
Under our partially observed latent Gaussian model, $\bz_i\in\R^{q_c}\times\mathcal H(\by_i^{(B)})$. 
For $\Omega\succ0$, the log posterior of $\Xi$ is given as,
\begin{align}
\begin{split}
\label{eq:xi_log_posterior}
\log p(\Xi \vert \bY) &= \sum_{i = 1}^{n}{\log p(\by_{i} \vert \bm{B}, \Omega)} + \sum_{j = 1}^{p}{\sum_{k = 1}^{q}{\log\left[\theta\lambda_{1}e^{-\lambda_{1}\lvert \beta_{j,k} \rvert} + (1-\theta)\lambda_{0}e^{-\lambda_{0}\lvert \beta_{j,k}\rvert} \right]}} \\
&+ \sum_{1 \leq k < k' \leq q}{\log\left[\eta\xi_{1}e^{-\xi_{1}\lvert \omega_{k,k'} \rvert} + (1-\eta)\xi_{0}e^{-\xi_{0}\lvert \omega_{k,k'}\rvert} \right]} - \xi_{1}\sum_{k = 1}^{q}{\omega_{k,k}} \\
&+ (a_\theta -1)\log\theta + (b_\theta -1)\log(1-\theta) + (a_{\eta} - 1)\log\eta + (b_{\eta} - 1)\log(1-\eta).
\end{split}
\end{align}
This is hard to optimize due to the (i) non-concavity of $\log p(\bm{B}, \Omega \vert \theta, \eta)$ and (ii) the analytic intractability of $p(\by_{i} \vert \bm{B}, \Omega).$

\textbf{Augmentation and surrogate objective.}
We overcome the challenges of the non-concave penalties with an EM-like algorithm similar to \citet{Deshpande2019_mSSL}.
Specifically, we introduce indicators $\bdelta^{(\beta)}=\{\delta^{(\beta)}_{jk}\}$ and $\bdelta^{(\omega)}=\{\delta^{(\omega)}_{kk'}\}$ encoding whether or not the elements $\beta_{j,k}$ and $\omega_{k,k'}$ are drawn from their respective slab ($\delta = 1$) or spike ($\delta = 0$) distributions.
We iteratively compute and maximize a surrogate objective function:
\begin{align}
\begin{split}
\label{eq:ecm_surrogate}
F(\Xi) & = \E[\log p(\Xi, \bdelta^{(\beta)}, \bdelta^{(\omega)} \vert \Xi, \bY] \\
 & = \sum_{i = 1}^{n}{\log p(\by_{i} \vert \bm{B}, \Omega)} -\sum_{j=1}^{p}{\sum_{k=1}^{q}{\lambda^{\star}_{j,k}\lvert \beta_{j,k} \rvert }} - \sum_{1 \leq k < k' \leq q}{\xi^{\star}_{k,k'}\lvert \omega_{k,k'} \rvert} - \xi_{1}\sum_{k = 1}^{q}{\omega_{k,k}} \\
&+ \left(a_\theta -1 + \sum_{j = 1}^{p}{\sum_{k = 1}^{q}{{p^{\star}_{j,k}}}}\right)\log\theta + \left(b_\theta -1 + pq - \sum_{j = 1}^{p}{\sum_{k = 1}^{q}{{p^{\star}_{j,k}}}}\right)\log(1-\theta) \\
& + \left(a _{\eta} - 1 + \sum_{1 \leq k < k' \leq q}{q^{\star}_{k,k'}}\right)\log \eta \\
& + \left(b_{\eta} - 1 + \frac{q(q-1)}{2} - \sum_{1 \leq k < k' \leq q}{q^{\star}_{k,k'}}\right)\log(1-\eta),
\end{split}
\end{align}
where $p^\star_{jk}=\E[\delta^{(\beta)}_{jk}\vert\bm B,\theta],$ $q^\star_{kk'}=\E[\delta^{(\omega)}_{kk'}\vert\Omega,\eta],$ $\lambda^\star_{jk}=\lambda_1 p^\star_{jk}+\lambda_0(1-p^\star_{jk}),$ and $\xi^\star_{kk'}=\xi_1 q^\star_{kk'}+\xi_0(1-q^\star_{kk'})$ are available in closed form (see \switchref{\Cref{sec:slabprobs}}{Section S2.1 in the Supplementary Materials for more details}). 

\subsection{Monte Carlo log-likelihood approximation}
\label{subsec:mcll}
Maximizing $F(\Xi)$ involves solving a penalized maximum likelihood problem with a separable, concave penalty. 
Unfortunately, $F(\Xi)$ is not available in closed form.
To see this, we have for each $i = 1, \ldots, n$
\[
p(\by_i\mid\bm B,\Omega)
=\int_{\R^{q_c}}\int_{\mathcal H(\by_i^{(B)})}
\delta\!\big(\by_i^{(C)}-\bz_i^{(C)}\big)\,
\phi_q\!\big(\bz_i;\bm B^\top\bx_i,\Omega^{-1}\big)\,
\mathrm d\bz_i^{(B)}\,\mathrm d\bz_i^{(C)},
\]
where $\delta(\cdot)$ is the Dirac density and $\phi_{q}(\cdot; \mu, \Sigma)$ is the density of the $\mvnormaldist{q}{\mu}{\Sigma}$ distribution. The inner integral is a multivariate normal orthant probability, which is analytically intractable for $q_b \ge 2$. Hence $\log p(\by_i\vert \bm B,\Omega)$ is analytically unavailable, and $F(\Xi)$ cannot be evaluated in closed form.

We therefore replace $\sum_i\log p(\by_i\vert\bm B,\Omega)$ in \Cref{eq:ecm_surrogate} by a Monte Carlo average:
\(
H^{-1}\sum_{i=1}^n\sum_{h=1}^H \log p(\by_i,\bz_i^{(h)}\vert\bm B,\Omega),
\)
where $\bz_i^{(h)}$ are i.i.d.\ draws from $p(\bz_i\vert\by_i,\bx_i,\bm B,\Omega)$. Since $\bz_i^{(C)}=\by_i^{(C)}$, only $\bz_i^{(B)}$ is sampled; conditional on $(\by_i,\bx_i,\bm B,\Omega)$ it is a $q_b$-variate Gaussian truncated to $\mathcal H(\by_i^{(B)})$ (see \switchref{\Cref{sec:zposterior}}{Section S1.1 in the Supplementary Materials}). 
We generate $\bz_i^{(B)}$ using the \texttt{LinESS} elliptical slice sampler of \citet{gessner2020integralsgaussianslineardomain}. Thus at the $t^{th}$ iterate, instead of optimizing $F^{(t)}(\Xi)$ with two CM steps, we instead optimize the objective $\tilde{F}^{(t)}$ which replaces the term $\sum_{i = 1}^{n}{\log p(\by_{i} \vert \bm{B}, \Omega)}$ in \Cref{eq:ecm_surrogate} with the Monte Carlo average of the $\bm{z}_i$'s drawn.

\subsection{Our MCECM algorithm}\label{sec:mcecm}
Our MCECM algorithm proceeds by augmenting the model with latent $\bz_i$'s and spike-and-slab indicators $\bdelta^{(\beta)}$ and $\bdelta^{(\omega)}$. Each iteration involves a Monte Carlo E-step to update the adaptive penalties and approximate the log-likelihood, followed by two CM-steps that sequentially update $(\bm{B}, \theta)$ and $(\Omega, \eta)$.

\textbf{CM Step 1: updating $\bm{B}$ and $\theta.$} We first update $(\bm{B},\theta)$ by maximizing the relevant terms of the surrogate objective function:
\begin{align*}
\tilde{F}^{(t)}(\bm{B}, \theta, \Omega^{(t-1)}, \eta^{(t-1)}) &= - \frac{1}{2H}\textrm{tr}\left(\mathbb{M}(\bm{B})^{\top}\mathbb{M}(\bm{B})\Omega^{(t-1)}\right) - \sum_{j=1}^{p}{\sum_{k=1}^{q}{\lambda^{\star}_{j,k}\lvert \beta_{j,k}\rvert}} \\
&+  \left(a_\theta -1 + \sum_{j = 1}^{p}{\sum_{k=1}^{q}{{p^{\star}_{j,k}}}}\right)\log\theta + \left(b_\theta -1 + pq - \sum_{j = 1}^{p}{\sum_{k=1}^{q}{{p^{\star}_{j,k}}}}\right)\log(1-\theta),
\end{align*}
where $\mathbb{M}$ is an $(nh)\times q$ matrix whose rows are $(\bm{z}_i^{(h)} - \bm{B}^\top \bm{x}_i)^\top$. The update for $\theta$ is available in closed form. For $\bm{B}$, we use a cyclic coordinate ascent algorithm where each element $\beta_{jk}$ is updated using a dual-thresholding rule:
\begin{equation}
    \label{eq:hardsoft}
    \beta_{jk}^{\text{new}} \;=\; [|S_{jk}|-\lambda_{jk}^{\star}]_{+} \frac{\text{sign}(S_{jk})}{nH \omega_{kk}^{(t-1)}} \mathbbm{1} \left( \left| \frac{S_{jk}}{nH \omega_{kk} ^{(t-1)}} \right| > \Delta_{jk} \right).
\end{equation}
This update performs both selection and shrinkage. 
If the precision-weighted score $|S_{jk}|$ is below a data-driven threshold $\Delta_{jk}$, the coefficient is set to zero. 
Otherwise, it is shrunk by the adaptive soft-thresholding penalty $\lambda_{jk}^\star$, which is smaller for entries with a high posterior probability of being non-zero.
See \switchref{\Cref{sec:dualthreshold}}{Section S2.2 in the Supplementary Materials for a complete derivation}.

\textbf{CM Step 2: updating $\Omega$ and $\eta$}. After updating $\bm{B}$ and $\theta,$ we update $(\Omega, \eta)$ by maximizing the function
\begin{align*}
\tilde{F}^{(t)}(\bm{B}^{(t)}, \theta^{(t)}, \Omega, \eta) &= \frac{n}{2}\log \lvert \Omega \rvert - \frac{1}{2H}\textrm{tr}\left(\mathbb{M}(\bm{B}^{(t)})^{\top}\mathbb{M}(\bm{B}^{(t)})\Omega \right) - \xi_{1}\sum_{k = 1}^{q}{\omega_{k,k}} - \sum_{1 \leq k < k' \leq q}{\xi^{\star}_{k,k'}\lvert \omega_{k,k'} \rvert} \\
&+ \left(a _{\eta} - 1 + \sum_{1 \leq k < k' \leq q}{q^{\star}_{k,k'}}\right)\log \eta + \left(b_{\eta} - 1 + \frac{q(q-1)}{2} - \sum_{1 \leq k < k' \leq q}{q^{\star}_{k,k'}}\right)\log(1-\eta).
\end{align*}
Just like in the first CM step, this objective is separable and we can update $\eta$ in closed-form.
Updating $\Omega$ essentially amounts to solving a graphical LASSO problem with individual penalties.
In our implementation, we solve this problem using \citet{JMLR:v15:hsieh14a}'s QUIC algorithm. 

\textbf{Implementation.} 
Our MCECM algorithm depends on eight prior hyper-parameters: the spike and slab penalties $\lambda_{0}, \xi_{0}, \lambda_{1},$ and $\xi_{1}$ the Beta hyper-parameters $a_{\theta}, b_{\theta}, a_{\eta},$ and $b_{\eta}.$
Instead of running the algorithm for one specific set of hyperparameters to estimate the mode of one specific posterior, we perform what \citet{RockovaGeorge2018_ssl} term ``dynamic posterior exploration'' by running the algorithm along a grid of hyper-parameter settings with warm-starts.
Specifically, we fix the slab penalties $\lambda_{1}\approx 1/\sqrt{n\log n}$ and $\xi_{1}=n/100$ and $a_\theta=1,\ b_\theta=pq$ and $a_\eta=1,\ b_\eta=q.$
Then, we run the algorithm for every combination of fixed ten equally spaced spike penalties $\lambda_{0}\in[10,100]$ and $\xi_{0}\in[n/10,n].$
In doing so, we maximize several posterior distributions, one for each combination of spike penalties.
Essentially, by using warm-starts, our implementation propagates an initial estimate $(\bm{B}, \Omega)$ through a series of increasingly stringent filters.
See \switchref{\Cref{sec:extraalgorithm}}{Section S2.3 in the Supplementary Materials} for further details.

\section{Theoretical Results}
\label{sec:theory}
If our model is well-specified, that is, there are true data generating parameters $\bm{B}_{0}$ and $\Omega_{0}$ such that each $\by_{i}$ is a partially observed version of a latent $\mvnormaldist{q}{\bm{B}_{0}^{\top}\bx_{i}}{\Omega_{0}^{-1}}$ random variable --- then, under mild assumptions and slight prior modification, the \texttt{mixed-mSSL} posterior concentrates increasingly overwhelming amounts of probability into smaller and smaller neighborhoods around $\bm{B}_{0}$ and $\Omega_{0}$ as $n$ diverges.
In this section, we establish posterior concentration rates for both $\bm{B}$ (\Cref{thm:unconditional}) and $\Omega$ (\Cref{thm:theorem3}) in the high-dimensional regime where the number of covariates $p$ and the number of outcomes $q$ grow with $n.$
We further demonstrate that as long as the non-zero entries in $\bm{B}_{0}$ are not too small, combining \texttt{mixed-mSSL} with a simple thresholding rule recovers all non-zero regression coefficients with high probability.

Like \citet{RockovaGeorge2018_ssl} and \citet{shen2022posterior}, for the theory, we do not place hyperpriors on the mixture weights or scale parameters; instead, we treat $(\theta,\eta,\lambda_0,\lambda_1)$ as deterministic sequences of $(n,p,q)$. This modification simplifies the theoretical analysis without substantially altering the essence of the result.
Insofar as our proposed latent variable model for mixed-type outcomes is similar to \citet{wang2023twostep}'s, our theoretical analysis is similar to theirs.
However, that work was limited to the regime in which $p$ grew with $n$ but $q$ remained fixed. 
Our analysis is much more general and allows $q$ to grow polynomially with the sample size.
Henceforth, we use $p_{n}$ and $q_{n}$ to denote the number of covariates and outcomes.
We similarly include the subscript $n$ in our notation for the parameters $\bm{B}$ and $\Omega$ and the matrices $\bX$, $\bY,$ and $\bZ$ to indicate that their dimension depends on $n.$
For each $j = 1, \ldots, p_{n},$ we denote the $j$-th row of $\bm{B}_{n}$ as $\bm{b}_{j}.$
Finally, we use $\P_{0}$ and $\E_{0}$ to denote probabilities and expectations taken with respect to the distribution of $\bY_{n}$ implied by our model with $\bm{B} = \bm{B}_{0}$ and $\Omega = \Omega_{0}.$ 

\subsection{Posterior contraction of $B$}
\label{sec:postcontB}
To establish the posterior concentration for $\bm{B},$ we make the following assumptions:
\begin{itemize}
\item[(A1)]{$(\log p_n)^2=o(n)$, $q_n=n^b,$ and $\log p_n\ge C\,q_n\log n$ for constants $b\in(0,1/2)$ and $C>0$.} 
\item[(A2)]{Let $S_{0} = \{j: \text{row $j$ of $\bm{B}_{0}$ is non-zero}\}$ be the indices of $\bm{B}_{0}$'s non-zero rows. Then $s_{0}^{B} = \lvert S_{0} \rvert \geq 1$ with $s_{0}^{B} = o(n/\log p_{n}).$}
\item[(A3)]{There are constants $0<\underline{\tau}<\bar\tau<\infty$ not depending on $n$ such that $\lVert \bX_{n} \rVert_{\infty} := \max_{ij}{\lvert x_{ij} \rvert } \leq \bar\tau$ and } $$ \inf_{n}\min_{\substack{S\subset\{1,\dots,p_n\}\\1\le|S|<n}}\lambda_{\min}\left(n^{-1}\sum_{i=1}^n \bx_{i}^{S}(\bx_{i}^{S})^{\top} \right) > \underline{\tau}. $$
\item[(A4)]{$\lVert \bm{B}_{0,S_0} \rVert_1 = o(n \epsilon_n ^2)$.} 
\item[(A5)]{There is a constant $k_{1}$ not depending on $n$ such that $\Omega_{0}$'s eigenvalues lie in $[k_{1}^{-1}, k_{1}].$}
\item[(A6)]{Fix $\lambda_{1} > 0$ and set $\lambda_0^2\ge C_0\,p_n q_n^2 n^{1+\rho}$ and
$\theta\le C_1\,n^{-(1+\rho)}/(p_n q_n^2)$ for constants $C_0,C_1>0$ and $0<\rho<b$.}
\end{itemize}
Assumption (A1) allows $p_{n}$ to grow sub-exponentially with $n$ and for $q_{n}$ to grow polynomially with $n$ but slower than $p_{n}.$
Since the quantity $s_{0}^{B}$ in Assumption (A2) lower bounds the total number of non-zero entries in $\bm{B}_{0},$ (A2) can be viewed as a restriction on the overall sparsity of $\bm{B}_{0}.$
Assumptions (A3)--(A5) are routinely invoked in the literature  \citep[e.g.][]{Castillo2015, BAIGhoshJMVA} and ensure parameter identifiability in the high-dimensional regime we study. 
Assumption (A4) keeps the total active signal small relative to the information scale $n\varepsilon_n^2$, ensuring non-negligible prior mass near $\bm B_0$. 
Assumption (A6) fixes the slab penalty $\lambda_1$ to avoid overshrinking true signals, while requiring the spike penalty $\lambda_0$ to increase with the problem's dimensions. This ensures that null coefficients are aggressively shrunk toward zero as the number of parameters grows. Simultaneously, following the same idea as in \citet{RockovaGeorge2018_ssl}, the prior inclusion probability $\theta$ must decay with $n$ to prevent spurious slab assignments. Without this calibration, the prior would allocate excessive mass to non-zero coefficients, causing the procedure to over-select.
% Assumption (A6) calibrates spike–slab separation as $q_n$ grows: it keeps the slab penalty  $\lambda_1$ fixed, lets the spike penalty $\lambda_0$ increase with $n$, and shrinks the inclusion probability $\theta=\theta_n$ to curb false inclusions.

\Cref{lemma:lemmawang}, which shows that the \emph{prior} for $\bm{B}$ concentrates, is adapted from Lemma D1 in \citet{wang2023twostep} and we prove it in \switchref{\Cref{sec:prooflemwang}}{Section S1.4 in the Supplementary Materials}.
\begin{lemma}[Prior concentration result for $\bm{B}$]
\label{lemma:lemmawang}
Under Assumptions (A1)--(A6), $\P(\lVert \bm{B}_{n} - \bm{B}_{0}\rVert_{F} < Cn^{-\frac{\rho}{2}}\epsilon_{n}) > \exp\{-Dn\epsilon_{n}^{2}\},$ where $\epsilon_{n}^{2} = n^{-1}(s_0 ^{B} q_n\log p_{n})$ and $C, \rho, D > 0$ are constants not depending on $n.$
\end{lemma}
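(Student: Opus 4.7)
The plan is to exploit the independence of the $\beta_{j,k}$ under the prior, reduce the Frobenius-ball event to a product of coordinate-wise box events, and lower-bound the on-support boxes by the slab piece of the mixture and the off-support boxes by the spike piece. Let $S = \{(j,k) : \beta_{j,k}^{0} \neq 0\}$, whose cardinality $N_{1}$ is at most $s_{0}^{B}q_{n}$ by Assumption (A2), and set $N_{0} = p_{n}q_{n} - N_{1}$. I would choose tolerances $\delta_{1} = c_{1}n^{-\rho/2}\epsilon_{n}/\sqrt{N_{1}}$ and $\delta_{2} = c_{2}n^{-\rho/2}\epsilon_{n}/\sqrt{N_{0}}$ with $c_{1}^{2}+c_{2}^{2} \le C^{2}$, so that simultaneously enforcing $|\beta_{j,k} - \beta_{j,k}^{0}| < \delta_{1}$ on $S$ and $|\beta_{j,k}| < \delta_{2}$ on its complement implies $\lVert \bm{B}_{n} - \bm{B}_{0}\rVert_{F} < Cn^{-\rho/2}\epsilon_{n}$. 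By product independence,
\begin{align*}
\P(\lVert \bm{B}_{n} - \bm{B}_{0}\rVert_{F} < Cn^{-\rho/2}\epsilon_{n})
&\ge \prod_{(j,k)\in S}\P(|\beta_{j,k} - \beta_{j,k}^{0}| < \delta_{1}) \\
&\quad \times \prod_{(j,k)\notin S}\P(|\beta_{j,k}| < \delta_{2}).
\end{align*}

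For each on-support factor, I would retain only the slab and apply Assumption (A4) via the triangle inequality to get $\P(|\beta_{j,k} - \beta_{j,k}^{0}| < \delta_{1}) \ge \theta\lambda_{1}\delta_{1}\exp\{-\lambda_{1}(a_{1}+\delta_{1})\}$. For each off-support factor, I would use the explicit CDF $\P(|\beta_{j,k}| < \delta_{2}) = 1 - \theta e^{-\lambda_{1}\delta_{2}} - (1-\theta)e^{-\lambda_{0}\delta_{2}}$; Assumption (A6) combined with $\lambda_{0} \gg \lambda_{1}$ forces $\theta$ of order $\lambda_{1}^{2}n^{-(1+\rho)}/(p_{n}q_{n}^{2})$, and tuning $\lambda_{0}$ so that $\lambda_{0}\delta_{2} \gg \log(p_{n}q_{n})$ makes the second exponential term vanishing. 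Using $\log(1-x) \ge -2x$ for $x \le 1/2$ then yields $\log \P(|\beta_{j,k}| < \delta_{2}) \ge -2\theta - 2e^{-\lambda_{0}\delta_{2}}$, whose aggregate over $N_{0}$ entries is of much smaller order than $n\epsilon_{n}^{2}$.

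Summing the two log-contributions produces
$$
\log \P(\lVert \bm{B}_{n} - \bm{B}_{0}\rVert_{F} < Cn^{-\rho/2}\epsilon_{n}) \ge N_{1}\bigl[\log(\theta\lambda_{1}\delta_{1}) - \lambda_{1}(a_{1}+\delta_{1})\bigr] - 2N_{0}\bigl[\theta + e^{-\lambda_{0}\delta_{2}}\bigr].
$$
Under Assumption (A1), each of $\log\theta$, $\log\lambda_{1}$, and $\log\delta_{1}$ is $O(\log p_{n})$, so $|\log(\theta\lambda_{1}\delta_{1})| = O(\log p_{n})$, and $\lambda_{1}(a_{1}+\delta_{1}) = o(1)$. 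The remaining step is to verify that $N_{1}|\log(\theta\lambda_{1}\delta_{1})|$ fits inside $Dn\epsilon_{n}^{2} = D\max\{q_{n}, s_{0}^{B}\}\log p_{n}$ for some constant $D>0$ not depending on $n$.

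The hard part is precisely this last bookkeeping step: the naive bound $N_{1} \le s_{0}^{B}q_{n}$ multiplied by $O(\log p_{n})$ produces $s_{0}^{B}q_{n}\log p_{n}$, which in principle exceeds the target $\max\{q_{n}, s_{0}^{B}\}\log p_{n}$ when both indices diverge. Closing this gap requires exploiting the slack supplied by the $n^{-\rho/2}$ factor in the radius (an extra $n^{-\rho}$ in the squared Frobenius budget), together with the tight prior-variance restriction in (A6) which forces the effective number of slab-drawn entries to be $O(\max\{q_{n}, s_{0}^{B}\})$ rather than $s_{0}^{B}q_{n}$. Tracking this refined accounting along the lines of Lemma D1 of \citet{wang2023twostep} — but generalized to diverging $q_{n}$ — is where the real work lies; the remaining architecture is the standard spike-and-slab prior-concentration template.
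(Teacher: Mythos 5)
Your overall architecture matches the paper's: split the Frobenius ball into an on-support block and an off-support block, localize the active entries near their true values, and show the inactive entries stay near zero (the paper does this row-wise via $S_0$ and handles the inactive rows by Markov's inequality applied directly to $\E\lVert \bm{b}_{j}\rVert_2^2$ under (A6), which is equivalent in spirit to your CDF computation). The genuine problem is in your on-support bound. By retaining only the slab component you get the per-entry lower bound $\theta\lambda_1\delta_1 e^{-\lambda_1(a_1+\delta_1)}$, which carries a cost of $-\log\theta$ per active entry. But (A6) forces $\theta \asymp \lambda_1^2 n^{-(1+\rho)}/(p_nq_n^2)$, so $\lvert\log\theta\rvert \asymp \log p_n$, and the aggregate on-support cost is at least $N_1\lvert\log\theta\rvert \asymp s_0^{B}q_n\log p_n$, which exceeds the budget $Dn\epsilon_n^2 = D\max\{q_n,s_0^{B}\}\log p_n$ whenever both $s_0^{B}$ and $q_n$ diverge. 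This is not merely unfinished bookkeeping: the slab-only lower bound is structurally too lossy, and no amount of slack from the $n^{-\rho/2}$ radius factor recovers a term of order $s_0^{B}q_n\log p_n$. The paper avoids this by lower-bounding each coordinate integral using the \emph{full} mixture density $f_{j,k}(t)=(1-\theta)\tfrac{\lambda_0}{2}e^{-\lambda_0\lvert t\rvert}+\theta\tfrac{\lambda_1}{2}e^{-\lambda_1\lvert t\rvert}$ over the interval $[b^{0}_{j,k}-\Delta,\,b^{0}_{j,k}+\Delta]$, obtaining $C_1\Delta$ with $C_1$ a constant free of $\theta$; the per-coordinate cost is then only $\tfrac12\log(q_n/\Delta_n^2)+O(1)$, and the final inequality $s_0^{B}q_n\bigl(\tfrac12\log q_n-\tfrac12\log\Delta_n^2-\log C_1\bigr)\le C_3 n\epsilon_n^2$ is what (A1)--(A2) are invoked to deliver.

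Beyond that, you explicitly leave the decisive counting step open and concede it may fail under your accounting; the heuristic that (A6) makes "the effective number of slab-drawn entries $O(\max\{q_n,s_0^{B}\})$" does not help here, because every one of the $N_1$ active entries must be individually localized near a nonzero target regardless of how many entries the prior expects to draw from the slab. To repair the argument, replace the slab-only on-support bound with the full-mixture bound (so the $\log\theta$ term never appears) and then carry out the paper's chain of inequalities relating $s_0^{B}q_n\log(q_n/\Delta_n^2)$ to $n\epsilon_n^2$ using $\Delta_n^2 = C^2\epsilon_n^2/(2s_0^{B}n^{\rho})$ together with (A1)--(A2). Your off-support treatment is fine and interchangeable with the paper's Markov-inequality argument.
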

This result, which is a foundational part of our analysis, shows that as $n \rightarrow \infty,$ the prior concentrates increasing amounts of probability mass in a Frobenius-norm ball around the true $\bm{B}_{0}$ of vanishing radius.
\Cref{thm:unconditional} shows that the rate $\epsilon_{n} = \left(n^{-1}\left\{s_{0}^{B} q_{n}\log p_{n} \right \}\right)^{\frac{1}{2}}$ in \Cref{lemma:lemmawang} is also the posterior contraction rate.

\begin{theorem}[Posterior unconditional contraction result for $\bm{B}$]
\label{thm:unconditional}
Under Assumptions (A1)--(A6), for any constant $M > 0,$ as $n \rightarrow \infty,$ we have
$
\sup_{\bm{B}_{0}} \E_{0}\left[\P\left( \lVert \bm{B}_{n} - \bm{B}_{0} \rVert_{F} > M\epsilon_{n} \vert \bY_{n} \right)\right] \rightarrow 0,
$
where $\epsilon_{n}^{2} =  n^{-1}(s_0 ^{B} q_n\log p_{n}).$
\end{theorem}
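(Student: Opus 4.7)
The plan is to apply the three-condition framework of Ghosal, Ghosh, and van der Vaart (GGV) for posterior contraction. Because the statement concerns only the marginal law of $\bm{B}$, I would work with the marginal posterior $\Pi(\bm{B} \mid \bY_n)$ and absorb the behavior of $\Omega$ into a high-probability event inherited from \Cref{thm:theorem3}. The three GGV conditions to verify are: (i) a lower bound of the form $\exp(-Cn\epsilon_n^2)$ on the prior mass of a KL-and-variance neighborhood of $\bm{B}_0$; (ii) a sieve $\mathcal{F}_n$ with $\Pi(\mathcal{F}_n^c) \leq \exp(-C'n\epsilon_n^2)$; and (iii) existence of exponentially consistent tests separating $\bm{B}_0$ from its Frobenius-complement inside $\mathcal{F}_n$.

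Condition (i) is essentially delivered by \Cref{lemma:lemmawang} after converting from Frobenius distance to KL divergence; the extra $n^{-\rho/2}$ factor there is precisely the cushion needed for that conversion. To execute it I would decompose each $p(\by_i \mid \bm{B}, \Omega_0)$ into a continuous Gaussian factor and a binary orthant-probability factor. On the continuous block the KL is exactly quadratic in $\bm{B} - \bm{B}_0$, weighted by $\Omega_0$ and $\bx_i \bx_i^\top$ and bounded via (A3) and (A5). On the binary block, the log-orthant-probability is locally Lipschitz in the mean $\bm{B}^\top \bx_i$ with a constant coming from lower bounds on the orthant probabilities themselves (these are bounded away from zero once (A3), (A4), (A5) fix the scale of $\bm{B}_0^\top \bx_i$). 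Aggregating across $i$, every $\bm{B}$ in the shrunk Frobenius ball lies in a KL and second-moment ball of radius $\lesssim n\epsilon_n^2$.

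For condition (ii), I would take a row-sparse sieve
$$
\mathcal{F}_n = \bigl\{\bm{B} : \lvert\{j : \bm{b}_j \neq \bm{0}\}\rvert \leq \bar s_n,\ \lVert\bm{B}\rVert_\infty \leq K_n\bigr\}
$$
with $\bar s_n \asymp \max\{q_n, s_0^B\}$ and $K_n$ a modest polynomial in $n$. The prior tail splits into a binomial-tail bound for the event that too many rows are activated (using the per-entry slab probability from (A6)) and a Laplace-slab tail beyond $K_n$; both can be made as small as required. For condition (iii), the Le Cam--Birg\'e reduction hands off test construction to an $\epsilon_n$-Hellinger entropy bound on $\mathcal{F}_n$. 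Using the same continuous/binary decomposition as in step (i), I would establish the two-sided comparison $d_H^2(p_{\bm{B}_1}, p_{\bm{B}_2}) \asymp n \lVert \bm{B}_1 - \bm{B}_2 \rVert_F^2$ on the sieve: the upper bound powers the covering calculation while the lower bound converts Hellinger contraction into the Frobenius contraction claimed in the theorem. A standard covering of row-sparse, entry-bounded matrices then yields $\log N(\epsilon_n, \mathcal{F}_n, d_H) \lesssim n \epsilon_n^2$, with the $q_n$-factors absorbed by $\log p_n \gtrsim q_n$ under (A1).

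The principal obstacle I anticipate lies in the Hellinger/KL comparisons for the binary factors: unlike the Gaussian block, a log-orthant-probability is not globally quadratic in the mean, so the required Lipschitz estimates must hold uniformly over the sieve and over design rows, leaning on (A5) to keep conditional covariances non-degenerate and on (A3)--(A4) to prevent linear predictors from drifting into the extreme tails of the normal CDF. A secondary but delicate issue is verifying that the $q_n$-dependent factors in the entropy bound aggregate into $\epsilon_n^2 = n^{-1}\max\{q_n, s_0^B\} \log p_n$ rather than a strictly larger rate; this is the role of (A1)'s joint conditions $q_n = n^b$ with $b < 1/2$ and $\log p_n \gtrsim q_n$. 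With these pieces in place, the GGV master theorem assembles the three conditions into the stated posterior contraction result.
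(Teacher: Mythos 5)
Your plan is a genuinely different route from the paper's: you propose to verify the Ghosal--Ghosh--van der Vaart conditions directly for the \emph{observed-data} (marginalized) likelihood $p(\by_i \mid \bm{B}, \Omega)$, whereas the paper never touches the marginal likelihood. Instead it proves contraction \emph{conditionally on the latent Gaussian data} $\bZ_{n}$ (\Cref{lem:postconcBgivenZ}, via a sieve of row-supports, exact $\chi^{2}$ tail bounds for the latent MLEs $\widehat{\bm{B}}^{S}$, and a union bound over $\lvert \mathcal{M}\rvert$), and then de-conditions by showing that the conditional posterior of $\bZ_{n}$ concentrates on a good set $\mathcal{Z}_{n}$ (\Cref{lem:goodset}), so that the unconditional posterior probability is bounded by $\delta_{n}' + \delta_{n}$.

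The gap in your proposal sits exactly where you flagged the "principal obstacle," and it is not a removable technicality: the two-sided comparison $d_H^2(p_{\bm{B}_1},p_{\bm{B}_2}) \asymp n\lVert \bm{B}_1-\bm{B}_2\rVert_F^2$ on the sieve is false in the direction you need for the testing step. For a binary coordinate, the observed-data likelihood depends on $\bm{B}$ only through an orthant probability of the latent Gaussian, and the sensitivity of $\Phi(\cdot)$ to its argument decays like a Gaussian tail. On your sieve $\{\lVert\bm{B}\rVert_\infty \le K_n,\ \text{at most } \bar{s}_n \text{ active rows}\}$ with $K_n$ polynomial in $n$, linear predictors $\bm{B}^\top\bx_i$ of order even $\sqrt{\log n}$ already make the Fisher information of the binary block polynomially small, and larger predictors make it exponentially small; hence two matrices that are far apart in Frobenius norm (differing only in directions that push binary linear predictors into the tails) can be nearly indistinguishable in Hellinger distance. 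The lower bound $d_H^2 \gtrsim n\lVert\bm{B}_1-\bm{B}_2\rVert_F^2$ therefore cannot hold uniformly on $\mathcal{F}_n$, and without it the Le Cam--Birg\'e tests separate only Hellinger balls, not the Frobenius balls appearing in the theorem. Assumptions (A3)--(A5) bound $\bm{B}_0^\top\bx_i$, not $\bm{B}^\top\bx_i$ over the sieve, so they do not rescue the estimate. This degeneracy is precisely why the paper routes the argument through $\bZ_{n}$: conditionally on the latent data the model is exactly Gaussian, the Frobenius metric is equivalent to the natural testing metric (via the eigenvalue bounds in (A3) and (A5)), and the loss of information from thresholding is quarantined inside the separate good-set lemma. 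A secondary issue is your appeal to \Cref{thm:theorem3} to control $\Omega$: that theorem's proof itself conditions on $\bZ_{n}$, so importing it as a high-probability event for the marginal analysis of $\bm{B}$ would need a careful ordering of the arguments to avoid circularity.
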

We prove \Cref{thm:unconditional} in \switchref{\Cref{sec:proofthmunconditional}}{Section S1.6 of the Supplementary Materials} in two steps. 
First, we show that the posterior concentrates conditionally given the data $\bY_{n}$ \emph{and the latent variable $\bZ_{n}$} (\switchref{\Cref{lem:postconcBgivenZ}}{Lemma S1.1})\footnote{Though weaker than traditional posterior contraction, this result is similar in spirit to Theorem 3 of \citet{ZitoMiller2024}, which also establishes posterior contraction for non-negative matrix factorization conditionally on an auxiliary latent variable.}. 
Then, using a standard ``good-set'' argument \citep[Ch.\ 1][]{Ash2000}, we show that \switchref{\Cref{lem:postconcBgivenZ}}{Lemma S1.1} holds uniformly for all $\bZ_n$ in a set $\mathcal Z_n$ that excludes extreme or pathological latent draws, and that the posterior places asymptotically all its mass on $\mathcal Z_n$ (\switchref{\Cref{lem:goodset}}{Lemma S1.2}).
Integrating out $\bZ_n$ then yields the unconditional result in \Cref{thm:unconditional}; see \switchref{\Cref{sec:proofpostconcBgivenZ,sec:proofthmunconditional}}{Sections S1.5 and 1.6 in the Supplementary Materials for details}.
%Conceptually, this mirrors \citet[Prop.~C.7]{wang2023twostep}: a posterior event that is (non-)vanishing under augmentation remains (non-)vanishing after marginalizing the augmentation.
 
Before proceeding, we pause to compare our $\epsilon^{2}_{n}$ to other rates reported in the literature for similar problems. When $q_n$ is fixed, $\epsilon_n$ coincides with the familiar single–response rate up to a constant factor $q_n$, namely
$\epsilon_n^2 \asymp {s_0^{B}\log p_n}/{n}$, which matches with Theorem~3.1 of \citet{wang2023twostep}.
More generally, our rate is intuitive: it is $q_n$ times the posterior contraction rate for a single column (the case $q_n=1$), reflecting that under Frobenius loss one must estimate $s_0^{B}$ active rows across all $q_n$ responses.

\citet{shen2022posterior} considered a special case with only continuous responses and obtained the squared rate
\(n^{-1}\max\{q_n,s_0^B\}\log\big(\max\{p_n,q_n\}\big)\).
Under our growth regime (A1), $p_n$ may grow much faster than $q_n$, so $\log(\max\{p_n,q_n\})=\log p_n$; in that setting their rate simplifies to
\(n^{-1}\max\{q_n,s_0^B\}\log p_n\).
Our product rate $n^{-1}s_0^{B}q_n\log p_n$ reduces to the same order when either $q_n$ or $s_0^{B}$ is fixed, but is larger when both $q_n$ and $s_0^{B}$ diverge. This difference is natural in our setup: we impose element–wise spike–and–slab priors (rather than row–grouped shrinkage), target Frobenius loss, and allow mixed outcomes. In this configuration, the effective complexity is the number of active coefficients, $s_0^{B}q_n$, and the Frobenius risk aggregates error across all $q_n$ responses, leading to the multiplicative factor $q_n$ in the target rate.
% \footnote{Heuristically, if a single response contracts at $\sqrt{(s_0^{B}\log p_n)/n}$ in $\ell_2$ over the $s_0^{B}$ active rows, then stacking $q_n$ such problems and measuring error in Frobenius norm inflates the squared loss by a factor $q_n$, giving $\epsilon_n^2=(s_0^{B}q_n\log p_n)/n$.}

% When $q$ is fixed and does not grow with $n$, $\max\{q_{n}, s_{0}^{B}\} = s_{0}^{B}$ and our squared rate simplifies to $n^{-1}s_{0}^{B}\log{p_{n}},$ which exactly matches the rate reported in Theorem 3.1 of \citet{wang2023twostep}.
% \citet{shen2022posterior} studied a special case of our problem that featured only continuous outcomes and no binary outcomes.
% In that setting, they obtained a squared rate of $n^{-1}\max\{q_{n}, s_{0}^{B}\}\log(\max\{p_{n}, q_{n}\}).$
% Because we allow $p_{n}$ to grow much faster than $q_{n}$ (Assumption (A1)), we have $\log(\max\{p_{n}, q_{n}\}) = \log p_{n},$ which shows that our rate coincides exactly with theirs.

\subsection{Posterior contraction of $\Omega$}
\label{sec:omega_concentration}
Our analysis of the \texttt{mixed-mSSL} posterior for $\Omega$ requires additional assumptions and modifications to the prior.
Specifically, in addition to assuming that the true precision matrix has eigenvalues bounded away from zero and infinity, we additionally assume that it has at most $s_{0}^{\Omega}$ non-zero off-diagonal entries in its upper triangle.
That is, we assume $\Omega_{0}$ lies in the space
$$
\mathcal{U}(k_{1}, s_{0}^{\Omega}) = \left\{\Omega \in \mathcal{M}^{+}_{q_{n}}(k_{1}): \sum_{k<k'}{\ind{\omega_{k,k'} \neq 0}} \leq s_{0}^{\Omega}\right\},
$$
for some $s_{0}^{\Omega} \geq 1$ where $k_{1}$ is as in Assumption (A5) and for any $\lambda > 0$ the space $\mathcal{M}^{+}_{q_{n}}(\lambda)$ is defined as
$$
\mathcal{M}^{+}_{q_{n}}(\lambda) = \{\Omega \succ 0: \lambda^{-1} \leq \lambda_{\min}(\Omega) \leq \lambda_{\max}(\Omega) < \lambda\}.
$$
We modify the prior over $\Omega$ so that the initial element-wise prior is truncated to the space $\mathcal{M}^{+}_{q_{n}}(k_{1})$ instead of the whole positive definite cone.
We additionally assume the following.
\begin{itemize}
\item[(B1)]{$q_{n} = n^{b}$ for some $b \in (0, 1/2)$ and $n^{-1}(q_{n} + s_{0}^{\Omega})\log q_{n} = o(1).$}
% \item[(B2)]{$\eta < q_{n}^{-a}$ for some fixed $a > 1$ not depending on $n.$}
\item[(B2)]{$\eta \asymp  {\log q_n}/{n q_n}$.}
\item[(B3)]{$\xi_0^2 \gtrsim (\log q_n)^{-1}nq_n$.}
\end{itemize}
We believe that Assumption (A1) and (B1)'s assumption that $q_{n}$ grows slower than $\sqrt{n}$ is critical.
Indeed, arguments in \citet{sarkar2024posteriorconsistencymultiresponseregression} suggest that relaxing the assumption to allow $q_{n}/n \rightarrow \gamma < \infty$ would lead to posterior inconsistency. 
Assumption (B2) ensures the \texttt{mixed-mSSL} prior puts sufficient mass around the true zero elements in the precision matrix. Assumption (B3) guarantees that the vast majority of coefficients corresponding to structural zeros (i.e., off-diagonal entries that are zero in the true $\Omega_0$) are aggressively shrunk toward zero. It is worthwhile to note that enforcing $\sigma^{2}_{k} = 1$ for $k = q_{c}+1, \ldots, q$ does not invalidate Assumption (A5) since diagonal inflation (or deflation) by a bounded factor only shifts
the spectrum within a bounded interval.% the same bound continues to hold for the constrained $\Omega$.
 
%Our main theoretical result for $\Omega$ is \Cref{thm:theorem3}.
\begin{theorem}[Posterior unconditional contraction result for $\Omega$]
\label{thm:theorem3}
Under Assumptions (A5), (B1), (B2), and (B3), for a sufficiently large constant $M > 0,$ $\sup_{\Omega_{0}}\E_{0}\P(\lVert \Omega_{n} - \Omega_{0} \rVert_{F} > M\tilde{\epsilon}_{n} \vert \bY_{n}) \rightarrow 0,$ where $\tilde{\epsilon}_{n}^{2} = n^{-1}(q_{n} + s_{0}^{\Omega})\log q_{n}$ and $s_{0}^{\Omega}$ denote the number of non-zero off-diagonal elements in the upper triangle of $\Omega.$
\end{theorem}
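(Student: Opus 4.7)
The plan is to mirror the two-stage strategy used to prove \Cref{thm:unconditional} for $\bm{B}$: first establish posterior contraction for $\Omega_n$ conditional on the latent matrix $\bZ_n$ (and $\bY_n$), and then strip off the conditioning via the "good-set" argument of \Cref{lem:goodset}. When $\bZ_n$ is treated as observed, the latent model in \Cref{eq:latent_model} reduces to an ordinary multivariate Gaussian with mean $\bm{B}^{\top}\bx_i$ and precision $\Omega$. Combining this observation with \Cref{thm:unconditional}, I can restrict attention to the high-probability event $\{\lVert \bm{B}_n-\bm{B}_0\rVert_F \le M\epsilon_n\}$, on which the effective residuals $\bz_i-\bm{B}^{\top}\bx_i$ are, up to an $O(\epsilon_n)$ perturbation, i.i.d.\ centered Gaussians with precision $\Omega$. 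The problem thus reduces to posterior contraction for a sparse precision matrix under a continuous spike-and-slab prior truncated to $\mathcal{M}^{+}_{q_n}(k_1)$, in the spirit of \citet{BanerjeeGhosal2014}.

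To apply the standard \citet{GhosalGhoshvanderVaart2000} machinery at rate $\tilde{\epsilon}_n^2 = n^{-1}(q_n + s_0^{\Omega})\log q_n$, I would verify three conditions. \emph{Prior concentration:} show $\Pi(\lVert\Omega-\Omega_0\rVert_F < \tilde{\epsilon}_n) \ge \exp(-c n\tilde{\epsilon}_n^2)$. Since $\Omega_0 \in \mathcal{U}(k_1,s_0^{\Omega})$, I route the $s_0^{\Omega}$ true non-zero off-diagonals through the slab (each contributing a factor of order $\log(\eta\xi_1)\asymp -\log q_n$ by Assumption (B2)), route the $O(q_n^2)$ true-zero off-diagonals through the spike (where Assumption (B2)'s bound $\eta < q_n^{-a}$ gives sufficient mass), and handle the diagonals with the absolutely continuous $\textrm{Exponential}(\xi_1)$ prior. \emph{Sieve construction:} define $\mathcal{F}_n \subset \mathcal{M}^{+}_{q_n}(k_1)$ as precision matrices whose upper triangles have at most $C(q_n+s_0^{\Omega})$ off-diagonal entries exceeding a data-driven threshold, and bound the covering entropy by $\log N(\tilde{\epsilon}_n,\mathcal{F}_n,\lVert\cdot\rVert_F)\lesssim n\tilde{\epsilon}_n^2$; Assumption (B1) ($q_n = n^b$ with $b<1/2$ and $(q_n+s_0^{\Omega})\log q_n = o(n)$) ensures these rates balance. \emph{Tests:} build exponentially consistent tests against $\{\lVert\Omega-\Omega_0\rVert_F > M\tilde{\epsilon}_n\}$ using Hellinger-affinity bounds for Gaussian densities restricted to the spectral box $\mathcal{M}^{+}_{q_n}(k_1)$ guaranteed by Assumption (A5).

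For the unconditional step, I decompose
$$
\P(\lVert\Omega_n-\Omega_0\rVert_F > M\tilde{\epsilon}_n \mid \bY_n) = \int \P(\lVert\Omega_n-\Omega_0\rVert_F > M\tilde{\epsilon}_n \mid \bY_n,\bZ_n)\, p(\bZ_n \mid \bY_n)\,d\bZ_n,
$$
and split the integral over $\mathcal{Z}_n$ and its complement. The complement contributes at most $\delta_n \to 0$ by \Cref{lem:goodset}, while on $\mathcal{Z}_n$ the conditional contraction result applies uniformly because the binary coordinates of $\bZ_n$ are controlled to within $C_0\sqrt{\log n}$ of their means, which in turn controls the sample second moments $n^{-1}\sum_i(\bz_i-\bm{B}_0^{\top}\bx_i)(\bz_i-\bm{B}_0^{\top}\bx_i)^{\top}$ uniformly around $\Omega_0^{-1}$ at a rate compatible with $\tilde{\epsilon}_n$.

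The hardest step will be the prior-concentration calculation under the positive-definiteness truncation. I need to show that the truncation normalizer in $p(\Omega\mid\eta)\propto \mathbbm{1}(\Omega\in\mathcal{M}^{+}_{q_n}(k_1))\times\cdots$ does not deflate the prior mass on a Frobenius ball around $\Omega_0$ by more than $\exp(cn\tilde{\epsilon}_n^2)$; this requires showing that a neighborhood of $\Omega_0$ of radius $\tilde{\epsilon}_n$ sits inside $\mathcal{M}^{+}_{q_n}(k_1)$ (which follows from Weyl's inequality plus $\tilde{\epsilon}_n = o(1)$ and Assumption (A5)), and that the product of $O(q_n^2)$ spike densities integrated over a small box around the zero entries yields the correct exponential rate --- here the fact that $q_n = n^b$ with $b<1/2$ is essential, since it forces $q_n^2 \log q_n \ll n$ only via the sparsity-aware exponent $(q_n+s_0^{\Omega})\log q_n$ appearing in $\tilde{\epsilon}_n^2$. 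A secondary technical issue is ensuring that the perturbation from replacing $\bm{B}$ by $\bm{B}_0$ inside the likelihood incurs only a lower-order error, which follows because $n^{-1}\lVert\bX(\bm{B}-\bm{B}_0)\rVert_F^2 \lesssim \epsilon_n^2 \ll \tilde{\epsilon}_n$ under Assumptions (A3) and (A5).
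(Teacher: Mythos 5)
Your proposal follows essentially the same route as the paper: a conditional contraction lemma for $\Omega$ given $\bZ_n$ established via the standard three-part Ghosal--Ghosh--van der Vaart verification (prior concentration on KL/Frobenius neighborhoods using the spectral truncation and Assumption (B2), a sieve of matrices with at most $r_n \asymp n\tilde{\epsilon}_n^2/\log n$ large off-diagonal entries with entropy $\lesssim n\tilde{\epsilon}_n^2$, and an exponentially small complement-sieve mass), followed by the good-set argument of \Cref{lem:goodset} to remove the conditioning. The only substantive difference is that the paper simply fixes $\bm{B}$ at the truth in the conditional lemma (treating the residuals as exactly centered Gaussians), whereas you propose to control the $\bm{B}$-perturbation explicitly via \Cref{thm:unconditional}; your version is slightly more careful on that point but does not change the structure of the argument.
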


We prove \Cref{thm:theorem3} in \switchref{\Cref{sec:proofthm3}}{Section S1.7 in the Supplementary Materials} using the same strategy with which we proved \Cref{thm:unconditional}: first, we establish conditional contraction given $\bZ_{n}$ (\switchref{\Cref{lem:postconcomega}}{Lemma S1.3}) and then use another ``good-set'' argument to establish unconditional contraction.
Our rate $\tilde{\epsilon}_{n}$ matches the optimal rates for estimating the sparse precision matrices established in \citet[Thm.~3.1]{Banerjee2015} and \citep[Thm.~2]{ZHANG2022154}.

% The rates $\epsilon_n$ and $\tilde\epsilon_n$ are compatible across regimes:
% (i) if $q_n\gg s_0^{B},s_0^{\Omega}$, both are essentially $\sqrt{(q_n\log n)/n}$, differing only by the log factor ($\log p_n$ vs.\ $\log q_n$);
% (ii) if sparsity dominates ($s_0^{B}\gg q_n$ but $s_0^{\Omega}\ll q_n$), then $\epsilon_n$ is driven by $s_0^{B}$ while $\tilde\epsilon_n$ is driven by $q_n$;
% (iii) in the mixed regime, $\epsilon_n$ is governed by $\max\{q_n,s_0^{B}\}$ and $\tilde\epsilon_n$ by $q_n+s_0^{\Omega}$, reflecting the minimal effective dimension of each block.
The rates $\epsilon_n$ and $\tilde{\epsilon}_n$ are compatible, reflecting the effective dimension of each parameter block: the rate $\epsilon_n$ for $\bm{B}$ is driven by the total number of non-zero coefficients ($\approx s_0^B q_n$), while the rate $\tilde\epsilon_n$ for $\Omega$ is driven by the size and sparsity of the precision matrix ($q_n+s_0^\Omega$). In ultra-sparse settings, both rates scale similarly with the number of outcomes $q_n$.

\subsection{Variable selection consistency}
\label{sec:surescreening}
Although \Cref{thm:unconditional} implies that we can consistently estimate $\bm{B}_{0}$ with \texttt{mixed-mSSL}, it does not speak to our ability to recover $\bm{B}_{0}$'s support (i.e., the set $S_{0} = \{(j,k): (\bm{B}_{0})_{j,k} \neq 0\}$).
Because we use absolutely continuous priors, the \texttt{mixed-mSSL} will generally place all of its posterior probability on dense matrices $\bm{B}_{n},$ all of whose entries are non-zero.
Similar to \citet{song2022nearly} and \citet{WEI2020215} and inspired by \citet{RockovaGeorge2018_ssl}'s notion of generalized dimension, we can threshold the entries of $\bm{B}_{n}$ to estimate $S_{0}.$
Specifically, given any $\bm{B}_{n}$ and $\Omega_{n},$ let $\widetilde{\bm{B}}_{n}$ be the $p_{n} \times q_{n}$ matrix obtained by thresholding $\bm{B}_{n}$ element-wise as $\widetilde{\beta}_{j,k} = \beta_{j,k} \ind{\lvert \beta_{j,k} \rvert > a_{n}\omega_{k,k}},$ where $a^{2}_{n} = (c^{2}\log p_{n})/(np_{n}^{2})$ for some constant $c > 0.$
Let $\tilde{S}_{n} = \{(j,k): \tilde{\beta}_{j,k} \neq 0\}$ be the support of $\widetilde{\bm{B}}_{n}.$

In order to study the posterior distribution of $\tilde{S}_{n}$, we make the following assumption: 
\begin{itemize}
\item[(C1)]{There are constants $c_{3} > 0$ and $0 < \zeta < 1/4,$ such that $\lvert (\bm{B}_{0})_{j,k} \rvert > c_{3}n^{-\zeta}$ for all $(j,k) \in S_{0}.$}
\end{itemize}

\begin{theorem}[Sure screening via hard thresholding]
\label{thm:surescreening}
Under Assumptions (A1), (A2), (B1), and (C1), as $n \rightarrow \infty,$ $\sup_{\bm{B}_{0}}\E_{0}\P(S_{0} \subset \tilde{S}_{n} \vert \bY_{n}) \rightarrow 1.$
\end{theorem}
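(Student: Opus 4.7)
The plan is to exploit the posterior contraction result of \Cref{thm:unconditional} together with the beta–min condition (C1) to argue that, with overwhelming posterior probability, every genuinely non-zero entry of $\bm{B}_{0}$ survives the element-wise hard-threshold at level $a_{n}\omega_{k,k}$. Concretely, because $\|\bm{B}_{n}-\bm{B}_{0}\|_{\infty}\le\|\bm{B}_{n}-\bm{B}_{0}\|_{F}$, \Cref{thm:unconditional} controls all entries of $\bm{B}_{n}-\bm{B}_{0}$ simultaneously at rate $\epsilon_{n}=\sqrt{n^{-1}\max\{q_{n},s_{0}^{B}\}\log p_{n}}$, and the truncated prior on $\Omega_{n}$ combined with (A5) (or \Cref{thm:theorem3}) keeps $\omega_{k,k}\le 2k_{1}$ with high probability. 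The beta-min lower bound $c_{3}n^{-\zeta}$ then dominates both the estimation error $\epsilon_{n}$ and the threshold $a_{n}\omega_{k,k}$, which forces $|\beta_{j,k}|>a_{n}\omega_{k,k}$ for every $(j,k)\in S_{0}$.

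\textbf{Step-by-step outline.} First, I would define the high-probability event
\[
\mathcal{A}_{n}=\bigl\{\|\bm{B}_{n}-\bm{B}_{0}\|_{F}\le M\epsilon_{n}\bigr\}\cap\bigl\{\max_{k}\omega_{k,k}\le 2k_{1}\bigr\},
\]
and invoke \Cref{thm:unconditional} and the eigenvalue truncation built into the prior on $\Omega_{n}$ (justified via \Cref{thm:theorem3} together with Assumption (A5)) to show $\E_{0}\P(\mathcal{A}_{n}^{c}\mid\bY_{n})\to 0$ uniformly in $\bm{B}_{0}$. Second, on $\mathcal{A}_{n}$ and for any $(j,k)\in S_{0}$, I apply the triangle inequality to get
\[
|\beta_{j,k}|\ge|(\bm{B}_{0})_{j,k}|-|\beta_{j,k}-(\bm{B}_{0})_{j,k}|\ge c_{3}n^{-\zeta}-M\epsilon_{n}.
\]
Third, I verify asymptotically that $c_{3}n^{-\zeta}-M\epsilon_{n}>2k_{1}a_{n}$ for all $n$ large enough: by (A1)–(A2) and (B1), $\epsilon_{n}^{2}\lesssim n^{b-1}\log p_{n}$ with $b<1/2$, so $\epsilon_{n}=o(n^{-\zeta})$ whenever $\zeta<1/4$; similarly $a_{n}=cp_{n}^{-1}\sqrt{\log p_{n}/n}=o(n^{-\zeta})$ trivially because of the extra $p_{n}^{-1}$ factor. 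Hence on $\mathcal{A}_{n}$ every $(j,k)\in S_{0}$ satisfies $|\beta_{j,k}|>a_{n}\omega_{k,k}$, meaning $(j,k)\in\tilde{S}_{n}$, so $\{S_{0}\subset\tilde{S}_{n}\}\supset\mathcal{A}_{n}$. Finally, taking conditional posterior probability and then $\E_{0}$ yields $\sup_{\bm{B}_{0}}\E_{0}\P(S_{0}\not\subset\tilde{S}_{n}\mid\bY_{n})\le\sup_{\bm{B}_{0}}\E_{0}\P(\mathcal{A}_{n}^{c}\mid\bY_{n})\to 0$.

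\textbf{Main obstacle.} The genuine work is in the uniformity: the Frobenius bound only controls a norm, but we need it entry-wise and simultaneously over $|S_{0}|$ entries. Fortunately $\|\cdot\|_{\max}\le\|\cdot\|_{F}$ disposes of that issue with no union-bound loss, which is why the rate $\epsilon_{n}$ suffices even as $s_{0}^{B}$ grows. The subtler point is ensuring that the bound on $\omega_{k,k}$ is available \emph{simultaneously} with the bound on $\bm{B}_{n}-\bm{B}_{0}$ on the same posterior draw. Because the prior on $\Omega$ is truncated to $\mathcal{M}^{+}_{q_{n}}(k_{1})$, this constraint holds deterministically for the diagonal entries (up to a mild modification absorbed into the constant), and this is the only place where the prior modifications of \Cref{sec:omega_concentration} enter. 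The remaining verification that $\epsilon_{n}+a_{n}=o(n^{-\zeta})$ is a routine exercise in assembling the growth rates of (A1), (A2), and (B1) against the exponent $\zeta<1/4$, and is what pins down the precise admissible range of $\zeta$.
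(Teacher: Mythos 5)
Your proposal is correct and follows essentially the same route as the paper's proof: posterior contraction of $\bm{B}_{n}$ (entrywise via $\lVert\cdot\rVert_{\infty}\le\lVert\cdot\rVert_{F}$), an upper bound on $\omega_{k,k}$, the beta-min condition, and the rate comparison $n^{-\zeta}\gg\epsilon_{n}+a_{n}$. The only cosmetic difference is that you obtain the $\omega_{k,k}$ bound deterministically from the prior truncation to $\mathcal{M}^{+}_{q_{n}}(k_{1})$, whereas the paper invokes \Cref{thm:theorem3} together with (A5); both yield the same high-probability event and the same conclusion.
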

% The required signal strength of $n^{-\zeta}$ for $\zeta < 1/4$ is tailored to our model's specific posterior contraction rate, $\epsilon_n$. The fundamental requirement for consistent signal recovery is that the minimum signal must be asymptotically larger than the estimation error \citep{Zhang2008,ZhaoYu2006}. Our derived contraction rate is $\epsilon_n \asymp n^{(b-1)/2}\sqrt{\log p_n}$, where the exponent $(b-1)/2$ is strictly less than $-1/4$. Our beta-min condition correctly ensures that the signal dominates the error $(n^{-\zeta} \gg \epsilon_n)$, allowing for consistent recovery. 
% The restriction $\xi<1/4$ in Assumption C1 ensures a beta–min condition: the minimum signal $n^{-\xi}$ is asymptotically larger than the estimator error, a necessary requirement for consistent recovery \citep{Zhang2008,ZhaoYu2006}. In our setting, the posterior contraction rate is $\epsilon_n \asymp n^{(b-1)/2}\sqrt{\log p_n}$ with $(b-1)/2<-1/4$, so $n^{-\xi}\gg \epsilon_n$, yielding the sure screening guarantee in \Cref{thm:surescreening} (proof detailed in Supplementary Section S1.8).
Our posterior contraction rate is 
\(
\epsilon_n \asymp n^{(b-1)/2}\sqrt{\log p_n},
\)
and since \(b<1/2\), we have \((b-1)/2<-1/4\). 
Hence, any signal of order
\(n^{-\zeta}\) with \(\zeta<\tfrac14\) dominates the estimation error:
\(n^{-\zeta}\gg \epsilon_n\).
This ``beta–min" dominance is the standard requirement for consistent recovery
\citep{Zhang2008,ZhaoYu2006}, which motivates Assumption (C1). 
We prove \Cref{thm:surescreening} in \switchref{}{Supplementary Section S1.8}.

We emphasize that \Cref{thm:surescreening} does not guarantee perfect model selection, as it allows for the possibility of false positives.
However, the sure screening property provides a strong theoretical guarantee about the number of false negatives that \texttt{mixed-mSSL} can make, provided that the entries of $\bm{B}_{0}$ are large enough.

\section{Synthetic Experiments}
\label{sec:experiments}
Using several synthetic datasets of varying dimensions, we compared \texttt{mixed-mSSL} to three other methods in terms of estimating the supports of matrices $\bm{B}$ and $\Omega.$ 
We compared \texttt{mixed-mSSL} to \citet{wang2023twostep}'s \texttt{mt-MBSP}, and two baseline methods, \texttt{sepSSL} and \texttt{sepglm}. 
Like \texttt{mixed-mSSL}, \texttt{mt-MBSP} is based on a latent, sparse multivariate normal regression model.
However, while \texttt{mixed-mSSL} models binary outcomes with a probit link, \texttt{mt-MBSP} is based on a logistic link (see Equation (2.2) of \citet{wang2023twostep}). 
\texttt{mt-MBSP} also specifies element-wise horseshoe priors \citep{CarvalhoHS} on the matrix of regression coefficients and an inverse Wishart prior on the residual latent precision matrix. 

The first baseline method \texttt{sepSSL} fits a \texttt{SSLASSO} model to each outcome separately, using the standard continuous-outcome version of \citet{RockovaGeorge2018_ssl}'s spike-and-slab LASSO for continuous outcomes and a probit version of spike-and-slab LASSO for binary outcomes. 
The second baseline method, \texttt{sepglm}, separately fits penalized generalized linear models (GLMs) to each outcome, using a Gaussian model with LASSO penalty for continuous outcomes, and a binomial probit model with LASSO penalty for binary outcomes, both implemented using \texttt{glmnet} \citep{glmnet2010}. 
Because the software implementation works only when $n > p,$ we did not include \citet{ekvall}'s \texttt{mmrr} procedure in our simulation study, which focuses on the more challenging $p > n$ regime. 

We considered three problem dimensions $(n,p,q) \in\{(200,500,4),(500,1000,4),(800,1000,6)\}$ with $q_b=q_c=q/2$. 
Following \citet{shencgssl}, we constructed six $\Omega$ structures per $(n,p,q)$: (i) AR(1) for $\Omega^{-1}$ (tri-diagonal $\Omega$); (ii) AR(2) for $\Omega^{-1}$ ($\omega_{kk'}=0$ if $|k-k'|>2$); (iii) block-diagonal with two dense $q/2\times q/2$ blocks; (iv) star graph ($\omega_{k,k'} \neq 0$ only when $k=1$ or $k'=1$); (v) an $\Omega$ whose support corresponds to a small-world network; (vi) an $\Omega$ whose support corresponds to a tree network.
We defer full details about constructing such $\Omega$'s to \switchref{\Cref{sec:covstr}}{Section S3.1 in the Supplementary Materials}. 
We drew each row of $\bm{X}$ independently from a $\mvnormaldist{p}{\mathbf{0}}{\Gamma}$ distribution with $\Gamma_{jj'}=0.5^{|j-j'|}$. 
For each $(n,p,q)$, we generated two different $\bm{B}$'s, each with 70\% sparsity: one with non-zero entries  drawn uniformly from $[-5,5]$ and the other with non-zero entries drawn uniformly from $[-5,-2]\cup[2,5].$
In total, we had 36 combinations of problem dimension $(n,p,q),$ $\bm{B},$ and $\Omega.$
For each combination, we generated 100 synthetic datasets. 

\texttt{mixed-mSSL}, \texttt{sepSSL}, and \texttt{sepglm} return estimates of the matrix $\bm{B}$ from our latent variable model.
Because \texttt{mt-MBSP} models binary outcomes with a logistic link, the scale and interpretation of its corresponding matrix of regression coefficients differ from those for our $\bm{B}.$
Due to these differences, we do not present parameter estimation errors.
Instead, we only report support recovery metrics (e.g., sensitivity, specificity, precision, and accuracy) for recovering $\bm{B}$'s and $\Omega$'s supports, which are scale-invariant.
We also evaluated predictive performance on a held-out test set of size $n/2$: regression-function error (\texttt{RFE}), predictive root mean square error (\texttt{RMSE}; averaged over continuous outcomes), and the average area under the receiver operator characteristic curve (\texttt{AUC}; binary outcomes). 
\texttt{RFE} is the average $\ell_2$ distance between the true conditional means $\E[\bm Y \vert \bm X]$ (computed at test covariates under the true parameters) and each method’s predicted means. 
Predictive \texttt{RMSE} is obtained by simulating test outcomes from the true model, generating predictions from each fitted model, and comparing them to the realized test responses. 
\texttt{AUC} compares predicted probabilities with binary outcomes simulated from the true data-generating process.

We ran \texttt{mixed-mSSL} using the suggested hyperparameters from \Cref{sec:mcecm} and took the MAP estimate corresponding to the largest spike penalties as our point estimates of $\bm{B}$ and $\Omega.$
For \texttt{sepSSL}, we assembled $\hat{\bm B}$ column-wise from the outcome-wise MAPs at the largest spikes using the \citet{RockovaGeorge2018_ssl} defaults.
We built the \texttt{sepglm} point estimate of $\bm{B}$ column-by-column by running cross-validated LASSO models using \textbf{glmnet} package defaults. 
For \texttt{mt-MBSP}, we created a point estimate $\bm{B}$'s support by checking whether marginal 95\% posterior credible intervals of each \texttt{mt-MBSP} regression coefficient excluded zero. 
We ran \texttt{mt-MBSP} for 1100 iterations, which are the recommended package default settings. 

\Cref{table:p200q4unif} summarizes results for $(n,p,q)=(200,500,4)$ with non-zero elements of $\bm{B}$ being drawn uniformly from $[-5,5]$; best values are bolded. 
Across covariance structures, \texttt{mixed-mSSL} delivers the strongest predictive performance (lowest \texttt{RFE} and \texttt{RMSE}, highest \texttt{AUC}). 
\texttt{sepglm} attains the highest \texttt{SEN} but often lower \texttt{SPEC} and \texttt{ACC} than \texttt{mixed-mSSL} and \texttt{mt-MBSP}. 
\texttt{mixed-mSSL} and especially \texttt{mt-MBSP} show high \texttt{SPEC} and \texttt{PREC}, reflecting conservative, reliable detection. 
\texttt{sepSSL} is fastest but has lower \texttt{PREC} and \texttt{SPEC} (more false positives). We note that \texttt{mixed-mSSL}'s low sensitivity in this setting is an artifact of the small sample size, as it has limited power to detect weak signals near zero (\switchref{fig:ARsensplot}{Figure S3.1 in the Supplementary Materials}).

\begin{table}[ht!]
{\small
\centering
\caption{Support recovery and predictive performance across different covariance structures with $(n,p,q)=(200,500,4)$ under the signal setting $\mathcal{U}[-5,5]$.}
\label{table:p200q4unif}
\begin{tabular}{@{}l l cccc c ccc@{}}
\toprule
\multirow{2}{*}{\textbf{Scenario}} & \multirow{2}{*}{\textbf{Method}}
 & \multicolumn{4}{c}{\textbf{Support recovery}} 
 & 
 & \multicolumn{3}{c}{\textbf{Predictive}} \\
\cmidrule(lr){3-7} \cmidrule(lr){8-10}
  &   & \textbf{SEN} & \textbf{SPEC} & \textbf{PREC} & \textbf{ACC} 
  & \textbf{TIME(s)} & \textbf{RFE} & \textbf{RMSE} & \textbf{AUC} \\
\midrule
\multirow{4}{*}{AR1}
  & \texttt{mt-MBSP}     & 0.10 & \textbf{0.97} & \textbf{0.63} & \textbf{0.71} & 205.13 & 43.89 & 1.78 & 0.58 \\
  & \texttt{sepSSL}      & 0.31 & 0.83 & 0.45 & 0.68 &  \textbf{0.37} & 43.69 & 1.14 & 0.53 \\
  & \texttt{sepglm}      & \textbf{0.40} & 0.81 & 0.48 & 0.68 & 1.74 & 43.55 & 0.71 & 0.58 \\
  & \texttt{mixed-mSSL}  & 0.21 & 0.91 & 0.52 & 0.70 & 18.26 & \textbf{43.46} & \textbf{0.69} & \textbf{0.61} \\
\midrule
\multirow{4}{*}{AR2}
  & \texttt{mt-MBSP}     & 0.10 & \textbf{0.97} & \textbf{0.62} & 0.71 & 146.10 & 43.89 & 1.78 & 0.58 \\
  & \texttt{sepSSL}      & 0.31 & 0.84 & 0.44 & 0.67 &  \textbf{0.40} & 43.70 & 1.15 & 0.52 \\
  & \texttt{sepglm}      & \textbf{0.41}  & 0.81 & 0.48 & 0.68 & 1.57 & 43.56 & 0.71 & 0.57 \\
  & \texttt{mixed-mSSL}  & 0.21 & 0.92 & 0.52 & \textbf{0.72} & 24.62 & \textbf{43.47} & \textbf{0.70} & \textbf{0.61} \\
\midrule
\multirow{4}{*}{Block Diagonal}
  & \texttt{mt-MBSP}     & 0.11 & \textbf{0.97} & \textbf{0.63} & \textbf{0.71} & 198.61 & 43.89 & 1.77 & 0.58 \\
  & \texttt{sepSSL}      & 0.31 & 0.83 & 0.44 & 0.67 & \textbf{0.36} & 43.69 & 1.14 & 0.53 \\
  & \texttt{sepglm}      & \textbf{0.40} & 0.81 & 0.48 & 0.69 & 1.71 & 43.56 & 0.72 & 0.58 \\
  & \texttt{mixed-mSSL}  & 0.21 & 0.91 & 0.53 & \textbf{0.71} & 21.53 & \textbf{43.47} & \textbf{0.69} & \textbf{0.60} \\
\midrule
\multirow{4}{*}{Star Graph}
  & \texttt{mt-MBSP}     & 0.10 & \textbf{0.98} & \textbf{0.63} & \textbf{0.71} & 186.66 & 43.89 & 1.78 & 0.58 \\
  & \texttt{sepSSL}      & 0.32 & 0.84 & 0.45 & 0.67 &  \textbf{0.34} & 43.69 & 1.14 & 0.53 \\
  & \texttt{sepglm}      & \textbf{0.41} & 0.81 & 0.48 & 0.68 & 1.63 & 43.55 & 0.71 & 0.58 \\
  & \texttt{mixed-mSSL}  & 0.22 & 0.91 & 0.52 & \textbf{0.71} & 22.40 & \textbf{43.47} & \textbf{0.70} & \textbf{0.61} \\
\midrule
\multirow{4}{*}{Small World}
  & \texttt{mt-MBSP}     & 0.11 & \textbf{0.97} & \textbf{0.63} & \textbf{0.71} & 183.06 & 43.89 & 1.78 & 0.58 \\
  & \texttt{sepSSL}      & 0.31 & 0.83 & 0.44 & 0.67 &  \textbf{0.33} & 43.69 & 1.14 & 0.53 \\
  & \texttt{sepglm}      & \textbf{0.40} & 0.81 & 0.48 & 0.68 & 1.58 & 43.55 & 0.71 & 0.58 \\
  & \texttt{mixed-mSSL}  & 0.21 & 0.91 & 0.54 & 0.70 & 28.97 & \textbf{43.47} & \textbf{0.70} & \textbf{0.61} \\
\midrule
\multirow{4}{*}{Tree Network}
  & \texttt{mt-MBSP}     & 0.10 & \textbf{0.97} & \textbf{0.62} & \textbf{0.71} & 189.30 & 43.89 & 1.78 & 0.58 \\
  & \texttt{sepSSL}      & 0.31 & 0.83 & 0.45 & 0.67 &  \textbf{0.34} & 43.69 & 1.14 & 0.53 \\
  & \texttt{sepglm}      & \textbf{0.41} & 0.81 & 0.48 & 0.69 & 1.63 & 43.55 & 0.71 & 0.58 \\
  & \texttt{mixed-mSSL}  & 0.22 & 0.91 & 0.53 & \textbf{0.71} & 24.40 & \textbf{43.47} & \textbf{0.70} & \textbf{0.61} \\
\bottomrule
\end{tabular}
}
\end{table}

\begin{table}[ht!]
{\small
\centering
\caption{Support recovery and predictive performance across different covariance structures with $(n,p,q)=(200,500,4)$ under the disjoint signal setting $\mathcal{U}[-5,-2] \cup [2,5]$.}
\label{table:p200q4disjt}
\begin{tabular}{@{}l l cccc c ccc@{}}
\toprule
\multirow{2}{*}{\textbf{Scenario}} & \multirow{2}{*}{\textbf{Method}}
 & \multicolumn{4}{c}{\textbf{Support recovery}} 
 & 
 & \multicolumn{3}{c}{\textbf{Predictive}} \\
\cmidrule(lr){3-7} \cmidrule(lr){8-10}
  &   & \textbf{SEN} & \textbf{SPEC} & \textbf{PREC} & \textbf{ACC} 
  & \textbf{TIME(s)} & \textbf{RFE} & \textbf{RMSE} & \textbf{AUC} \\
\midrule
\multirow{4}{*}{AR1}
  & \texttt{mt-MBSP}     & 0.10 & \textbf{0.96} & \textbf{0.57} & \textbf{0.70} & 133.44 & 54.94 & 1.78 & 0.58 \\
  & \texttt{sepSSL}      & 0.31 & 0.83 & 0.44 & 0.67 &  \textbf{0.34} & 54.79 & 1.17 & 0.52 \\
  & \texttt{sepglm}      & \textbf{0.39} & 0.81 & 0.48 & 0.69 & 1.45 & 54.70 & 0.80 & 0.58 \\
  & \texttt{mixed-mSSL}  & 0.35 & 0.86 & 0.51 & \textbf{0.70} & 31.49 & \textbf{54.57} & \textbf{0.77} & \textbf{0.60} \\
\midrule
\multirow{4}{*}{AR2}
  & \texttt{mt-MBSP}     & 0.11 & \textbf{0.96} & \textbf{0.58} & 0.70 & 151.87 & 54.94 & 1.78 & 0.57 \\
  & \texttt{sepSSL}      & 0.32 & 0.83 & 0.45 & 0.67 &  \textbf{0.41} & 54.79 & 1.17 & 0.53 \\
  & \texttt{sepglm}      & \textbf{0.39}  & 0.81 & 0.48 & 0.69 & 1.70 & 54.70 & 0.80 & 0.57 \\
  & \texttt{mixed-mSSL}  & 0.36 & 0.85 & 0.50 & \textbf{0.71} & 19.95 & \textbf{54.57} & \textbf{0.78} & \textbf{0.61} \\
\midrule
\multirow{4}{*}{Block Diagonal}
  & \texttt{mt-MBSP}     & 0.10 & \textbf{0.96} & \textbf{0.57} & \textbf{0.70} & 136.08 & 54.94 & 1.78 & 0.58 \\
  & \texttt{sepSSL}      & 0.31 & 0.83 & 0.45 & 0.68 & \textbf{0.35} & 54.79 & 1.17 & 0.52 \\
  & \texttt{sepglm}      & \textbf{0.39} & 0.82 & 0.48 & 0.69 & 1.47 & 54.70 & 0.80 & 0.57 \\
  & \texttt{mixed-mSSL}  & 0.35 & 0.85 & 0.49 & 0.69 & 29.22 & \textbf{54.58} & \textbf{0.78} & \textbf{0.60} \\
\midrule
\multirow{4}{*}{Star Graph}
  & \texttt{mt-MBSP}     & 0.10 & \textbf{0.96} & \textbf{0.57} & 0.70 & 124.39 & 54.94 & 1.78 & 0.58 \\
  & \texttt{sepSSL}      & 0.31 & 0.83 & 0.45 & 0.68 &  \textbf{0.34} & 54.80 & 1.18 & 0.53 \\
  & \texttt{sepglm}      & \textbf{0.38} & 0.82 & 0.48 & 0.69 & 1.40 & 54.69 & 0.80 & 0.58 \\
  & \texttt{mixed-mSSL}  & 0.35 & 0.86 & 0.52 & \textbf{0.71} & 30.77 & \textbf{54.58} & \textbf{0.78} & \textbf{0.61} \\
\midrule
\multirow{4}{*}{Small World}
  & \texttt{mt-MBSP}     & 0.10 & \textbf{0.96} & \textbf{0.57} & \textbf{0.70} & 204.06 & 54.94 & 1.78 & 0.58 \\
  & \texttt{sepSSL}      & 0.31 & 0.83 & 0.44 & 0.67 &  \textbf{0.36} & 54.79 & 1.17 & 0.52 \\
  & \texttt{sepglm}      & \textbf{0.39} & 0.81 & 0.48 & 0.69 & 1.74 & 54.71 & 0.81 & 0.57 \\
  & \texttt{mixed-mSSL}  & 0.34 & 0.85 & 0.51 & 0.69 & 26.23 & \textbf{54.58} & \textbf{0.78} & \textbf{0.60} \\
\midrule
\multirow{4}{*}{Tree Network}
  & \texttt{mt-MBSP}     & 0.11 & \textbf{0.97} & \textbf{0.58} & \textbf{0.70} & 196.86 & 54.94 & 1.78 & 0.57 \\
  & \texttt{sepSSL}      & 0.32 & 0.83 & 0.44 & 0.67 &  \textbf{0.35} & 54.80 & 1.17 & 0.53 \\
  & \texttt{sepglm}      & \textbf{0.39} & 0.82 & 0.49 & 0.69 & 1.70 & 54.70 & 0.80 & 0.57 \\
  & \texttt{mixed-mSSL}  & 0.35 & 0.84 & 0.53 & \textbf{0.70} & 23.64 & \textbf{54.57} & \textbf{0.77} & \textbf{0.61} \\
\bottomrule
\end{tabular}
}
\end{table}

\Cref{table:p200q4disjt} summarizes estimation and predictive performance results for signals drawn from the disjoint interval $\mathcal{U}[-5,-2] \cup [2,5]$. This setting is relatively less challenging compared to the previous scenario, where signals were typically concentrated around zero. Consequently, we observe that \texttt{mixed-mSSL}'s sensitivity (\texttt{SEN}) notably improves. 
\texttt{mt-MBSP} consistently achieves high specificity (\texttt{SPEC}) and precision (\texttt{PREC}), albeit with relatively lower sensitivity.
On the other hand, \texttt{sepSSL} and \texttt{sepglm} exhibit higher sensitivity but at the cost of lower specificity and precision, indicating more frequent false positives.
\texttt{mixed-mSSL} demonstrates an optimal balance, substantially improving sensitivity without a large sacrifice in specificity, precision, or accuracy (\texttt{ACC}). 
Specifically, it attains competitive predictive performance metrics across all covariance structures, evidenced by consistently low regression-function error (\texttt{RFE}), predictive RMSE, and the highest mean area under the ROC curve (\texttt{AUC}).

\switchref{\Cref{table:omega_support}}{Table S3.1 in the Supplementary Materials} reports \texttt{mixed-mSSL}'s  support recovery performance for $\Omega$ across the various covariance structures. 
Because \texttt{mt-MBSP} typically returns a dense $\Omega$ (thanks to its use of an inverse Wishart prior), rendering it uninformative for evaluating support recovery or variable selection performance.
Qualitatively similar results for other settings are presented in \switchref{\Cref{sec:additional_experiments}}{Section S3 of the Supplementary Materials}.

\section{Real Data Analysis}
In this section, we demonstrate \texttt{mixed-mSSL}'s performance in settings with large $n$ (\Cref{sec:ckddata}), large $p$ (\Cref{sec:crcdata}), and large $q$ (\Cref{sec:hmscdata}).
\label{sec:realdata}
\subsection{Chronic Kidney Disease Data}
\label{sec:ckddata}

Our first application reanalyzes the CKD dataset of \citet{CKDdata} ($n=400$). Outcomes are (i) CKD status (binary) and (ii) urine specific gravity (SG, continuous). We consider $p=24$ covariates (clinical labs such as albumin, hemoglobin, random blood glucose; comorbidities such as diabetes, hypertension, anemia, CAD). \citet{CKDdata} reported 10 biomarkers predictive of CKD. We compare \texttt{mixed-mSSL}, \texttt{mt-MBSP}, \texttt{sepSSL}, and \texttt{sepglm}; although $n>p$, \texttt{mmrr} failed to converge and is omitted.

\Cref{tab:ckd_metrics} summarizes selection against the 10 gold standards. All methods recover albumin, hemoglobin, and diabetes (full lists in Table~S4.1). \texttt{mixed-mSSL} identifies 6 out of 10 with the highest precision (0.86), accuracy (0.79), and specificity (0.93), indicating the best FP/FN balance. \texttt{mt-MBSP} and \texttt{sepSSL} are more conservative (4 true biomarkers each), with \texttt{sepSSL} incurring lower specificity/precision due to extra false positives. \texttt{sepglm} selects many variables (16), achieving the highest sensitivity (0.90) but low specificity (0.50) and accuracy (0.56).

Notably, only \texttt{mixed-mSSL} and \texttt{sepglm} select random blood glucose (BGR), consistent with prior evidence linking hyperglycemia to CKD progression \citep{KumarCKD,HassaneinCKD}.

\begin{table}[ht]
\centering
\caption{Classification performance of different methods for detecting the 10 gold-standard CKD biomarkers.}
\label{tab:ckd_metrics}
\begin{tabular}{@{}lccccc@{}}
\toprule
\textbf{Method}         & \textbf{\# Selected} & \textbf{SEN} & \textbf{SPEC} & \textbf{ACC} & \textbf{PREC} \\ 
\midrule
\texttt{mixed-mSSL}     & 7           & 0.60        & 0.93        & 0.79     & 0.86      \\ 
\texttt{mt-MBSP}        & 5           & 0.40        & 0.93        & 0.71     & 0.80      \\ 
\texttt{sepSSL}         & 6           & 0.40        & 0.86        & 0.67     & 0.67      \\ 
\texttt{sepglm}         & 16          & 0.90        & 0.50        & 0.56     & 0.67      \\
\bottomrule
\end{tabular}
\end{table}

We additionally compared each method's predictive performance using five-fold cross-validation. We assessed predictive performance for CKD using the area under the receiver operating characteristic curve (AUC) and for SG using RMSE. Across the five folds, \texttt{mixed-mSSL} had the highest AUC for classifying CKD status (0.993 versus 0.972 for \texttt{mt-MBSP}, 0.967 for \texttt{sepSSL} and 0.988 for \texttt{sepglm}) and the lowest RMSE for predicting SG ($4.33 \times 10^{-3}$ versus $4.44 \times 10^{-3}$ for \texttt{mt-MBSP}, $4.63 \times 10^{-3}$ for \texttt{sepSSL}, and $4.52 \times 10^{-3}$ for \texttt{sepglm}).

\subsection{Colorectal Cancer Data}
\label{sec:crcdata}
To showcase performance with $p\gg n$, we reanalyzed a pooled gut–metagenomic compendium of five CRC cohorts studied in \citet{Wirbel2019meta}. 
After harmonization, the data comprise $n=286$ subjects and $p=401$ species present in at least $20\%$ of samples. 
There are $q = 3$ outcomes: two binary indicators for CRC status and Type-2 diabetes, and the continuous body mass index (BMI).
Joint analysis of these three outcomes is biologically motivated as obesity and diabetes are well-established risk modifiers for CRC and share overlapping microbiome signatures \citep{Qin2012T2D,Mandic24}.   

Using \texttt{mixed-mSSL}, we identified five taxa as protective (i.e., the corresponding $\beta_{j,k}$'s were negative) and five as risk factors (i.e., the corresponding $\beta_{j,k}$'s were positive) for CRC.
\emph{Blautia wexlerae} and \emph{Anaerotipes hadrus} --- both prominent butyrate producers --- were identified as protective; reduced levels of these taxa have been consistently observed in CRC patients relative to healthy controls \citep{Louis2014Butyrate, Kostic2013CRC}. 
By contrast, \emph{Ruminococcus torques}, \emph{Clostridium citroniae}, \emph{Alistipes indistinctus}, and \emph{Oscillibacter} spp. were all positively associated with CRC risk by \texttt{mixed-mSSL}; prior studies have reported over‑representation of these anaerobes in tumor tissue or stool from CRC cohorts, suggesting pro‑inflammatory or genotoxic roles \citep{zeller2014potential, Liu2023}.

\texttt{mixed-mSSL} identified a single Clostridial cluster member (\emph{Clostridium} sp.) as having a nominally protective effect against Type-2 diabetes.
This finding aligns with previous murine studies showing that this species is involved in short-chain fatty acid synthesis and improved insulin sensitivity \citep{Razavi2024}. 
\texttt{mixed-mSSL} also identified \emph{Bacteroides vulgatus} and certain \emph{Clostridiales spp.} as risk factors for diabetes, consistent with metagenomic surveys that report enrichment of these taxa in hyperglycemic individuals \citep{Qin2012T2D, Karlsson2013T2D}. 
Turning to $\Omega,$ \texttt{mixed-mSSL} recovered very small negative partial correlations (-0.008) between diabetes and CRC status, suggesting that having a latent risk for one disease barely decreases the risk of the other, after accounting for the covariates and BMI.
It also identified a very small positive partial correlation (0.011) between BMI and diabetes and a small negative partial correlation between BMI and CRC risk. 

Compared to \texttt{mixed-mSSL}, \texttt{sepSSL} identified far fewer taxa (1) as predictive of CRC and diabetes status, while \texttt{mt-MBSP} and \texttt{sepglm} identified many more: (72) and (90) respectively. 
Because we lacked a gold standard set of significant taxa against which to assess these identifications, we performed another predictive comparison using five-fold cross-validation. 
Similar to our analysis of the CKD data, we assessed each method's ability to predict binary outcomes using AUC and continuous outcomes using RMSE.
For discriminating between CRC cases and controls, \texttt{mixed-mSSL} had the highest AUC (0.774 compared to 0.722 for \texttt{mt-MBSP},  0.601 for \texttt{sepSSL}, and 0.693 for \texttt{sepglm}).
It similarly performed the best at predicting diabetes status, achieving an AUC of 0.614 compared to \texttt{mt-MBSP}'s 0.585, \texttt{sepSSL}'s 0.529, and \texttt{sepglm}'s 0.566 and BMI, displaying a much lower RMSE (4.133) than \texttt{mt-MBSP} (4.813), \texttt{sepSSL} (19.511), and \texttt{sepglm} (6.577).
Taken together, these results suggest that \texttt{mixed-mSSL} was better able to share information across related outcomes than \texttt{mt-MBSP}, yielding better predictions with fewer covariates.

\subsection{Finnish Bird Data}
\label{sec:hmscdata}
We finally demonstrate \texttt{mixed-mSSL}'s utility in settings with only binary outcomes by re-analyzing a dataset containing yearly counts of the $q = 50$ most common Finnish breeding birds present at $n = 137$ locations that were originally compiled by \citet{Lindstrom15} and later analyzed by \citet{Ovaskainen2020}. 
We focus on a single survey year (2014) and let $Y_{i,k} = 1$ if the number of birds of species $k$ observed at site $i$ exceeds that species' median count across all sites in 2013, and $Y_{i,k} = 0$ otherwise.
The dataset contains measurements of $p = 4$ covariates at each location: the average temperatures in April \& May; the effort (in hours) expended by the observer to survey the location; the average temperature in June \& July; and the average winter temperature. 
\texttt{mixed-mSSL} selected the temperature variables as the most relevant predictors of whether most species were present in greater abundance at a site in 2014 than in 2013.

More interesting, ecologically, is the estimated precision matrix $\Omega.$
Recall that latent residual dependencies captured by $\Omega$ induce dependence between the observed outcomes.
\Cref{fig:networkhmsc} shows a network whose vertices correspond to species and whose edges correspond to non-zero off-diagonal elements in the upper triangle of $\Omega.$
Edges corresponding to positive $\omega_{k,k'}$ values are colored black and suggest a positive partial correlation --- that is, the two species tend to co-occur at the same location more often than expected based on the covariates and presence of other species. 
Edges corresponding to negative $\omega_{k,k'}$ values, which indicate negative partial correlations, are colored red.
The widths of the edges in \Cref{fig:networkhmsc} are proportional to $\lvert \omega_{k,k'} \rvert$, with thicker lines indicating larger partial correlations. 
\begin{figure}[ht!]
    \centering
    \includegraphics[width=0.6\textwidth]{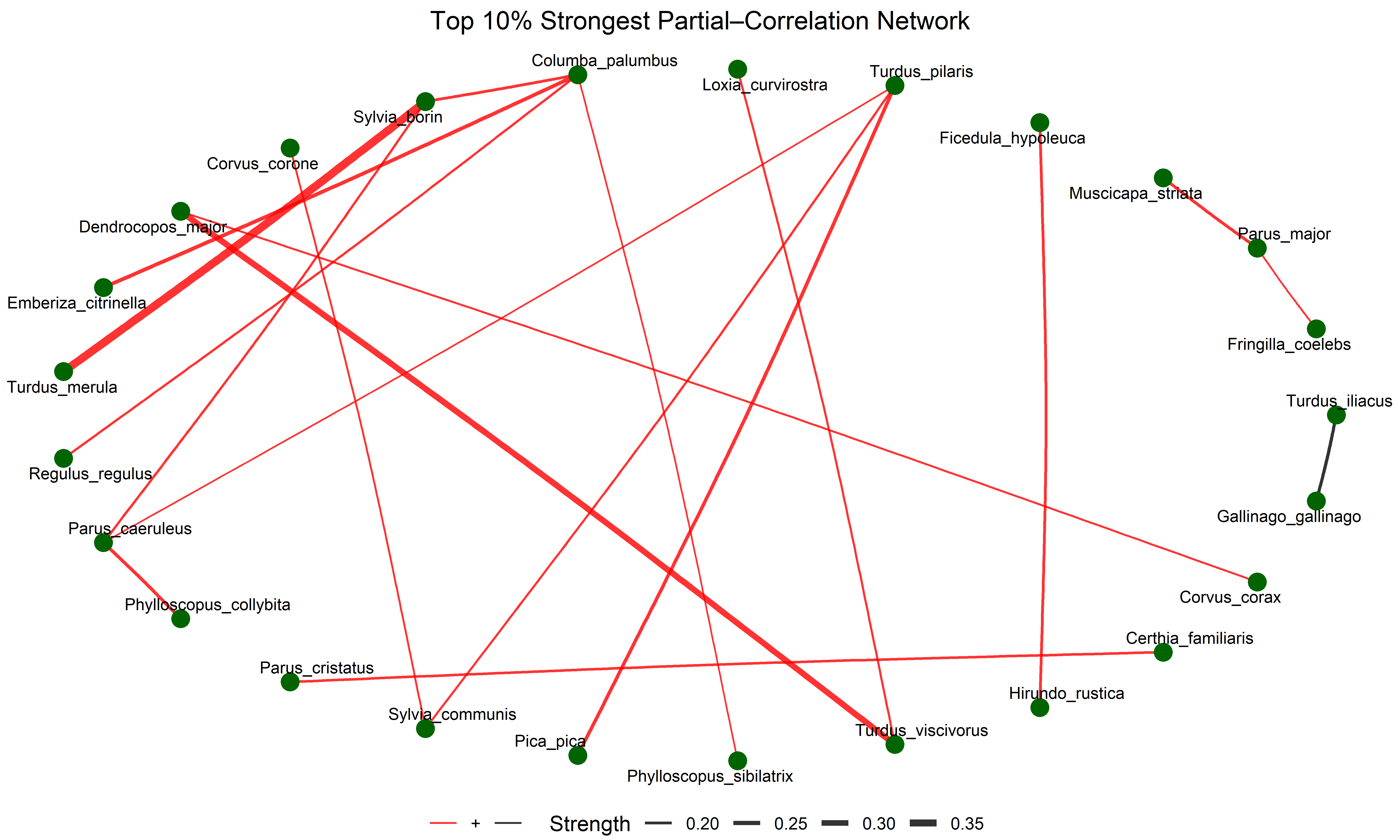}
    \caption{Covariate-adjusted partial-correlation network of bird species}
    \label{fig:networkhmsc}
\end{figure}
In \Cref{fig:networkhmsc}, we observe a single positive partial correlation between  \emph{Turdus-iliacus} (redwing) and \emph{Gallinago-gallinago} (common snipe).
These two species often share wet, rush‑dominated meadows and bog margins in boreal Finland. 
The positive edge recovered by \texttt{mixed-mSSL} plausibly reflects a real tendency for these species to co-occur after climatic effects are removed \citep{Virkkala2014}.
In contrast, the abundance of negative edges among woodland versus farmland species (e.g.,\emph{Columba-palumbus} and \emph{Sylvia-borin} opposing \emph{Emberiza-citrinella} and \emph{Regulus-regulus}) matches long‑term point‑count data showing spatial segregation driven by habitat preferences rather than by temperature alone \citep{Jarvinen2006,Newton2003}.

\section{Discussion}
\label{sec:discussion}
We introduced \texttt{mixed-mSSL} to fit sparse high-dimensional regression models with multiple, possibly dependent, binary and continuous outcomes.
We represented the vector of continuous and binary outcomes as a partially observed latent realization from a multivariate linear regression model.
We specified element-wise spike-and-slab LASSO priors on the latent regression coefficient and residual precision matrices and derived an MCECM algorithm for computing the MAP.
We also showed that \texttt{mixed-mSSL} posterior has favorable theoretical properties. 

%\texttt{mixed-mSSL} encourages sparsity through the use of element-wise spike-and-slab LASSO priors on the elements of the latent regression coefficient and residual precision matrices.
%We believe that \texttt{mixed-mSSL} offers a robust and practical solution for high-dimensional mixed-type datasets, balancing promising empirical performance with strong theoretical guarantees.

%We developed a Monte Carlo Expectation-Conditional Maximization (MCECM) algorithm targeting the maximum a posteriori estimate of these parameters.
%Our algorithm is a natural extension of the ECM procedure proposed by \citet{Deshpande2019_mSSL} to fit sparse multivariate linear regression models.
On synthetic data \texttt{mixed-mSSL} outperformed another recently proposed Bayesian shrinkage method \cite[\texttt{mt-MBSP}]{wang2023twostep}, which utilizes global-local shrinkage priors for the regression coefficients, in terms of support recovery and prediction.
When applied to the CKD dataset, \texttt{mixed-mSSL} identified more well-established biomarkers for CKD status and achieved a smaller prediction error than \texttt{mt-MBSP} and separately modeling each outcome.
\texttt{mixed-mSSL} also produced more accurate predictions with fewer selected covariates than \texttt{mt-MBSP} on the CRC microbiome data and uncovered a sparse residual dependence network among Finnish birds that is consistent with known co-occurrences.

Beyond demonstrating \texttt{mixed-mSSL}'s excellent empirical performance, we derived the posterior contraction rates for the matrix of latent regression coefficients $\bm{B}$ (\Cref{thm:unconditional}) and the latent residual precision matrix $\Omega$ (\Cref{thm:theorem3}).
While similar in spirit, our results substantially extend those proved in \citet{wang2023twostep} by (i) allowing the number of outcomes to diverge with $n$ and (ii) showing that the posterior over $\Omega$ contracts.
We further showed that with a natural truncation strategy, \texttt{mixed-mSSL} has the desirable sure screening property (\Cref{thm:surescreening}), which guarantees that the truly non-zero entries in $\bm{B}$ can be identified with high probability as $n$ increases.

For uncertainty quantification, rather than full MCMC, we can follow the Bayesian–bootstrap SSL (\texttt{BB-SSL}) approach of \citet{NieRockova2021}: draw observation weights and optimize a re-weighted log posterior using our MCECM algorithm to obtain pseudo–posterior draws of $(\bm B,\Omega)$. 
At least for high-dimensional regression with a single outcome, \texttt{BB-SSL} is a scalable approach to uncertainty quantification with near-optimal contraction in high-dimensional regression. 
As alluded to in \Cref{sec:method}, we can easily extend \texttt{mixed-mSSL} to accommodate integer- or count-valued outcomes by using a different deterministic transformation.
Practically, this requires changing the Monte Carlo E-step so that the missing latent components are drawn from multivariate normals whose entries are truncated to intervals like $[a, a+1]$ for some integer $a,$ as opposed to half-spaces like $(-\infty, 0)$ or $(0, \infty).$
For such draws, we can still use \texttt{LinESS}.
Analyzing such extensions theoretically introduces some new challenges, which can be overcome by developing bespoke tail bounds for interval-truncated normals and adapting our entropy calculations to handle shifting truncation sets; we leave such exploration for future work.
A more ambitious extension of \texttt{mixed-mSSL} would allow the transformation $g$ to be learned from the data instead of being fixed \textit{a priori}, similar to the STAR framework of \citet{KowalCanale2019_star}.

\bibliographystyle{apalike} % I like APA-like bibliography formats but you can change it as needed
\bibliography{refs}

\newpage
\appendix
\begin{center}
{
\LARGE
\textbf{Supplementary Materials}
}
\end{center}

\renewcommand{\thefigure}{\thesection\arabic{figure}}
\renewcommand{\thetable}{\thesection\arabic{table}}
\renewcommand{\theequation}{\thesection\arabic{equation}}
\renewcommand{\thelemma}{\thesection\arabic{lemma}}
\renewcommand{\thetheorem}{\thesection\arabic{theorem}}

\setcounter{figure}{0}
\setcounter{equation}{0}
\setcounter{table}{0}
\setcounter{theorem}{0}
\section{Proofs}
\label{sec:proofs}
We present the proofs of some of the theoretical results in this section. Before delving into the technicalities of the proofs, we state a few preliminary results (\Cref{sec:zposterior} -- \ref{sec:mixedcorr}) which aid us to better understand our theory.
\subsection{Conditional posterior of the binary latents}\label{sec:zposterior}
We partition the precision matrix and latent mean as \( \Omega = \begin{pmatrix}
    \Omega_{CC} & \Omega_{CB} \\
    \Omega_{CB} & \Omega_{BB} \\
\end{pmatrix},\) and \(\bm{m}_i = \bm{B}^{\top} \bx_i = \begin{pmatrix}
    \bm{m}_i ^{(C)} \\
    \bm{m}_i ^{(B)}
\end{pmatrix}
\) respectively, where $\Omega_{BB} \in \mathbb{R}^{q_b \times q_b}$ and $\text{diag}(\Omega_{BB} ^{-1})=\bm{1}_{q_b}$. Following the multivariate Normal conditional distribution formula in \citet[Appendix A2]{Rasmussen2005}, we have $\bm{z}_i^{(B)}|\bm{z}_i ^{(C)} = \bm{y}_i ^{(C)}, \bm{B}, \Omega \sim \mathcal{N}(\bm{m}_{B|C},\Omega_{BB} ^{-1})$ with $\bm{m}_{B|C} =\bm{m}_i ^{(B)}-\Omega_{BB}^{-1}\Omega_{CB}(\by_i ^{(C)}-\bm{m}_i ^{(C)})$.
\subsection{Interpretation of model coefficients}\label{sec:interp}
For every covariate-outcome pair $(j,k)$, the entry $\beta_{jk}$ measures the effect of increasing $x_j$ by one unit while holding the other covariates fixed. 
For continuous outcomes, $\beta_{j,k}$ is the familiar partial regression coefficient $\beta_{j,k} = \E[Y_{k} \vert X_{j} = x+1, \bm{X}_{-j}] - \E[Y_{k} \vert X_{j} = x, \bm{X}_{-j}],$ where $\bm{X}_{-j}$ is the covariate vector without its $j$-th entry.
For binary outcomes, a first-order Taylor expansion shows that an upper bound of the associated risk difference depends on the magnitude of $\beta_{j,k}$:
\begin{align*}
\P(Y_{k} = 1 \vert X_{j} = x+1, \bX_{-j} = \bx_{-j}) - \P(Y_{k} = 1 \vert X_{j} = x, \bX_{-j} = \bx_{-j}) &= \\
\Phi\left(\beta_{j,k}(x+1) + \sum_{j' \neq j}{\beta_{j',k}x_{j'}}\right) - \Phi\left(\beta_{j,k}x + \sum_{j' \neq j}{\beta_{j',k}x_{j'}}\right) &\approx \\
\beta_{j,k}\phi\left(\left[\beta_{j,k}x + \sum_{j' \neq j}{\beta_{j',k}x_{j'}}\right]\right) &\leq (2\pi)^{-\frac{1}{2}}\lvert \beta_{j,k}\rvert,
\end{align*}
where $\phi$ and $\Phi$ are the standard normal probability density and cumulative distribution functions. In a nutshell: $\beta_{jk}$ is the per–unit, ceteris paribus effect. For binary outcomes, a one-unit increase in $x_j$ changes $\mathbb{P}(Y_k{=}1)$ by approximately $\beta_{jk}\,\phi(\eta)$ (with $\eta$ the current linear predictor), and this effect is universally bounded by $\approx 0.40\,|\beta_{jk}|$; the sign of $\beta_{jk}$ sets the direction.

\subsection{Proof of Lemma 1}
\label{sec:mixedcorr}
\switchref{\Cref{lem:correlation_bounds}}{Lemma 1} asserts that the maximal correlation allowed between a binary and continuous outcome under our model is $\sqrt{2/\pi}.$

\begin{proof}

Suppose $y_{i1}$ is continuous and $y_{i2}$ is binary, with latent Gaussian variables
$z_{i1}\sim \mathcal{N}(0,1),$ $z_{i2}\sim \mathcal{N}(0,1),$ and $\mathrm{Corr}(z_{i1},z_{i2})=\rho.$
We have $y_{i1} = z_{i1}, y_{i2} = \mathbbm{1}\{z_{i2} > 0\}.$

We want to bound $\mathrm{Corr}(y_{i1},y_{i2})$. 
Here,
$\mathrm{Var}(y_{i1}) = \mathrm{Var}(z_{i1})=1,
  \quad
  \mathrm{Var}(y_{i2}) 
   = \mathrm{Var}\bigl(\mathbbm{1}\{z_{i2}>0\}\bigr)
   = \tfrac12\,\bigl(1-\tfrac12\bigr)
   = \tfrac{1}{4}.$
Following the ideas from Lemma 2.1 in \citet{ekvall}, for the numerator, $\mathrm{Cov}(z_{i1},\mathbbm{1}\{z_{i2}>0\})$, we write
\[
   \mathbb{E}\bigl[z_{i1}\,\mathbbm{1}\{z_{i2}>0\}\bigr]
   \;=\;\int_{-\infty}^{\infty}
        z_1\,\mathbb{P}(z_{i2}>0 \mid z_{i1}=z_1)
        \,\phi(z_1)\,\mathrm{d}z_1,
\]
where $\phi(\cdot)$ is the standard normal pdf, and
\(
  \mathbb{P}\{z_{i2}>0\mid z_{i1}=z_1\}
  = \Phi\!\Bigl(\rho\,z_1/\sqrt{1-\rho^2}\Bigr).
\)
Therefore, $\mathrm{Cov}(z_{i1},\mathbbm{1}\{z_{i2}>0\})
   \;=\;\int_{-\infty}^\infty
        z_1\,\phi(z_1)\,\Phi\!\Bigl(\tfrac{\rho\,z_1}{\sqrt{1-\rho^2}}\Bigr)
        \,\mathrm{d}z_1.$
Hence,
\[
   \mathrm{Corr}\bigl(y_{i1},y_{i2}\bigr)
   = \frac{\mathbb{E}[z_{i1}\,\mathbbm{1}\{z_{i2}>0\}]}
          {\tfrac12}
   = 2\int_{-\infty}^\infty
        z_1\,\phi(z_1)\,\Phi\!\Bigl(\tfrac{\rho\,z_1}{\sqrt{1-\rho^2}}\Bigr)
     \,\mathrm{d}z_1.
\]
As $\rho$ increases toward $1$, this integral grows but cannot exceed $\sqrt{\tfrac{2}{\pi}}\approx 0.8$.
A rigorous argument uses the Dominated Convergence Theorem to show $\lim_{\rho\to1^-} \mathrm{Corr}(y_{i1},y_{i2})
   \;=\;\sqrt{\tfrac{2}{\pi}}$. 
   To see this formally, we write  
\(
   I_\rho
   \;=\;
   2\!\int_{-\infty}^{\infty}
        z\,\phi(z)\,
        \Phi\Bigl(\tfrac{\rho\,z}{\sqrt{1-\rho^{2}}}\Bigr)\,dz,
   \ 0\le \rho<1,
\)
and define  
\(
   h_\rho(z)=2\,z\,\phi(z)\,
             \Phi\Bigl(\tfrac{\rho\,z}{\sqrt{1-\rho^{2}}}\Bigr),
   \
   g(z)=2\,|z|\,\phi(z).
\)
For any fixed \(z\neq 0\),
\(
  \dfrac{\rho z}{\sqrt{1-\rho^{2}}}\to\pm\infty
\)
as \(\rho\uparrow 1\); hence  
\(
   h_\rho(z)\to h(z):=2\,z\,\phi(z)\,\mathbbm {1}_{\{z>0\}}.
\)
Because \(0\le\Phi(\cdot)\le 1\), \(0\le h_\rho(z)\le g(z)\).  
Moreover,
\(
 \displaystyle\int_{-\infty}^{\infty} g(z)\,dz
 =\frac{4}{\sqrt{2\pi}}<\infty,
\)
so \(g\) is integrable. 
Now, by the Dominated Convergence Theorem, we have
\[
   \lim_{\rho\uparrow 1} I_\rho
   \;=\;
   \int_{-\infty}^{\infty} h(z)\,dz
   \;=\;
   2\!\int_{0}^{\infty} z\,\phi(z)\,dz
   \;=\;
   2\,\frac{1}{\sqrt{2\pi}}
   \;=\;
   \sqrt{\tfrac{2}{\pi}}.
\]
The same argument with \(\rho\to -1^+\) gives the lower bound
\(-\sqrt{2/\pi}\). 
\end{proof}

Intuitively, $y_{i2} = \ind{z_{i2}>0}$ only tracks the \emph{sign} of $z_{i2}$, so it cannot fully capture the magnitude alignment between $z_{i1}$ and $z_{i2}.$ 
This partial alignment caps the observed correlation below 1. 
Hence, for a mixed continuous--binary pair, the observed correlation cannot exceed $\sqrt{2/\pi}$, reflecting a fundamental limit from thresholding one of the two latent variables. 

\subsection{Proof of Lemma 2}
\label{sec:prooflemwang}

\begin{proof}
We write the row–wise decomposition
\begin{equation}\label{eq:frobsplit}
\|\bm B-\bm B_0\|_F^2
=\sum_{j=1}^{p_n}\|\bm b_j-\bm b^0_j\|_2^2
=\sum_{j\in S_0}\|\bm b_j-\bm b^0_j\|_2^2
+\sum_{j\notin S_0}\|\bm b_j\|_2^2.
\end{equation}
Fix $C>0$, $\rho\in(0,b)$, and set
\(
\Delta_n^2:=\frac{C^2\,\epsilon_n^2}{2 s_0^B\,n^\rho},
\quad
\Delta:=\frac{\Delta_n}{\sqrt{q_n}}.
\)
Then
\[
\mathbb{P}\left(\|\bm B-\bm B_0\|_F < C\,n^{-\rho/2}\epsilon_n\right)
\;\ge\;
\underbrace{\prod_{j\in S_0}\mathbb{P}\left(\|\bm b_j-\bm b^0_j\|_2^2<\Delta_n^2\right)}_{\mathcal A_1}
\cdot
\underbrace{\mathbb{P}\left(\sum_{j\notin S_0}\|\bm b_j\|_2^2<\tfrac{C^2\epsilon_n^2}{2 n^\rho}\right)}_{\mathcal A_2}.
\]
This holds by observing the events 
\[
E_1 \coloneqq \bigcap_{j \in S_0} \left\{ \|\bm{b}_j-\bm{b}_j ^{0} \| _2 ^2 < \Delta_n ^2 \right\}, \quad E_2 \coloneqq \left \{\sum_{j \notin S_0} \| \bm{b}_j \|_2 ^2 < \frac{C^2 \epsilon_n ^2}{2 n^\rho} \right \}, 
\]
then on $E_1 \cap E_2,$ using \Cref{eq:frobsplit}, we have $\| \bm{B}-\bm{B}_0 \|_F ^2 < s_0 ^B \Delta_n ^2 + C^2 \epsilon_n ^2 /(2 n^{\rho})=C^2 \epsilon_n^2/n^\rho,$ so $E_1 \cap E_2 \subset \{ \|\bm{B}-\bm{B}_0\|_F < C n^{-\rho/2}\epsilon_n\}.$ Hence, $\P(\| \bm{B}-\bm{B}_0\|_F < C n^{-\rho/2} \epsilon_n) \ge \P(E_1 \cap E_2).$ Now, under the element–wise spike–and–slab prior, the coefficients $\{\beta_{jk}\}$ are independent across $(j,k),$ so the rows $\{\bm{b}_j\}$ are independent. It is trivial to see that $E_1$ and $E_2$ are independent and thus the inequality follows:
\[
\P(E_1 \cap E_2) = \P(E_1)\P(E_2) = \left( \prod_{j \in S_0} \P(\|\bm{b}_j - \bm{b}_j ^{0} \|_2 ^2 < \Delta_n ^2) \right) \cdot \P \left( \sum_{j \notin S_0} \|\bm{b}_j \|_2 ^2 < \frac{C^2 \epsilon_n ^2}{2 n^{\rho}} \right).
\]
\paragraph{Bounding $\mathcal A_2$.}
For $j\notin S_0$, the element–wise spike–slab prior implies
\(
\mathbb E[\beta_{jk}^2]=(1-\theta) \cdot (2/\lambda_0^2)+\theta \cdot (2/\lambda_1^2)
\).
By {\rm(A6)},
\begin{equation*}
\begin{split}
\mathbb E\|\bm b_j\|_2^2
& = q_n\Big((1-\theta)\frac{2}{\lambda_0^2}+\theta\frac{2}{\lambda_1^2}\Big) \\
& \le q_n \cdot \frac{2}{\lambda_0 ^2} + q_n \theta \cdot \frac{2}{\lambda_1 ^2} \\
& \le \frac{2}{C_0}\cdot \frac{n^{(-1+\rho)}}{p_n q_n} + \frac{2C_1}{\lambda_1 ^2}\cdot \frac{n^{-(1+\rho)}}{p_n q_n} \\
& \le c_0 \cdot \frac{n^{-(1+\rho)}}{p_n q_n}
\end{split}
\end{equation*}
for some constant $c_0 \coloneqq 2/C_0 + 2C_1/\lambda_1 ^2$. Hence, by Markov's inequality,
\[
\mathcal A_2
\ge 1-
\frac{(p_n-s_0^B)\,c_0\,\frac{n^{-(1+\rho)}}{p_n q_n}}{\frac{C^2\epsilon_n^2}{2 n^\rho}}
=
1-\frac{2c_0}{C^2}\cdot\frac{1}{n q_n \epsilon_n^2}\cdot\frac{p_n-s_0^B}{p_n}.
\]
With $\epsilon_n^2=(q_n s_0^B \log p_n)/n$,
\(
n q_n\epsilon_n^2=q_n^2 s_0^B\log p_n\to\infty
\)
by {\rm(A1)} and {\rm(A2)}; hence $\mathcal A_2\to 1$. Therefore,
$\mathcal A_2\ge e^{-D_0 n\epsilon_n^2}$ for some $D_0>0$ and all large $n$.

\paragraph{Bounding $\mathcal A_1$.}
For any active $(j,k)$, using only the slab part,
\[
\mathbb{P}\left(|\beta_{jk}-\beta^0_{jk}|\le \Delta\right)
\ge
\int_{\beta^0_{jk}-\Delta}^{\beta^0_{jk}+\Delta}
\theta\frac{\lambda_1}{2}e^{-\lambda_1|t|}\,dt
\ge \theta\lambda_1\Delta\,e^{-\lambda_1(|\beta^0_{jk}|+\Delta)}
\ge C_1'\,\Delta\,e^{-\lambda_1|\beta^0_{jk}|},
\]
for large $n$ (since $e^{-\lambda_1\Delta}\ge 1/2$), with $C_1'=\theta\lambda_1/2$.
Multiplying over the $s_0^B q_n$ active entries yields
\[
\mathcal A_1
\ge (C_1'\Delta)^{s_0^B q_n}\exp\!\big(-\lambda_1\|\bm B_{0,S_0}\|_1\big).
\]
Taking logs and recalling $\Delta=\Delta_n/\sqrt{q_n}$ and
\(
\Delta_n^2=\frac{C^2}{2}\cdot\frac{q_n\log p_n}{n^{1+\rho}}
\),
\[
-\log\mathcal A_1
\le s_0^B q_n\Big\{-\log C_1'-\tfrac12\log \Delta_n^2+\tfrac12\log q_n\Big\}
+\lambda_1\|\bm B_{0,S_0}\|_1.
\]
Now
\(
-\tfrac12\log \Delta_n^2+\tfrac12\log q_n
=\tfrac12\{-(\log(C^2/2))-\log\log p_n+(1+\rho)\log n\}.
\)
Using {\rm(A1)} ($\log p_n\gtrsim q_n\log n$) gives
\(
s_0^B q_n\{-\tfrac12\log \Delta_n^2+\tfrac12\log q_n\}
\lesssim s_0^B\log p_n\lesssim n\epsilon_n^2.
\)
Finally, {\rm(A4)} ensures
$\lambda_1\|\bm B_{0,S_0}\|_1=o(n\epsilon_n^2)$, so
$\mathcal A_1\ge \exp\{-D_1 n\epsilon_n^2\}$ for some $D_1>0$.
Combining the steps completes the proof with $D=D_0+D_1$.
\end{proof}

\subsection{Conditional Posterior Contraction of $\textrm{B}$}
\label{sec:proofpostconcBgivenZ}
To prove unconditional contraction for $\bm B$, we first work conditionally on the augmented latents $\bm Z_n$. 
Conditioning on $\bm Z_n$ converts the mixed-type likelihood into a standard multivariate Gaussian regression with mean $\bm B^\top \bx_i$ and precision $\Omega$, thereby side-stepping orthant probabilities. 

\begin{lemma}[Posterior conditional contraction result for $\bm{B}$]
\label{lem:postconcBgivenZ}
Under the Assumptions (A1) -- (A6),  for any arbitrary constant $M > 0$ and $\epsilon_n ^2$ as in Lemma 2, $$\sup_{\bm{B}_0} \mathbb{E}_{0}\left[ \mathbb{P} \left( \lVert \bm{B}_n-\bm{B}_0 \rVert_F > M \epsilon_n \mid \bm{Y}_n,\bm{Z}_n \right)\right] \rightarrow 0,$$ as $n \rightarrow \infty.$  
\end{lemma}
We prove \Cref{lem:postconcBgivenZ} using the standard recipe of \citet{Ghosal_2007}: we need to (1) construct a sequence of sets (sieves) of bounded complexity that essentially ``capture" the parameter space as $n \rightarrow \infty$ (see \Cref{prop:tests-prod}). (2) Then, we need to verify that the prior places enough mass (or concentration) around the true parameter within the sieves (see \Cref{prop:wang}). The sieve construction for this problem is outlined as follows.

\subsubsection*{(1) The sieve construction}
We define $S \subset \mathcal{I}=\{1,\ldots,p_n\}$ to be a set of row indices for the regression coefficients matrix. Let $S_0 \subset \{1,2,\ldots,p_n\}$ denote the set of indices of the rows in $\bm{B}_0$ with at least one non-zero entry. Observe that $|S_0|=s_0 ^{B} \ge 1$ and $s_0 ^{B} = o(n/\log p_n)$. Additionally, we denote $G_S \coloneqq n^{-1}\bm{X}_S^\top \bm{X}_S$. The sieve construction is similar to \citet{wang2023twostep}. 
Let $\epsilon_n^2=(q_n s_0^B \log p_n)/n$ and fix $\rho\in(0,b)$. We define the model index class
\[
\mathcal M
=\Bigl\{S:\ S\supset S_0,\ S\neq S_0,\ |S|\le m_n\Bigr\},
\quad
m_n:=\big\lceil K\,s_0^B\,n^\rho\big\rceil,
\]
for a constant $K>1$, and the sieve
\begin{equation}\label{eq:sieve}
\mathcal C_n
=\bigcup_{S\in\mathcal M}
\Bigl\{\ \|\widehat{\bm B}^{\,S}-\bm B_0^S\|_F>\tfrac12\,\delta\,\epsilon_n\ \Bigr\},
\end{equation}
where $\widehat{\bm B}^{\,S}$ is the MLE in the Gaussian regression of $\bm Z_n$ on $\bm{X}_S$
with precision $\Omega$.

\begin{proposition}[Exponentially consistent tests on $\mathcal{C}_n$]\label{prop:tests-prod}
Following the arguments in \citet{wang2023twostep}, there exist constants $C_1,C_2>0$ such that, with a test function
$\Phi_n=\mathbbm{1}\{\bm Z_n\in\mathcal C_n\}$,
\[
\mathbb E_{\bm B_0}[\Phi_n]\ \le\ e^{-C_1 n\epsilon_n^2},
\quad
\sup_{\|\bm B-\bm B_0\|_F>\delta\,\epsilon_n}\ \mathbb E_{\bm B}(1-\Phi_n)\ \le\ e^{-C_2 n\epsilon_n^2}.
\]
for an arbitrary $\delta>0.$
\end{proposition}

\begin{proof}
Conditionally on $(\bm{X}_n,\Omega)$,
\[
\mathrm{vec}\big(\widehat{\bm B}^{\,S}-\bm B_0^S\big)
\sim \mathcal N_{p^S q_n}\!\Big(0,\ \frac1n\,G_S^{-1}\otimes \Sigma\Big),
\]
so with $d=p^S q_n$ and $U\sim\mathcal N_d(0,I_d)$,
\[
\|\widehat{\bm B}^{\,S}-\bm B_0^S\|_F^2
=\frac{1}{n}\,U^\top(G_S^{-1}\otimes\Sigma)\,U.
\]
From the eigenvalue bounds,
\(
\underline c\,\chi^2_d/n\ \le\ \|\widehat{\bm B}^{\,S}-\bm B_0^S\|_F^2
\ \le\ \overline c\,\chi^2_d/n
\),
with $\underline c=1/(\bar\tau k_1)$ and $\overline c=k_1/\underline\tau$.

\emph{Type I error.} We fix $S\in\mathcal M$ and apply the inequality in \citet[Lemma 1]{LaurentMassart2000}:
$\mathbb P(\chi^2_d>d+2\sqrt{dx}+2x)\le e^{-x}$ for any $x>0$.
Now, we set $x=c_2 n\epsilon_n^2$. Because $d\le m_n q_n\le K s_0^B n^\rho q_n$ while
$n\epsilon_n^2=s_0^B q_n\log p_n$, and $\rho<b$ with {\rm(A1)}, it follows that
\[
\frac{\overline c}{n}\big(d+2\sqrt{dx}+2x\big)
=o(\epsilon_n^2)+2\overline c c_2\epsilon_n^2
\ \le\ \frac{\delta^2}{4}\epsilon_n^2
\]
for large $n$ with $c_2\le \delta^2/(16\overline c)$, hence
\(
\mathbb P_{\bm B_0}\big(\|\widehat{\bm B}^{\,S}-\bm B_0^S\|_F^2>\tfrac14\delta^2\epsilon_n^2\big)
\le e^{-c_2 n\epsilon_n^2}.
\)
A union bound over $S\in\mathcal M$ yields
\[
\mathbb E_{\bm B_0}[\Phi_n]\le |\mathcal M|\,e^{-c_2 n\epsilon_n^2}.
\]
The combinatorial bound
$|\mathcal M|\le \sum_{k=s_0^B+1}^{m_n}\binom{p_n}{k}
\le \exp\{m_n\log(p_n e/m_n)\}$ with $m_n=K s_0^B n^\rho$
gives $\log|\mathcal M|=o(n\epsilon_n^2)$ since $q_n=n^b\gg n^\rho$ and
$n\epsilon_n^2=s_0^B q_n\log p_n$.
Thus $\mathbb E_{\bm B_0}[\Phi_n]\le e^{-C_1 n\epsilon_n^2}$.

\emph{Type II error.}
Fix $\delta>0$ and take any parameter $\bm B$ with
$\|\bm B-\bm B_0\|_F>\delta\,\epsilon_n$.
We write the rowwise squared deviations
$r_j:=\|\bm b_j-\bm b_j^0\|_2^2$,
and split the indices outside $S_0$ into
\[
S_\Delta:=\{j\notin S_0:\ r_j\ge \Delta_n^2\},\qquad
U:=\{j\notin S_0:\ r_j<\Delta_n^2\}.
\]
We set
\(
T:=\sum_{j\in S_\Delta} r_j,\quad
M_{\mathrm{small}}:=\sum_{j\in U} r_j,
\quad
s:=m_n-s_0^B=\lfloor K s_0^B n^\rho\rfloor,
\)
with $K>0$ to be fixed below.
Our idea is to produce a set $S\in\mathcal M$ for which
\begin{equation}\label{eq:signal-lb}
\|\bm B^{S}-\bm B_0^{S}\|_F \ \ge\ c_\star\,\delta\,\epsilon_n
\end{equation}
with a constant $c_\star>1/2$, uniformly over all such $\bm B$.
This will yield an exponentially small upper bound for the Type II error.

\medskip
\noindent\textit{Case 1: $T\ge \tfrac12\,\delta^2\epsilon_n^2$.}
Let $S$ be $S_0$ together with the $s$ indices from $S_\Delta$ having the largest $r_j$ values
(or all of $S_\Delta$ if $|S_\Delta|<s$).
Because every $j\in S_\Delta$ satisfies $r_j\ge \Delta_n^2$,
\[
\|\bm B^{S}-\bm B_0^{S}\|_F^2
\ \ge\
\sum_{j\in S\setminus S_0} r_j
\ \ge\
\min\big\{T,\ s\,\Delta_n^2\big\}.
\]
Choose $K\ge \delta^2/C^2$. Then
\[
s\,\Delta_n^2
= (m_n-s_0^B)\,\Delta_n^2
\ \ge\ K s_0^B n^\rho\cdot \frac{C^2\epsilon_n^2}{2 s_0^B n^\rho}
\ =\ \frac{K C^2}{2}\,\epsilon_n^2
\ \ge\ \frac12\,\delta^2\epsilon_n^2.
\]
Hence $\|\bm B^{S}-\bm B_0^{S}\|_F^2\ge \tfrac12\,\delta^2\epsilon_n^2$, i.e.
\begin{equation}\label{eq:case1}
\|\bm B^{S}-\bm B_0^{S}\|_F\ \ge\ \frac{\delta}{\sqrt{2}}\,\epsilon_n.
\end{equation}

\noindent\textit{Case 2: $T< \tfrac12\,\delta^2\epsilon_n^2$.}
We split this case depending on the size of $M_{\mathrm{small}}$.

If $M_{\mathrm{small}}<\Delta_n^2$,
then
\[
\sum_{j\in S_0} r_j
\ =\
\|\bm B-\bm B_0\|_F^2 - \sum_{j\notin S_0} r_j
\ \ge\
\delta^2\epsilon_n^2 - (T+M_{\mathrm{small}})
\ >\
\frac12\,\delta^2\epsilon_n^2 - \Delta_n^2.
\]
Fix any $\eta\in(0,1/4)$.
Since $\epsilon_n^2/\Delta_n^2\to\infty$ (because $\Delta_n^2\asymp \epsilon_n^2/(s_0^B n^\rho)$ and $s_0^B n^\rho\to\infty$), for all large $n$ we have
$\Delta_n^2\le \eta\,\delta^2\epsilon_n^2$. Therefore,
\[
\sum_{j\in S_0} r_j\ \ge\ \big(\tfrac12-\eta\big)\,\delta^2\epsilon_n^2.
\]
Taking $S=S_0$ (which is in $\mathcal M$ since $|S_0|=s_0^B\le m_n$),
\begin{equation}\label{eq:case2a}
\|\bm B^{S}-\bm B_0^{S}\|_F
\ \ge\ \sqrt{\tfrac12-\eta}\,\delta\,\epsilon_n.
\end{equation}
Again, if $M_{\mathrm{small}}\ge \Delta_n^2$, let $r_{(1)}\ge r_{(2)}\ge\cdots$ be the nonincreasing rearrangement of $\{r_j:j\in U\}$,
and set $\ell:=\big\lceil M_{\mathrm{small}}/\Delta_n^2\big\rceil$.
Since each $r_j<\Delta_n^2$, at least $\ell$ terms are needed to reach $M_{\mathrm{small}}$,
so $\sum_{j=1}^{\ell} r_{(j)}\ge M_{\mathrm{small}}$ and $\ell\le 2M_{\mathrm{small}}/\Delta_n^2$.
The average of the top $s$ is at least the average of the top $\ell$; hence
\[
\sum_{j=1}^{s} r_{(j)}
\ \ge\ \frac{s}{\ell}\sum_{j=1}^{\ell} r_{(j)}
\ \ge\ \frac{s}{\ell}\,M_{\mathrm{small}}
\ \ge\ \frac{s}{2}\,\Delta_n^2.
\]
Take $S$ to be $S_0$ together with these $s$ indices from $U$.
As in Case~1, with $K\ge \delta^2/C^2$ we get
\[
\|\bm B^{S}-\bm B_0^{S}\|_F^2
\ \ge\ \frac{s}{2}\,\Delta_n^2
\ \ge\ \frac{K C^2}{4}\,\epsilon_n^2
\ \ge\ \frac12\,\delta^2\epsilon_n^2,
\]
so
\begin{equation}\label{eq:case2b}
\|\bm B^{S}-\bm B_0^{S}\|_F
\ \ge\ \frac{\delta}{\sqrt{2}}\,\epsilon_n.
\end{equation}

Combining \eqref{eq:case1}, \eqref{eq:case2a}, \eqref{eq:case2b}, we have produced an $S\in\mathcal M$ for which
\[
\|\bm B^{S}-\bm B_0^{S}\|_F
 \ge c_\star\delta\epsilon_n,
\quad
c_\star:=\min\Big\{\tfrac{1}{\sqrt2},\sqrt{\tfrac12-\eta}\Big\}
>\tfrac12,
\]
for all sufficiently large $n$.

By definition of the test $\Phi_n$, the event $\{1-\Phi_n=1\}$ implies
$\|\widehat{\bm B}^{\,S}-\bm B_0^{S}\|_F\le \tfrac12\,\delta\,\epsilon_n$
for \emph{every} $S\in\mathcal M$, and in particular for the $S$ just constructed.
Hence, by the triangle inequality,
\[
\|\widehat{\bm B}^{\,S}-\bm B^{S}\|_F
 \ge
\|\bm B^{S}-\bm B_0^{S}\|_F-\|\widehat{\bm B}^{\,S}-\bm B_0^{S}\|_F
 \ge (c_\star-\tfrac12)\,\delta\,\epsilon_n
 \ge \frac{\delta}{8}\,\epsilon_n,
\]
for all large $n$.

Finally, conditionally on $(X_n,\Omega)$,
\[
\mathrm{vec}\big(\widehat{\bm B}^{\,S}-\bm B^{S}\big)
\sim \mathcal N_{d}\Big(0,\ \frac1n\,G_S^{-1}\otimes \Sigma\Big),
\quad d=p^{S}q_n,
\]
so by the eigenvalue bounds in {\rm(A3)}–{\rm(A5)},
\[
\frac{\underline c}{n}\,\chi^2_{d}
\ \le\ \|\widehat{\bm B}^{\,S}-\bm B^{S}\|_F^2
\ \le\ \frac{\overline c}{n}\,\chi^2_{d},
\qquad
\underline c=\frac{1}{\bar\tau k_1},\ \overline c=\frac{k_1}{\underline\tau}.
\]
Using Laurent–Massart with $x=\tilde{c}n\epsilon_n^2$
for $\tilde{c}:=\delta^2/(128\,\overline c)$ and
$d\le m_n q_n\lesssim s_0^B n^\rho q_n=o(n\epsilon_n^2)$
(by {\rm(A1)} and $\rho<b$), we obtain
\[
\mathbb P_{\bm B}\!\left(\|\widehat{\bm B}^{\,S}-\bm B^{S}\|_F\ge \frac{\delta}{8}\,\epsilon_n\right)
 \le \exp \big\{-\tilde{c}n\epsilon_n^2\big\}.
\]
Because $\{1-\Phi_n=1\}\subseteq\{\|\widehat{\bm B}^{\,S}-\bm B^{S}\|_F\ge \delta\epsilon_n/8\}$,
\[
\mathbb P_{\bm B}(1-\Phi_n)\ \le\ \exp\!\big(-C_2\,n\epsilon_n^2\big),
\qquad C_2:=c_\sharp>0.
\]
Since $\Phi_n$ is an indicator, $\mathbb E_{\bm B}(1-\Phi_n)=\mathbb P_{\bm B}(1-\Phi_n)$,
and the same bound holds for the expectation. This completes the Type--II error bound.

\end{proof}
\subsubsection*{(2) Prior mass and denominator control}
It remains to control the posterior denominator via a localized prior mass lower bound. Thus, we only need to prove the following proposition to complete the proof of \Cref{lem:postconcBgivenZ}. 
% \begin{proposition}\label{prop:wang}
% Under the assumptions (A1)-(A6), for the SSL prior $p(\bm{B}_n)$ satisfying Lemma 2, we have that for any arbitrary $\delta > 0$ and $\epsilon_n ^2 = n^{-1}\{\max(s_0 ^{B}, q_n)\times \log p_n\}$, 
% \begin{equation*}
%     \sup_{\bm{B}_0} \mathbb{E}_{\bm{B}_0} \P \left( \lVert \bm{B}_n-\bm{B}_0 \rVert_F > \delta \epsilon_n | \bm{Z}_n \right) \rightarrow 0
% \end{equation*}
% as $n \rightarrow \infty$.
% \end{proposition}
\begin{proposition}\label{prop:wang}
Let $\epsilon_n^2=(q_n s_0^B \log p_n)/n$.
Under {\rm(A1)}–{\rm(A6)}, for any fixed $M>0$,
\[
\sup_{\bm B_0}\;
\mathbb E_{\bm B_0}\Big[
\mathbb{P}\big(\|\bm B-\bm B_0\|_F>M\epsilon_n\ \big|\ \bm Y_n,\bm Z_n\big)
\Big]\ \longrightarrow\ 0.
\]
\end{proposition}

\begin{proof}
Conditionally on $(\bm X_n,\bm Z_n)$ the model is a $q_n$–variate Gaussian regression
with mean $\bm B^\top \bm x_i$ and precision $\Omega$.
For $\Delta=\bm B-\bm B_0$ and $G:=n^{-1}\bm X^\top \bm X$ we set
\[
Q_n(\Delta)
=\sum_{i=1}^n \Delta_i^\top \Omega\,\Delta_i
=\mathrm{tr}\big(\Omega\,\Delta^\top \bm X^\top \bm X \Delta\big)
=n\,\mathrm{tr}\big(\Omega\,\Delta^\top G\,\Delta\big).
\]
By {\rm(A3)}–{\rm(A5)}, there are constants
\(
\underline c:=\underline\tau/k_1
\)
and
\(
\overline c:=k_1\bar\tau
\)
such that
\begin{equation}\label{eq:Q-two-sided}
n\,\underline c\,\|\Delta\|_F^2
\;\le\; Q_n(\Delta)\;\le\; n\,\overline c\,\|\Delta\|_F^2.
\end{equation}
The log–likelihood ratio between $p(\bm Z_n\mid \bm B)$ and $p(\bm Z_n\mid \bm B_0)$ is
\[
L(\bm B)
:=\log\frac{p(\bm Z_n\mid \bm B)}{p(\bm Z_n\mid \bm B_0)}
=\sum_{i=1}^n \bm\varepsilon_i^\top \Omega\,\Delta_i-\frac12\,Q_n(\Delta)
=:S_n(\Delta)-\tfrac12 Q_n(\Delta),
\]
where $\bm\varepsilon_i=\bm z_i-\bm B_0^\top \bm x_i\sim\mathcal N_{q_n}(\bm 0,\Omega^{-1})$, conditionally on $\bm X_n$. Hence $S_n(\Delta)\mid \bm X_n\sim\mathcal N\big(0,Q_n(\Delta)\big)$.

Letting $\mathcal B_M=\{\|\bm B-\bm B_0\|_F>M\epsilon_n\}$ and using the same sieve $\mathcal C_n$ and the test $\Phi_n=\mathbf 1\{\bm Z_n\in\mathcal C_n\}$ as in \Cref{prop:tests-prod}.
That gives us constants $C_1,C_2>0$ (independent of $n$ and $\bm B_0$) such that
\begin{equation}\label{eq:test-bounds}
\mathbb E_{\bm B_0}[\Phi_n]\;\le\;e^{-C_1 n\epsilon_n^2},
\quad
\int_{\mathcal B_M}\mathbb P_{\bm B}(1-\Phi_n)\,d\Pi(\bm B)
\;\le\; e^{-C_2 n\epsilon_n^2}.
\end{equation}

For Gaussian regression with common precision $\Omega$,
the per–sample KL divergence and its variance proxy satisfy
\begin{equation}\label{eq:KL-bounds}
K\bigl(p_{\bm B_0},p_{\bm B}\bigr)
=\frac{1}{2n}\,Q_n(\Delta),
\quad
V\bigl(p_{\bm B_0},p_{\bm B}\bigr)
\lesssim \frac{1}{n}\,Q_n(\Delta),
\end{equation}
hence, by \eqref{eq:Q-two-sided},
\(
K \le \frac{\overline c}{2}\|\Delta\|_F^2,
\quad
V \le C_V \|\Delta\|_F^2
\quad\text{for some }C_V>0.
\)
Fixing a small constant $\kappa'>0$, we  define the KL–ball 
\[
\mathcal K_n
=\Bigl\{\bm B:\ \|\bm B-\bm B_0\|_F \le r_n\Bigr\},
\quad
r_n^2:=\kappa'\,\frac{\epsilon_n^2}{n}.
\]
Then by \eqref{eq:KL-bounds},
\[
\sup_{\bm B\in\mathcal K_n}
\Bigl\{\,K\bigl(p_{\bm B_0},p_{\bm B}\bigr)
\ \vee\ V\bigl(p_{\bm B_0},p_{\bm B}\bigr)\Bigr\}
 \le C_{\mathrm{KL}}\epsilon_n^2
\quad\text{with }C_{\mathrm{KL}}:=\max\{\tfrac{\overline c}{2},\,C_V\}\kappa'.
\]

Repeating the prior mass argument used in the proof of Lemma 2,
but now with $\Delta\asymp r_n/\sqrt{q_n}$ instead of $n^{-\rho/2}\epsilon_n/\sqrt{q_n}$,
gives the same multiplicative bound
\(
\Pi(\mathcal K_n) \ge \exp\{-D n\epsilon_n^2\}
\)
for some $D>0$.
The only change is an extra $\tfrac12\log n$ term inside
$s_0^Bq_n\,[-\log\Delta]$, which is absorbed by (A1) since
$s_0^Bq_n\log n \lesssim s_0^B\log p_n = n\epsilon_n^2$.

By the \emph{denominator lemma} of \citet[]{Ghosal2000},
we have that there exists $C_0>0$ such that
\begin{equation}\label{eq:denom-lower}
\mathbb P_{\bm B_0}\!\left(
J_n:=\int \frac{p(\bm Z_n\mid \bm B)}{p(\bm Z_n\mid \bm B_0)}\,d\Pi(\bm B)
\ \ge\ \exp\{-(D+C_0)\,n\,\epsilon_n^2\}
\right) \ge 1-\exp(-cn\epsilon_n ^2),
\end{equation}
uniformly in $\bm B_0$.
Equivalently, for any fixed $\psi\in(0,1)$ and all large $n$,
\begin{equation}\label{eq:denom-prob}
\mathbb P_{\bm B_0}\!\left(
J_n \le \psi  e^{-(D+C_0)n \epsilon_n^2}
\right) \le \mathbb P_{\bm B_0}\!\left(
J_n \le e^{-(D+C_0)n \epsilon_n^2}
\right) \le e^{-c\,n\,\epsilon_n^2}
\end{equation}
for some $c>0$.

By definition,
\[
\Pi(\mathcal B_M\mid \bm Z_n)
=\frac{J_{\mathcal B_M}}{J_n}
\le \Phi_n + (1-\Phi_n)\,\frac{J_{\mathcal B_M}}{J_n},
\qquad
J_{\mathcal B_M}
=\int_{\mathcal B_M}\frac{p(\bm Z_n\mid \bm B)}{p(\bm Z_n\mid \bm B_0)}\,d\Pi(\bm B).
\]
Taking $\mathbb E_{\bm B_0}$ and splitting on the event $\{J_n\ge \psi e^{-(D+C_0)n\epsilon_n^2}\}$,
\begin{align*}
\mathbb E_{\bm B_0}\big[\Pi(\mathcal B_M\mid \bm Z_n)\big]
&\le \mathbb E_{\bm B_0}[\Phi_n]
+
\mathbb E_{\bm B_0}\!\left[
(1-\Phi_n)J_{\mathcal B_M}\,
\mathbbm{1}\!\left\{J_n\ge \psi e^{-(D+C_0)n\epsilon_n^2}\right\}
\cdot \frac{e^{(D+C_0)n\epsilon_n^2}}{\psi}
\right]\\
&\hspace{1.2cm}
+ \mathbb P_{\bm B_0}\!\left(J_n< \psi e^{-(D+C_0)n\epsilon_n^2}\right).
\end{align*}
By \eqref{eq:test-bounds} and \eqref{eq:denom-prob},
\[
\mathbb E_{\bm B_0}[\Phi_n]\le e^{-C_1 n\epsilon_n^2},
\quad
\mathbb P_{\bm B_0}\!\left(J_n< \psi e^{-(D+C_0)n\epsilon_n^2}\right)\le e^{-c n\epsilon_n^2}.
\]
Moreover, using Fubini-Tonelli,
\(
\mathbb E_{\bm B_0}\big[(1-\Phi_n)J_{\mathcal B_M}\big]
=\int_{\mathcal B_M}\mathbb P_{\bm B}(1-\Phi_n)\,d\Pi(\bm B)
\)
and the Type–II bound in \eqref{eq:test-bounds},
\[
\mathbb E_{\bm B_0}\!\left[
(1-\Phi_n)J_{\mathcal B_M}\,
\mathbbm{1}\!\left\{J_n\ge \psi e^{-(D+C_0)n\epsilon_n^2}\right\}
\right]
\;\le\;
\mathbb E_{\bm B_0}\big[(1-\Phi_n)J_{\mathcal B_M}\big]
\;\le\; e^{-C_2 n\epsilon_n^2}.
\]
Putting the pieces together,
\[
\mathbb E_{\bm B_0}\big[\Pi(\mathcal B_M\mid \bm Z_n)\big]
\;\le\;
e^{-C_1 n\epsilon_n^2}
\;+\;
\psi^{-1}\,e^{(D+C_0)n\epsilon_n^2}\,e^{-C_2 n\epsilon_n^2}
\;+\;
e^{-c n\epsilon_n^2}.
\]
Choosing $\psi\in(0,1)$ fixed and noting that all constants are independent of $n$, we may pick the constant in the KL–ball radius $\kappa'$ so that $D+C_0<C_2$.
Then the middle term decays exponentially, and the RHS $\to 0$ as $n\to\infty$, \emph{uniformly} in $\bm B_0$. This proves the proposition.
\end{proof}

\medskip

\subsubsection*{Removing conditioning via ``good--sets"}
The final ingredient of the proof recipe for \switchref{\Cref{thm:unconditional}}{Theorem 1} is by proving \Cref{lem:goodset}, showing that the effective support of $\bm{Z}_n$ is $\mathcal{Z}_n$. We remove the conditioning on $\bZ_{n}$ using a conventional ``good--set" argument \citep[Chapter 1]{Ash2000}.
To elaborate, we let $\bmu_{i} = \bm{B}^\top \bx_i \in \R^{q_n}$ denote the mean of the latent variables for each $i=1,\dots,n$ and define the ``good--set''
\begin{equation}
\label{eq:goodset}
  \mathcal{Z}_n
  =
  \Bigl\{
    (z_{ik}) 
    :
      \max_{\substack{i,k:\, y_{ik}\text{ binary}}}\,
        \bigl|z_{ik}-\mu_{ik}\bigr| \le C_0\,\sqrt{\log n}
      \ \ \text{and}\
      \max_{\substack{i,k:\, y_{ik}\text{ continuous}}}\,
        |z_{ik}| \le M_0
  \Bigr\},
\end{equation}
where $C_0>0$ is suitably large, and $M_0>0$ is a uniform bound for $y_{ik}$ if the response is continuous. 
Informally, $\mathcal{Z}_n$ excludes extreme or pathological
latent draws that deviate too far from their means for binary responses (i.e., $|z_{ik}-\mu_{ik}| > 
C_0\sqrt{\log n}$) or exceed a suitable bound for continuous $z_{ik} = y_{ik}$. 

\Cref{lem:goodset} crucially shows that almost all posterior probability resides in $\mathcal{Z}_{n}$.
\begin{lemma}[High‐probability lemma for good sets]
\label{lem:goodset}
Under Assumptions (A1)–(A5), there exists a sequence $\delta_{n} \rightarrow 0$ such that $\P(\bm{Z}_{n} \notin \mathcal{Z}_{n} \mid \bY_{n}) \leq \delta_{n}$ with high $\P_{0}$-probability.
\end{lemma}

\begin{proof}[Proof of \Cref{lem:goodset}]

We treat continuous and binary responses separately.
For every continuous coordinate, the model enforces
\(z_{ik}=y_{ik}\) deterministically.  
Because each $y_{ik}$ is assumed to be sub–Gaussian (Assumptions (A3)–(A4)), the diagonal sub–Gaussian tail bound gives
\(
   \mathbb{P}\bigl(|y_{ik}|>M_0\bigr)
   \le
   2\exp\bigl\{-{M_0^{\,2}}/{2\sigma_{\max}^{2}}\bigr\},
\)
where \(\sigma_{\max}^{2}<\infty\) is an upper bound on the conditional variance.  
Choosing \(M_0\ge 4\sigma_{\max}\sqrt{\log n}\) makes this probability at most \(2n^{-8}\).  
A union bound over at most \(nq_c\le n^{3/2}\) indices (Assumption (A1): \(q_c\le q_n\le n^{b}\) with \(b<1/2\)) shows that
\[
\P  \bigl(\exists (i,k\le q_c):|z_{ik}|>M_0\bigr)
\le nq_c \cdot 2n^{-8} \le 2n^{-5},
\]
so the continuous part already satisfies the desired bound.
Conditional on \(\bm Y_n,\bm X_n\) and the parameters, each latent
\(z_{ik}\;(k>q_c)\) is a \emph{univariate truncated normal}, standard Mills–ratio bounds for a truncated normal imply that for any \(t>0\), 
\(
   \P \ \bigl(|z_{ik}-\theta_{ik}|>t\mid\bm Y_n\bigr)
   \;\le\; 2e^{-t^{2}/2}.
\)
Set \(t=C_0\sqrt{\log n}\).  Then
\[
   \P  \bigl(|z_{ik}-\theta_{ik}|>C_0\sqrt{\log n}\mid\bm Y_n\bigr)
   \;\le\; 2n^{-\,C_0^{2}/2}.
\]
Choose \(C_0\) so that
\(c:=C_0^{2}/2>b+2\).  With this choice
\(2n^{-c}\le 2n^{-2}\).
There are \(nq_b\le n q_n \le n^{1+b}\) binary latents.  
By the union bound,
\[
  \P \ \!\bigl(
        \exists(i,k>q_c):|z_{ik}-\theta_{ik}|>C_0\sqrt{\log n}
        \,\big|\,\bm Y_n
       \bigr)
  \;\le\;
  2n^{-c}\,\bigl(nq_n\bigr)
  \;\le\;
  2n^{-c+b+1}.
\]
The exponent \(-c+b+1<-1\); hence the right‐hand side is \(o(n^{-1})\).
Define
\(
  \delta_n := 2n^{-5} + 2n^{-\,c+b+1},
\)
which satisfies \(\delta_n\to0\).  
Combining the two parts,
\(
  \P \bigl(\bm Z_n\notin\mathcal Z_n\mid\bm Y_n\bigr)\le\delta_n
\)
for all sufficiently large \(n\).  This proves the lemma.

\end{proof}

\subsection{Proof of Theorem 1}
\label{sec:proofthmunconditional}
Combining all the previous lemmas, we now present the proof of \switchref{\Cref{thm:unconditional}}{Theorem 1}.

\begin{proof}
By \Cref{lem:postconcBgivenZ}, we know that for any $(\bm{Y}_n,\bm{Z}_n)$ in the \emph{good set} $\mathcal{Z}_n$, 
the posterior mass outside $\{\|\bm{B}_n-\bm{B}_0\|_F \le M\,\epsilon_n\}$ 
is at most some small $\delta_n'$.  
On the complement $\mathcal{Z}_n^c$, we do not control that posterior mass, 
but from Lemma~\ref{lem:goodset} we know 
$\mathbb{P}(\bm{Z}_n\notin \mathcal{Z}_n \mid \bm{Y}_n)\le \delta_n.$

Hence, unconditionally,
\[
\begin{aligned}
  \mathbb{P}\bigl(\|\bm{B}_n-\bm{B}_0\|_F > M\,\epsilon_n 
    \,\big|\; \bm{Y}_n\bigr)
  &= \int 
    \mathbb{P}\bigl(\|\bm{B}_n-\bm{B}_0\|_F > M\,\epsilon_n 
            \,\big|\;\bm{Y}_n,\bm{Z}_n\bigr)\;
    \mathbb{P}(\bm{Z}_n\mid \bm{Y}_n)\;
    d\bm{Z}_n
  \\[6pt]
  &\le \int_{\mathcal{Z}_n} \delta_n'\,\mathbb{P}(\bm{Z}_n\mid \bm{Y}_n)\,d\bm{Z}_n
       \;+\;
       \int_{\mathcal{Z}_n^c} 1 \,\times\,\mathbb{P}(\bm{Z}_n\mid \bm{Y}_n)\,d\bm{Z}_n
  \\[6pt]
  &\le \delta_n' + \delta_n.
\end{aligned}
\]
Because $\delta_n'\to 0$ and $\delta_n\to 0$ as $n\to\infty$, it follows that
\[
  \mathbb{P}\bigl(\|\bm{B}_n-\bm{B}_0\|_F > M\,\epsilon_n 
    \,\big|\; \bm{Y}_n\bigr)
  \;\longrightarrow\;0.
\]
Finally, taking the supremum over $\bm{B}_0$ and the expectation $\mathbb{E}_{\bm{B}_0}[\cdots]$
does not alter this limiting behavior, giving
\[
  \sup_{\bm{B}_0}\,
  \mathbb{E}_{\bm{B}_0}\Bigl[
   \mathbb{P}\Bigl(\|\bm{B}_n-\bm{B}_0\|_F > M\,\epsilon_n 
     \,\Big|\;\bm{Y}_n\Bigr)
  \Bigr]
  \;\to\;0.
\]
This completes the proof.    
\end{proof}

\subsection{Proof of Theorem 2}\label{sec:proofthm3}
The key to this proof lies in proving the following lemma.
\begin{lemma}[Conditional posterior contraction for $\Omega$]\label{lem:postconcomega}
Under the SSL prior configuration $p(\Omega)$ the posterior distribution of $\Omega$ satisfies for a sufficiently large $M>0$,
$$\sup_{\Omega_0}\mathbb{E}_{\Omega_0} \left[ \mathbb{P} \left( \lVert \Omega - \Omega_0 \rVert_{F} > M \tilde{\epsilon}_n \mid \bm{Y}_n,\bm{Z}_n \right) \right] \longrightarrow 0 $$    
\end{lemma}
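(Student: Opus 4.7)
The plan is to apply the standard sieve-plus-testing framework of Ghosal and van der Vaart, as specialized to spike-and-slab priors on sparse precision matrices by \citet{Banerjee2015}. Because we condition on both $\bm Y_n$ and $\bm Z_n$, and because the posterior over $\bm B$ already contracts at rate $\epsilon_n$ around $\bm B_0$ by \Cref{lem:postconcBgivenZ}, after profiling out $\bm B$ the conditional problem reduces to posterior contraction for $\Omega$ in a Gaussian graphical model with $n$ quasi-observations having residual sum-of-squares matrix approximately $n\Omega_0^{-1}$. I would therefore verify three ingredients: (i) a prior concentration bound around $\Omega_0$ of the required order, (ii) a sieve $\Theta_n\subset \mathcal M_{q_n}^+(k_1)$ whose complement has exponentially small prior mass, and (iii) exponentially powerful tests separating $\Omega_0$ from Frobenius-balls of radius $M\tilde\epsilon_n$ inside the sieve.

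For the prior mass condition, I would show
\[
\Pi\bigl(\{\Omega:\ \|\Omega-\Omega_0\|_F < \tilde\epsilon_n\}\bigr)\ \geq\ \exp\bigl\{-c_1 n\tilde\epsilon_n^2\bigr\},
\]
by decomposing the Frobenius ball into contributions from (a) the $q_n$ diagonal entries, handled by the $\mathrm{Exponential}(\xi_1)$ prior truncated to $\mathcal M_{q_n}^+(k_1)$, (b) the $s_0^\Omega$ truly non-zero off-diagonals, where each slab $\mathrm{Laplace}(\xi_1)$ integral over a ball of radius $\asymp \tilde\epsilon_n/\sqrt{s_0^\Omega}$ contributes at least a polynomial factor, and (c) the remaining $\binom{q_n}{2}-s_0^\Omega$ zero off-diagonals, where the spike $\mathrm{Laplace}(\xi_0)$ combined with the small slab weight $\eta < q_n^{-a}$ from Assumption (B2) controls the mass through a Markov-type bound. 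Assumption (B1), $n^{-1}(q_n+s_0^\Omega)\log q_n = o(1)$, guarantees the exponent remains of order $n\tilde\epsilon_n^2$. Next, I would define
\[
\Theta_n = \Bigl\{\Omega\in\mathcal M_{q_n}^+(2k_1):\ \#\{(k,k'):\ k<k',\ \omega_{k,k'}\neq 0\}\leq \bar s_n\Bigr\},
\]
with $\bar s_n \asymp s_0^\Omega + n\tilde\epsilon_n^2/\log q_n$, and verify $\Pi(\Theta_n^c)\leq \exp(-c_2 n\tilde\epsilon_n^2)$ through a binomial tail bound on the number of slab-drawn entries, again leveraging (B2), together with a spectral-truncation argument to discard draws outside $\mathcal M_{q_n}^+(2k_1)$. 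On $\Theta_n$ the eigenvalues are uniformly bounded, so Frobenius, Hellinger and Kullback--Leibler distances are equivalent up to multiplicative constants, and standard likelihood-ratio tests deliver $\Psi_n$ with both error types bounded by $\exp(-c_3 n\tilde\epsilon_n^2)$ uniformly over $\{\Omega\in\Theta_n:\ \|\Omega-\Omega_0\|_F>M\tilde\epsilon_n\}$ provided $M$ is taken sufficiently large. Combining the three ingredients in the standard way bounds the conditional posterior mass outside the Frobenius ball of radius $M\tilde\epsilon_n$ by a sum of three terms each of order $\exp(-c n\tilde\epsilon_n^2)$, which gives the conclusion after integrating against $\bm Y_n$.

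The main obstacle lies in the profile/marginal likelihood step for $\Omega$ after eliminating the nuisance parameter $\bm B$. Under (A1) we can have $\log p_n \gg \log q_n$, so the $\bm B$-contraction rate $\epsilon_n$ from \Cref{lem:postconcBgivenZ} is not automatically dominated by $\tilde\epsilon_n$, and the residual sum-of-squares matrix $\sum_i(\bm z_i-\bm B^\top\bm x_i)(\bm z_i-\bm B^\top\bm x_i)^\top$ fluctuates about $n\Omega_0^{-1}$ by an amount that depends on how close $\bm B$ is to $\bm B_0$. The remedy is to restrict attention to the high-probability event from \Cref{lem:postconcBgivenZ}, on which (using Assumption (A3)) the residual matrix differs from its $\bm B = \bm B_0$ counterpart by an operator-norm term of order $n\epsilon_n^2\bar\tau^2$; this deviation is absorbed into constants in the likelihood-ratio comparisons for both the prior concentration and the testing step, while a Wishart concentration inequality controls the remaining Gaussian fluctuation. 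The bookkeeping here parallels the good-set argument used to deduce \Cref{thm:unconditional} from \Cref{lem:postconcBgivenZ} and \Cref{lem:goodset}.
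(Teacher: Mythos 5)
Your proposal is correct and follows essentially the same route as the paper: the standard Ghosal--van der Vaart recipe of (i) prior concentration around $\Omega_0$, (ii) a sieve of sparse, spectrally bounded precision matrices with exponentially small complement prior mass, and (iii) an entropy/testing bound, with the rate $\tilde\epsilon_n^2 = n^{-1}(q_n+s_0^{\Omega})\log q_n$ emerging from balancing the prior-mass exponent (the paper's Proposition~\ref{prop:omegacontraction} carries out exactly these three steps, using a sieve defined by a thresholded count of large off-diagonals and an $\ell_\infty$ bound rather than your sparsity-plus-spectral-cap version, a cosmetic difference). If anything, you are more careful than the paper about eliminating the nuisance parameter $\bm B$: the paper simply declares $\bm B$ fixed ``WLOG'' and treats $\bm Z_n$ as centered Gaussian, whereas your restriction to the high-probability event from \Cref{lem:postconcBgivenZ}, on which the residual sum-of-squares matrix deviates from its $\bm B=\bm B_0$ counterpart by a controlled operator-norm term, is the honest way to justify that reduction.
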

\begin{proof}
Let $\bm Z_n=(\bz_1,\dots,\bz_n)^\top$ denote the latent Gaussian draws, and define residuals $\br_i=\bz_i-\bm B_0^\top\bx_i$. Conditionally on $\bm Z_n$ (equivalently on $\{\br_i\}$), inference for $\Omega$ reduces to i.i.d.\ Gaussian precision learning with $\br_i\stackrel{\text{iid}}{\sim}\mathcal N_{q_n}(\bm 0,\Sigma_0)$ and $\Omega_0=\Sigma_0^{-1}$. The proof thereafter follows the standard trajectory of showing posterior contraction using the \citet{Ghosal2000} conditions. 
Verifying the KL condition and the sieve construction closely follows the proof technique in \citet[Section 7.1]{KNHS}. Note that since $\eta$ in assumption (B2) depends on $n,$ we denote it as $\eta_n$ in our proof.

\medskip
\noindent\emph{KL control.}
Let $A=\Omega_0^{-1/2}\Omega\,\Omega_0^{-1/2}$ and write its eigenvalues as $d_1,\dots,d_{q_n}$. For Gaussian likelihoods,
\[
K\bigl(p_{\Omega_0},p_\Omega\bigr)
=\frac12\sum_{k=1}^{q_n}\bigl(-\log d_k-1+d_k\bigr),
\qquad
V\bigl(p_{\Omega_0},p_\Omega\bigr)\ \lesssim\ \sum_{k=1}^{q_n}(d_k-1)^2.
\]
Since
\[
\sum_{k=1}^{q_n}(d_k-1)^2
=\|A-I\|_F^2
=\bigl\|\Omega_0^{-1/2}(\Omega-\Omega_0)\Omega_0^{-1/2}\bigr\|_F^2
\le \|\Omega_0^{-1/2}\|_{\mathrm{op}}^4\,\|\Omega-\Omega_0\|_F^2
\le k_1^{2}\,\|\Omega-\Omega_0\|_F^2,
\]
we obtain the upper bound
\begin{equation}\label{eq:K-upper}
K\bigl(p_{\Omega_0},p_\Omega\bigr)\ \vee\ V\bigl(p_{\Omega_0},p_\Omega\bigr)
\ \le\ C_K\,\|\Omega-\Omega_0\|_F^2,
\qquad C_K:=c\,k_1^{2}.
\end{equation}
For a local \emph{lower} bound, set $E=A-I$. If $\|E\|_{\mathrm{op}}\le 1/2$, then the scalar inequality $-\log(1+t)-(-t)\ge t^2/3$ for $|t|\le 1/2$ gives
\[
2K=\sum_{k}[-\log(1+\lambda_k(E))+\lambda_k(E)]
\ \ge\ \frac13\sum_k\lambda_k(E)^2
=\frac13\|E\|_F^2.
\]
In order to ensure $\|E\|_{\mathrm{op}}\le 1/2$, see that
\[
\|E\|_{\mathrm{op}}
=\bigl\|\Omega_0^{-1/2}(\Omega-\Omega_0)\Omega_0^{-1/2}\bigr\|_{\mathrm{op}}
\le \|\Omega_0^{-1/2}\|_{\mathrm{op}}^{2}\,\|\Omega-\Omega_0\|_{\mathrm{op}}
\le k_1\,\|\Omega-\Omega_0\|_{F}.
\]
Thus, on the Frobenius ball $\{\|\Omega-\Omega_0\|_{F}\le 1/(2k_1)\}$ we have $\|E\|_{\mathrm{op}}\le 1/2$.
Seeing that the smallest singular value of $\Omega_0 ^{-1/2}$ is $1/\| \Omega_0 ^{1/2}\|_{\rm{op}} \ge 1/\sqrt{k_1}.$ Hence, 
\[
\|E\|_{F} = \|\Omega_0^{-1/2}(\Omega-\Omega_0)\Omega_0^{-1/2}\|_F \ge k_1 ^{-1} \| \Omega-\Omega_0\|_{F}.
\]
On $\{\|E\|_{\rm{op}} \le 1/2\},$ we have 
\[
2K \ge \frac{1}{3} \|E\|_F ^2 \ge \frac{1}{3k_1 ^2} \| \Omega-\Omega_0 \|_F ^2,
\]
so $K \ge (6 k_1 ^2)^{-1} \| \Omega-\Omega_0 \|_F ^2$ for $\|\Omega-\Omega_0 \|_F \le \delta_0$ (depending only on $k_1$).
% Moreover,
% \(
% \|E\|_F
% =\|\Omega_0^{-1/2}(\Omega-\Omega_0)\Omega_0^{-1/2}\|_F
% \le \|\Omega_0^{-1/2}\|_{\mathrm{op}}^2\,\|\Omega-\Omega_0\|_F
% \le \sqrt{k_1}\,\|\Omega-\Omega_0\|_F,
% \)
% so there exists $\delta_0>0$ (depending only on $k_1$) such that for $\|\Omega-\Omega_0\|_F\le \delta_0$,
% \begin{equation}\label{eq:K-lower}
% K\bigl(p_{\Omega_0},p_\Omega\bigr)\ \ge\ c_K\,\|\Omega-\Omega_0\|_F^2,
% \qquad c_K=\frac{1}{6k_1}.
% \end{equation}
Consequently, in a small Frobenius ball, $h^2(p_{\Omega_0},p_\Omega)\asymp \|\Omega-\Omega_0\|_F^2$, since the squared Hellinger distance satisfies $h^2 \le K$, and for small $K$ we also have $K \le Ch^2$ using Pinsker's inequality.

\medskip
\noindent\emph{Prior mass on a KL ball.}
Let
\[
\tilde\epsilon_n^2=\frac{(q_n+s_0^\Omega)\log q_n}{n},\qquad
\delta_n=\frac{\tilde\epsilon_n}{2k_1\sqrt{q_n+s_0^\Omega}},
\qquad
\mathcal G=\{(i,i):1\le i\le q_n\}\cup S_0^\Omega.
\]
Define the events
\[
\begin{aligned}
E_{1,n}&=\Bigl\{|\omega_{ij}-\omega_{0,ij}|\le \delta_n\ \ \forall (i,j)\in \mathcal G\Bigr\},\\
E_{2,n}&=\Bigl\{|\omega_{ij}|\le \delta_n\ \ \forall (i,j)\in T_n\Bigr\},\quad T_n\subset (S_0^\Omega)^c,\ |T_n|=\lceil\sqrt{Q}\rceil,\ Q=\tbinom{q_n}{2},\\
E_{3,n}&=\Bigl\{S_n:=\sum_{(i,j)\in R_n}\omega_{ij}^2\le \tfrac12\tilde\epsilon_n^2\Bigr\},\quad R_n=(S_0^\Omega)^c\setminus T_n.
\end{aligned}
\]
On $A_n:=E_{1,n}\cap E_{2,n}\cap E_{3,n}$ we bound $\Delta:=\Omega-\Omega_0$ as follows. Using symmetry of $\Delta$,
\[
\|\Delta\|_F^2
=\sum_{i=1}^{q_n}\Delta_{ii}^2\ +\ 2\!\!\sum_{1\le i<j\le q_n}\Delta_{ij}^{2}
=\bigl[\text{diag}\bigr]\ +\ 2\bigl[\mathcal G_{\mathrm{off}}\bigr]\ +\ 2\bigl[T_n\bigr]\ +\ 2\bigl[R_n\bigr].
\]
where we denote $[\text{diag}] \coloneqq \sum_{i=1}^{q_n} \Delta_{ii} ^2$, $[\mathcal{G}_{\text{off}}] \coloneqq \sum_{(i,j) \in \mathcal{G}_{\text{off}}} \Delta_{ij}^2,$ $[T_n] \coloneqq \sum_{(i,j) \in T_n} \Delta_{ij}^2$, and $[R_n] \coloneqq \sum_{(i,j) \in R_n} \Delta_{ij} ^2$.

On $E_{1,n}$, each diagonal and each $(i,j)\in S_0^\Omega$ has $|\Delta_{ij}|\le \delta_n$; on $E_{2,n}$ each $(i,j)\in T_n$ satisfies $|\Delta_{ij}|\le \delta_n$; and on $E_{3,n}$, $\sum_{(i,j)\in R_n}\Delta_{ij}^{2}\le \tilde\epsilon_n^2/2$. Therefore
\[
\|\Delta\|_F^2
\le q_n\delta_n^2 + 2s_0^\Omega\delta_n^2 + 2|T_n|\delta_n^2 + \tilde\epsilon_n^2
\ \le\ 4(q_n+s_0^\Omega)\delta_n^2 + \tilde\epsilon_n^2
\ =\ \frac{\tilde\epsilon_n^2}{k_1^{2}} + \tilde\epsilon_n^2
\ =\ C_*\tilde\epsilon_n^2,
\]
with $C_*:=1+1/k_1^2$. Combining with \eqref{eq:K-upper}, $K\vee V\lesssim \tilde\epsilon_n^2$ on $A_n$, and (by Weyl's inequality) $\Omega\in\mathcal M_{q_n}^{+}(k_1)$ for large $n$.

We now lower bound $\Pi(A_n)$. For diagonals, $\omega_{ii}\sim\mathrm{Exp}(\xi_1)$ and $\omega_{0,ii}\in[1/k_1,k_1]$ imply, for large $n$,
\[
\Pi\big(|\omega_{ii}-\omega_{0,ii}|\le \delta_n\big)
=\int_{\omega_{0,ii}-\delta_n}^{\omega_{0,ii}+\delta_n}\xi_1 e^{-\xi_1 x}\,dx
\ \ge\ C_1\,\delta_n.
\]
For $(i,j)\in S_0^\Omega$,
$$\begin{aligned}
\Pi\bigl(|\omega_{ij}-\omega_{0,ij}|\le \delta_n\bigr)
&\ge \eta_n\int_{\omega_{0,ij}-\delta_n}^{\omega_{0,ij}+\delta_n}\frac{\xi_1}{2}\,e^{-\xi_1|x|}\,dx\\
&\ge \eta_n\,(2\delta_n)\,\min_{x\in[\omega_{0,ij}-\delta_n,\ \omega_{0,ij}+\delta_n]}\frac{\xi_1}{2}e^{-\xi_1|x|}\\
&\ge \eta_n\,(2\delta_n)\,\frac{\xi_1}{2}\,e^{-\xi_1(\,|\omega_{0,ij}|+\delta_n\,)}\\
&= \eta_n\,\xi_1\,\delta_n\,e^{-\xi_1(|\omega_{0,ij}|+\delta_n)}.
\end{aligned}
$$
Because $\Omega_0\in\mathcal M_{q_n}^{+}(k_1)$ we have $\|\Omega_0\|_{\text{op}}\le k_1$, and thus $|\omega_{0,ij}|\le \|\Omega_0\|_{\text{op}}\le k_1$. Hence
$$
  \Pi\bigl(|\omega_{ij}-\omega_{0,ij}|\le \delta_n\bigr)
\ \ge\ \eta_n\,\xi_1\,\delta_n\,e^{-\xi_1(k_1+\delta_n)}.
$$
Since $\delta_n\to 0$, for large $n$ we can enforce $\delta_n\le 1$, and then
$e^{-\xi_1(k_1+\delta_n)}\ge e^{-\xi_1(k_1+1)}=:C_2/\xi_1$.
This gives the bound 
\[
\Pi\big(|\omega_{ij}-\omega_{0,ij}|\le \delta_n\big)
\ \ge\ \eta_n\,\xi_1\,\delta_n\,e^{-\xi_1(k_1+\delta_n)}
\ \ge\ C_2\,\eta_n\,\delta_n.
\]
with a constant $C_2:=\xi_1\,e^{-\xi_1(k_1+1)}\in(0,\infty),$ which is independent of $n$.

Now, independence gives
\[
\Pi(E_{1,n})
\ \ge\ (C_1\delta_n)^{q_n}\,(C_2\eta_n\delta_n)^{s_0^\Omega}
\ =\ \exp\bigl\{-C(q_n+s_0^\Omega)\log q_n\bigr\}
\ =\ \exp\{-C\,n\,\tilde\epsilon_n^2\},
\]
using $\delta_n=\tilde\epsilon_n/(2k_1\sqrt{q_n+s_0^\Omega})$ and assumption (B2).

For $(i,j)\in T_n$, (B3) implies $\xi_0\delta_n\asymp \sqrt{q_n}\to\infty$, so for large $n$,
\[
\Pi\big(|\omega_{ij}|\le \delta_n\big)
=(1-\eta_n)\bigl(1-e^{-\xi_0\delta_n}\bigr)+\eta_n\bigl(1-e^{-\xi_1\delta_n}\bigr)
\ \ge\ \tfrac12\cdot \tfrac34\ \ge\ \tfrac{3}{8},
\]
and hence
\[
\Pi(E_{2,n})\ \ge\ (3/8)^{|T_n|}\ \ge\ \exp\{-C\,q_n\}.
\]
The bound $3/8$ stems from the fact that $1-\exp(-\xi_0 \delta_n) \uparrow 1$, so for all sufficiently large $n$, we have $1-\exp(-\xi_0 \delta_n) \ge 3/4$ and again by (B2), we have for all large $n$, $1-\eta_n \ge 1/2$. Thus, it follows by observing that $\Pi(|\omega_{ij}| \le \delta_n) \ge (1-\eta_n)(1-\exp(-\xi_0 \delta_n))$.

For $E_{3,n}$, let $\mathcal E_{\mathrm{spike}}$ be the event that all $(i,j)\in R_n$ are drawn from the spike. Then
\[
\Pi(\mathcal E_{\mathrm{spike}}^c)
=1-(1-\eta_n)^{|R_n|}
\ \le\ |R_n|\eta_n
\ \lesssim\ q_n^2\cdot \frac{\log q_n}{n q_n}
\ =\ \frac{q_n\log q_n}{n}
\ =\ o(1),
\]
so $\Pi(\mathcal E_{\mathrm{spike}})\ge 1/2$ for large $n$. Conditional on $\mathcal E_{\mathrm{spike}}$, $\omega_{ij}\overset{\text{iid}}{\sim}\mathrm{Laplace}(\xi_0)$, whence $\omega_{ij}^2$ are sub–exponential with $\mathbb E[\omega_{ij}^2]=2/\xi_0^2$ and $\mathrm{Var}(\omega_{ij}^2)=20/\xi_0^4$. Thus
\[
\mathbb E[S_n\mid\mathcal E_{\mathrm{spike}}]
=|R_n|\cdot\frac{2}{\xi_0^2}
\ \lesssim\ \frac{q_n^2}{\xi_0^2}
\ \lesssim\ \frac{q_n\log q_n}{n}
\ \lesssim\ \tilde\epsilon_n^2.
\]
Bernstein’s inequality yields $\Pi\bigl(S_n\le \tfrac12\tilde\epsilon_n^2\mid\mathcal E_{\mathrm{spike}}\bigr)\ge 1/2$ (for large $n$), hence $\Pi(E_{3,n})\ge 1/4$. Altogether,
\begin{equation}\label{eq:KL-mass}
\Pi(A_n)\ \ge\ \exp\{-C\,(q_n+s_0^\Omega)\log q_n\}
\ =\ \exp\{-C\,n\,\tilde\epsilon_n^2\},
\end{equation}
and since $A_n\subset\{K\vee V\lesssim \tilde\epsilon_n^2\}$, \eqref{eq:KL-mass} is a KL small–ball lower bound.

\medskip
\noindent\emph{Sieve and metric entropy.}
Define
\[
\mathcal P_n
=\Bigl\{\Omega\in\mathcal M_{q_n}^{+}(k_1):\
\#\{(i,j):1\le i<j\le q_n,\ |\omega_{ij}|>\nu_n\}\le r_n,\
\|\Omega\|_\infty\le B_n\Bigr\},
\]
with thresholds
\[
\nu_n=\frac{\tilde\epsilon_n}{q_n^{b'}},\quad b'>0,\qquad
r_n=c_1\,\frac{n\tilde\epsilon_n^2}{\log n},\qquad
B_n=c_2\,n\,\tilde\epsilon_n^2,
\]
and $Q=\binom{q_n}{2}$. We cover $\mathcal P_n$ in the parameter sup–norm. Each $\Omega\in\mathcal P_n$ admits a support $S\subset\{(i,j):i<j\}$ with $|S|\le r_n$ containing the off–diagonals larger than $\nu_n$. The number of supports is $\sum_{j=0}^{r_n}\binom{Q}{j}$.

Fix $S$ with $|S|=j\le r_n$. For the $q_n$ diagonals and the $j$ selected off–diagonals, each coordinate lies in an interval of length at most $2B_n$. Placing an equispaced grid of mesh $\nu_n$ yields at most $(C\,B_n/\nu_n)^{q_n+j}$ grid points covering those coordinates within sup–norm $\nu_n$. Off–diagonals outside $S$ are already bounded by $\nu_n$, so setting them to $0$ incurs at most $\nu_n$ sup–norm error.

Summing over $S$ and using $\sum_{j=0}^{r_n}\binom{Q}{j}\le \bigl(\tfrac{eQ}{r_n}\bigr)^{r_n}$,
\[
N\bigl(\nu_n,\mathcal P_n,\|\cdot\|_\infty\bigr)
\ \le\ 
(r_n+1)\,\Bigl(C\,\frac{B_n}{\nu_n}\Bigr)^{q_n+r_n}
\Bigl(\frac{eQ}{r_n}\Bigr)^{r_n}.
\]
Therefore
\begin{equation}\label{eq:logN-raw}
\log N\bigl(\nu_n,\mathcal P_n,\|\cdot\|_\infty\bigr)
\ \lesssim\ (q_n+r_n)\,\log\frac{B_n}{\nu_n}
\ +\ r_n\,\log\frac{Q}{r_n}.
\end{equation}
Now
\[
\frac{B_n}{\nu_n}
=\frac{c_2\,n\,\tilde\epsilon_n^2}{\tilde\epsilon_n/q_n^{b'}}
=c_2\,n\,\tilde\epsilon_n\,q_n^{b'},
\qquad
\log\frac{B_n}{\nu_n}
=O(\log n+\log q_n),
\]
since $\tilde\epsilon_n^2=\frac{(q_n+s_0^\Omega)\log q_n}{n}$ implies $\log(n\tilde\epsilon_n)=O(\log n+\log q_n)$. Also, $Q\asymp q_n^2$ and $r_n=c_1\frac{n\tilde\epsilon_n^2}{\log n}$ yield
\[
\log\frac{Q}{r_n}
=O(\log q_n+\log n).
\]
Plugging into \eqref{eq:logN-raw},
\[
\log N\bigl(\nu_n,\mathcal P_n,\|\cdot\|_\infty\bigr)
\ \lesssim\ (q_n+r_n)\,(\log n+\log q_n)
\ \lesssim\ n\,\tilde\epsilon_n^2,
\]
because $r_n\log n\asymp n\tilde\epsilon_n^2$ and $q_n\log q_n\lesssim n\tilde\epsilon_n^2$.

On $\mathcal M_{q_n}^+(k_1)$, $h(p_\Omega,p_{\Omega'})\le C(k_1)\,\|\Omega-\Omega'\|_F\le C(k_1)\,q_n\,\|\Omega-\Omega'\|_\infty$. Hence
\[
\log N\bigl(\tilde\epsilon_n,\mathcal P_n,h\bigr)
\ \le\ \log N\Bigl(\frac{\tilde\epsilon_n}{C(k_1)\,q_n},\mathcal P_n,\|\cdot\|_\infty\Bigr)
\ \lesssim\ \log N\bigl(\nu_n,\mathcal P_n,\|\cdot\|_\infty\bigr)
\ \lesssim\ n\,\tilde\epsilon_n^2,
\]
since $\nu_n=\tilde\epsilon_n/q_n^{b'}$ with $b'>0$ is no larger than a constant multiple of $\tilde\epsilon_n/q_n$ for large $n$.

\medskip
\noindent\emph{Prior outside the sieve.}
Under the untruncated prior $\Pi^*$, let $N=\sum_{i<j}\mathbbm{1}(|\omega_{ij}|>\nu_n)$ with $Q=\binom{q_n}{2}$. Then
\[
p_{ij}:=\Pi^*\bigl(|\omega_{ij}|>\nu_n\bigr)
=(1-\eta_n)e^{-\xi_0\nu_n}+\eta_n e^{-\xi_1\nu_n}\ \le\ e^{-\xi_0\nu_n}+\eta_n.
\]
By (B3), $\xi_0\nu_n\asymp \sqrt{q_n}\to\infty$, so $e^{-\xi_0\nu_n}$ is negligible; and (B2) gives $p_{ij}\le 2\eta_n\lesssim \frac{\log q_n}{n q_n}$. Thus $\mu:=\mathbb E[N]\le Q\cdot \frac{2\log q_n}{n q_n}\lesssim \frac{q_n\log q_n}{n}=O(\tilde\epsilon_n^2)$. A Chernoff bound yields
\[
\Pi^*(N\ge r_n+1)
\ \le\ \Bigl(\frac{e\mu}{r_n}\Bigr)^{r_n}
\ \le\ \exp\{-c\,n\,\tilde\epsilon_n^2\}.
\]
For the sup–norm,
\[
\Pi^*(\|\Omega\|_\infty>B_n)
\ \le\ (Q+q_n)\Bigl((1-\eta_n)e^{-\xi_0 B_n}+\eta_n e^{-\xi_1 B_n}\Bigr)
\ \le\ \exp\{-c'\,n\,\tilde\epsilon_n^2\},
\]
for $c_2$ large. Hence $\Pi^*(\mathcal P_n^c)\le \exp\{-c_4 n\tilde\epsilon_n^2\}$. Following \citet{KNHS}, truncation to $\mathcal M_{q_n}^{+}(k_1)$ only multiplies by a constant factor (the denominator $\Pi^*(\mathcal M_{q_n}^{+}(k_1))\ge c>0$), so $\Pi(\mathcal P_n^c)\le \exp\{-c_5 n\tilde\epsilon_n^2\}$.

\end{proof}
We complete the proof of Theorem 2 by removing the conditioning with respect to $\bm{Z}$ in \Cref{sec:proofthm3} using the same good--set argument mimicking the proof of Theorem 1.

\subsection{Proof of Theorem 3}
The proof mirrors the arguments of the proof of Theorem 3 in \citet{wang2023twostep}. 
\begin{proof}
By definition,
\[
\widetilde{\beta}_{j,k}\neq 0
\quad\Longleftrightarrow\quad
\frac{\lvert \beta_{j,k}\rvert}{\omega_{k,k}} \;>\; a_n,
\]
where $a_n = c\,\frac{\sqrt{\log p_n}}{\sqrt{n}\,p_n}$. Hence, to show $S_0\subseteq \widetilde{S}$ (sure screening), we need to ensure that for each $(j,k)\in S_0$, the above ratio exceeds $a_n$ with high probability.

From standard contraction arguments in Theorem 1, under assumptions (A1)--(A6), we obtain that 
\[
    \max_{j,k}\,\lvert \beta_{j,k} - (\beta_{j,k})_0\rvert \le M \epsilon_n
\]
since, we know that $\max_{j,k} |\beta_{jk}| \le \lVert\bm{B}\rVert_{F}$.
Meanwhile, Theorem 2 shows that whp, $\Omega$ is close to $\Omega_0$ in Frobenius norm, and crucially $\omega_{k,k}$ remains bounded away from zero because $\Omega$ stays positive‐definite.  By assumption (A5), $\Omega_0$ has all eigenvalues bounded and away from $0$, so for sufficiently large $n$,
\[
    \omega_{k,k}
    \;\in\;
    \bigl[\underline{\omega},\,\overline{\omega}\bigr]
    \;\subset\;(0,\infty)
    \quad
    \text{with high probability}.
\]

By assumption (C1), each true nonzero coefficient satisfies 
\[
    \lvert (\beta_{j,k})_0\rvert 
    \;\ge\;
    \frac{c_3}{n^\zeta}
    \quad
    \text{for }\,(j,k)\in S_0.
\]
Given the posterior contraction result in \switchref{\Cref{thm:unconditional}}{Theorem 1}, if we pick any small $\delta>0$, then  for large $n$, with probability going to 1,
\[
    \lvert \beta_{j,k} - (\beta_{j,k})_0\rvert 
    \;\le\; 
    \delta\,\frac{c_3}{n^\zeta}.
\]
Hence
\[
    \lvert \beta_{j,k}\rvert
    \;\ge\;
    \lvert (\beta_{j,k})_0\rvert 
    - 
    \delta\,\frac{c_3}{n^\zeta}
    \;\ge\;
    (1-\delta)\,\frac{c_3}{n^\zeta}
    \quad
    \text{for large }n.
\]
On the other hand, w.h.p. we also have
\[
    \omega_{k,k}
    \;\le\;
    \overline{\omega}
    \quad
    \text{for some constant }\overline{\omega}>0.
\]
Thus
\[
    \frac{\lvert \beta_{j,k}\rvert}{\omega_{k,k}}
    \;\;\ge\;
    \frac{(1-\delta)\,c_3}{\,\overline{\omega}\,n^\zeta\,}.
\]
Comparing this with the threshold $a_n = c\,\nicefrac{\sqrt{\log p_n}}{\sqrt{n}\,p_n}$, note that for $\zeta<1/4$ and $p_n$ growing sub‐exponentially in $n$, we have
\[
    n^{-\zeta}
    \;\gg\;
    \frac{\sqrt{\log p_n}}{\sqrt{n}\,p_n}.
\]
Therefore, for sufficiently large $n$,
\[
    \frac{\lvert \beta_{j,k}\rvert}{\omega_{k,k}}
    \;>\;
    a_n
    \quad
    \text{with high probability}.
\]
Consequently, $\beta_{j,k}$ is \emph{not} thresholded to zero, i.e.\ $\widetilde{\beta}_{j,k}\neq 0$ for every $(j,k)\in S_0$. This shows that all truly nonzero entries survive the screening step.  In other words,
\[
    \mathbb{P}\bigl(S_0\subseteq\widetilde{S}\bigr)
    \;\to\;
    1
    \quad
    \text{as }n\to\infty.
\]
Taking supremum over $\bm{B}_0$ just ensures uniformity across all possible true parameter choices.  Hence
\[
    \sup_{\bm{B}_0}\,
    \mathbb{E}_{\bm{B}_0}\!
    \Bigl[\,
      \mathbb{P} \ \!\bigl(S_0\subseteq\widetilde{S}\mid \bm{Y}_n\bigr)
    \Bigr]
    \;\longrightarrow\; 1,
\]
which completes the proof of the sure‐screening property.
\end{proof}

\setcounter{figure}{0}
\setcounter{equation}{0}
\setcounter{table}{0}
\setcounter{theorem}{0}
\section{Additional algorithmic details}
\label{sec:extraalgorithm}
This section collects the minimal derivations referenced in the main text: the indicator–augmentation calculus and the closed-form E–step updates used to compute the adaptive penalties. We omit definitions and high-level motivation already given in Section 3.

\subsection{Tackling non-concavity with an EM-algorithm}
\label{sec:slabprobs}

To sidestep the challenges presented by the non-concave penalty term $\log p(\bm{B}, \Omega \vert \theta, \eta)$ we follow the general strategy of \citet{Deshpande2019_mSSL} and augment our model with two collections of binary indicators $\bdelta^{(\beta)} = \{\delta^{(\beta)}_{j,k}: 1 \leq j \leq p, 1 \leq k \leq q\}$ and $\bdelta^{(\omega)} = \{\delta^{(\omega)}_{k,k'}: 1 \leq k < k' \leq q\}.$
Each $\delta^{(\beta)}_{j,k}$ and $\delta^{(\omega)}_{k,k'}$ encodes whether or not the elements $\beta_{j,k}$ and $\omega_{k,k'}$ are drawn from their respective slab ($\delta = 1$) or spike ($\delta = 0$) distributions.

To motivate this approach, first observe that the conditional density of $\bm{B}$ and $\Omega$ can be expressed as
$$
p(\bm{B}, \Omega \vert \theta, \eta) = \int{p(\bm{B} \vert \bdelta^{(\beta)})p(\bdelta^{(\beta)}|\theta)d\bdelta^{(\beta)}} \times \int{p(\Omega \vert \bdelta^{(\omega)})p(\bdelta^{(\omega)} \vert \eta)d\bdelta^{(\omega)}}
$$
where each $\delta^{(\beta)}_{j,k} \vert \theta \sim \berndist{\theta}$ and $\delta^{(\omega)}_{k,k'} \vert \eta \sim \berndist{\eta},$ independently and
\begin{align*}
p(\bm{B} \vert \bdelta^{(\beta)}) &\propto \exp\left\{-\sum_{j = 1}^{p}{\sum_{k = 1}^{q}{\left(\lambda_{1}\delta_{j,k}^{(\beta)} + \lambda_{0}(1 - \delta_{j,k}^{(\beta)})\right)\lvert \beta_{j,k}\rvert}} \right\} \\
p(\Omega \vert \bdelta^{(\omega)}) &\propto \ind{\Omega \succ 0} \times \exp\left\{-\xi_{1}\sum_{k = 1}^{q}{\omega_{k,k}} - \sum_{1 \leq k < k' \leq q}{\left(\xi_{1}\delta_{k,k'}^{(\omega)} + \xi_{0}(1 - \delta_{k,k'}^{(\omega)})\right)\lvert \omega_{k,k'}\rvert}\right\}
\end{align*}

Based on this representation, an ECM algorithm for approximating the MAP of $\Xi := (\bm{B}, \theta, \Omega, \eta)$ proceeds by iteratively optimizing a surrogate objective function defined by integrating $\log p(\Xi \vert \bY)$ against the conditional posterior distributions of the indicators.
Specifically, starting from some initial guess $\Xi^{(0)},$ for $t > 1,$ the $t$-th iteration of the algorithm consists of an E-step, in which we compute the surrogate objective
$$
F^{(t)}(\bm{B}, \theta, \Omega, \eta) = \E_{\bdelta^{(\beta)}, \bdelta^{(\omega)} \vert \cdot}[\log p(\Xi, \bdelta^{(b)}, \bdelta^{(\omega)} \vert \bY) \vert \Xi^{(t-1)}]
$$
and two conditional maximization (CM) steps.
In the first CM step, we maximize $F^{(t)}$ with respect to $(\bm{B}, \theta)$ while fixing $(\Omega, \eta) = (\Omega^{(t-1)}, \eta^{(t-1)}).$
Then, in the second CM step, we maximize $F^{(t)}$ with respect to $(\Omega, \eta)$ while fixing $(\bm{B}, \theta) = (\bm{B}^{(t)}, \theta^{(t)}).$

It is straightforward to verify that the indicators $\delta^{(\beta)}_{j,k}$'s (resp. $\delta^{(\omega)}_{k,k'})$  are conditionally independent given $\bm{B}$ and $\theta$ (resp. $\Omega$ and $\eta$).
Moreover, their conditional expectations are available in closed form:
\begin{align}
\begin{split}
\label{eq:pstar_qstar}
\E[\delta^{(\beta)}_{j,k} \vert \bm{B}, \theta] := p^{\star}_{j,k} &= \left[1 + \frac{1-\theta}{\theta} \times \frac{\lambda_{0}}{\lambda_{1}} \times \exp\left\{-(\lambda_{0} - \lambda_{1})\lvert \beta_{j,k} \rvert \right\}\right]^{-1} \\
\E[\delta^{(\omega)}_{k,k'} \vert \Omega, \eta] := q^{\star}_{k,k'} &= \left[1 + \frac{1-\eta}{\eta} \times \frac{\xi_{0}}{\xi_{1}} \times \exp\left\{-(\xi_{0} - \xi_{1})\lvert \omega_{k,k'} \rvert \right\}\right]^{-1}
\end{split}
\end{align}

With this notation, the surrogate objective optimized in the $t$-th iteration of the ECM algorithm which is also \switchref{\Cref{eq:ecm_surrogate}}{Equation (3) in the main text}.
\begin{align}
\begin{split}
\label{eq:ecm_surrogate}
F^{(t)} &= \sum_{i = 1}^{n}{\log p(\by_{i} \vert \bm{B}, \Omega)} -\sum_{j=1}^{p}{\sum_{k=1}^{q}{\lambda^{\star}_{j,k}\lvert \beta_{j,k} \rvert }} - \sum_{1 \leq k < k' \leq q}{\xi^{\star}_{k,k'}\lvert \omega_{k,k'} \rvert} - \xi_{1}\sum_{k = 1}^{q}{\omega_{k,k}} \\
&+ \left(a_\theta -1 + \sum_{j = 1}^{p}{\sum_{k = 1}^{q}{{p^{\star}_{j,k}}}}\right)\log\theta + \left(b_\theta -1 + pq - \sum_{j = 1}^{p}{\sum_{k = 1}^{q}{{p^{\star}_{j,k}}}}\right)\log(1-\theta) \\
& + \left(a _{\eta} - 1 + \sum_{1 \leq k < k' \leq q}{q^{\star}_{k,k'}}\right) + \left(b_{\eta} - 1 + \frac{q(q-1)}{2} - \sum_{1 \leq k < k' \leq q}{q^{\star}_{k,k'}}\right)\log(1-\eta),
\end{split}
\end{align}
where $\lambda^{\star}_{j,k} = \lambda_{1}p^{\star}_{j,k} + \lambda_{0}(1 - p^{\star}_{j,k})$ and $\xi^{\star}_{k,k'} = \xi_{1}q^{\star}_{k,k'} + \xi_{0}(1 - q^{\star}_{k,k'}).$ 
From the first line of \Cref{eq:ecm_surrogate}, optimizing $F^{(t)}$ with respect to $\bm{B}$ or $\Omega$ involves solving a maximum likelihood problem with individual penalties $\ell_{1}$ in each entry $\beta_{j,k}$ and $\omega_{k,k'}.$
A useful way to view the coefficients $\lambda_{j,k}^{\star}$ and $\xi_{k,k'}^{\star}$ is as entry-specific data-driven penalty levels. 
Because $p_{j,k}^{\star}$ and $q_{k,k'}^{\star}$ are updated at every E–step according to the current magnitudes of $\beta_{j,k}$ and $\omega_{k,k'}$  (\Cref{eq:pstar_qstar}), the effective penalties $\lambda_{j,k}^{\star}$ and $\xi_{k,k'}^{\star}$ \emph{adaptively mix} between a strong spike $(\lambda_0,\xi_0)$ and a weak slab $(\lambda_1,\xi_1)$. 
Entries that appear important in the current iterate (i.e., those with large magnitude) receive small penalties in the next CM–step and are allowed to persist, whereas entries close to zero are pushed more aggressively toward exact zero in the next iteration.
This adaptive penalty mixing is the key mechanism by which the spike-and-slab LASSO can aggressively shrink negligible parameter values without severely biasing the estimates of truly significant values \citep[see][for an overview]{GeorgeRockova2020_penalty_mixing}.

\subsection{Dual-thresholding rule derivation}
\label{sec:dualthreshold}
Fix $\Omega=\Omega^{(t-1)}$ and all coefficients except $\beta_{jk}$.
We write the partial residuals (excluding predictor $j$) as
\[
r_{ik'}^{(-j,h)}
\;=\;
z_{ik'}^{(h)}-\sum_{\ell\neq j}x_{i\ell}\beta_{\ell k'}^{\text{old}},
\quad k'=1,\dots,q.
\]
Let
\[
S_{jk}
\;:=\;
\sum_{i=1}^n\sum_{h=1}^H x_{ij}\,\Big(\sum_{k'=1}^q\omega_{kk'}^{(t-1)}\,r_{ik'}^{(-j,h)}\Big),
\quad
a_{jk}\;:=\;nH\,\omega_{kk}^{(t-1)}.
\]
With these definitions, the part of the surrogate objective
$\tilde F^{(t)}(\bm B,\theta,\Omega^{(t-1)},\eta^{(t-1)})$
that depends on $\beta_{jk}$ alone is (up to an additive constant)
\[
Q_{jk}(\beta)
\;=\;
-\frac{a_{jk}}{2}\,\beta^2 \;+\; S_{jk}\,\beta \;-\; \lambda_{jk}^\star\,|\beta|,
\]
where $\lambda_{jk}^\star=\lambda_1 p^\star_{jk}+\lambda_0(1-p^\star_{jk})$ is fixed in the CM–step.

\paragraph{KKT conditions.}
Let $\partial|\beta|=\{\operatorname{sign}(\beta)\}$ for $\beta\neq0$ and
$\partial|0|=[-1,1]$. Similar to \citet{Deshpande2019_mSSL}, the first–order (concave) KKT conditions for a maximizer $\hat\beta=\beta_{jk}^{\text{new}}$ are
\[
\begin{cases}
-a_{jk}\,\hat\beta + S_{jk} - \lambda_{jk}^\star\,\operatorname{sign}(\hat\beta)=0,
&\text{if }\hat\beta\neq 0,\\[2mm]
S_{jk}\in[-\lambda_{jk}^\star,\ \lambda_{jk}^\star],
&\text{if }\hat\beta=0.
\end{cases}
\]
From the second line, $\hat\beta=0$ is optimal when $|S_{jk}|\le \lambda_{jk}^\star$.

If $|S_{jk}|>\lambda_{jk}^\star$, the first line yields
\[
\hat\beta
\;=\;
\frac{S_{jk}-\lambda_{jk}^\star\,\operatorname{sign}(S_{jk})}{a_{jk}}
\;=\;
\frac{\big(|S_{jk}|-\lambda_{jk}^\star\big)_{+}}{a_{jk}}\,
\operatorname{sign}(S_{jk}).
\]
Defining the (per–coordinate) threshold
\(
\Delta_{jk}\;:=\;\frac{\lambda_{jk}^\star}{a_{jk}}
=\frac{\lambda_{jk}^\star}{nH\,\omega_{kk}^{(t-1)}},
\)
then the update can be written as a \emph{dual-threshold} rule:
\[
\beta_{jk}^{\text{new}}
\;=\;
\Big[|S_{jk}|-\lambda_{jk}^\star\Big]_{+}\,
\frac{\operatorname{sign}(S_{jk})}{nH\,\omega_{kk}^{(t-1)}}\;
\mathbbm{1}\!\left(\frac{|S_{jk}|}{nH\,\omega_{kk}^{(t-1)}}>\Delta_{jk}\right)
\]
i.e., a hard-threshold at $\Delta_{jk}$ (which is equivalent to
$|S_{jk}|>\lambda_{jk}^\star$) followed by a soft-threshold of size
$\lambda_{jk}^\star/(nH\,\omega_{kk}^{(t-1)})$.

As a sanity check, when $|S_{jk}|>\lambda_{jk}^\star$, plugging $\hat\beta$ back into $Q_{jk}$ gives
\[
Q_{jk}(\hat\beta)-Q_{jk}(0)
=\frac{\big(|S_{jk}|-\lambda_{jk}^\star\big)^2}{2\,a_{jk}}
>0,
\]
and otherwise the maximizer is $\hat\beta=0$. Thus the update is the unique coordinatewise maximizer of the concave surrogate.

\subsection{Implementation details of the MCECM algorithm}\label{sec:implementationdetails}
Our MCECM algorithm depends on three sets of hyperparameters.
The first set are the spike and slab penalties $\xi_{0}, \lambda_{0}, \xi_{1},$ and $\lambda_{1}.$
We recommend setting $\lambda_{1} \approx (\sqrt{n\log{n}})^{-1},$ which induces an amount of shrinkage similar to that induced by the global shrinkage parameter $\tau$ in \citet{wang2023twostep}'s \texttt{mt-MBSP} procedure.
We further recommend setting $\xi_{1} = n/100,$ similar to \citet{Deshpande2019_mSSL}.
Instead of fixing single values for the spike penalties $\lambda_{0}$ and $\xi_{0},$ we run our MCECM algorithm along a grid of $(\lambda_{0}, \xi_{0})$ combinations with warm starts.
We vary $\lambda_{0}$ and $\xi_{0}$ along grid of ten evenly-spaced values respectively ranging from $10$ to $100$ and from $n/10$ and  $n$.
In this way, we optimize $100$ increasingly spiky posterior distributions, one for each combination of $\lambda_{0}$ and $\xi_{0}.$
Such dynamic posterior exploration has proven extremely effective in gradually filtering out negligible parameter values \citep{RockovaGeorge2018_ssl, Deshpande2019_mSSL}. 

The hyperparameters $a_{\theta}, b_{\theta}, a_{\eta}$ and $b_{\eta}$ influence the overall sparsity of the outputted solution.
Following the suggestions of \citet{shencgssl} and \citet{Deshpande2019_mSSL}, we recommend setting $a_\theta = 1, \ b_\theta = pq$, and $a_\eta=1, \ b_\eta=q$.  
Finally, we draw $H = 2000$ realizations of $\bz_{i}$ in each MC E-step of our algorithm.

\setcounter{figure}{0}
\setcounter{equation}{0}
\setcounter{table}{0}
\setcounter{theorem}{0}
\section{Additional simulation details and results}
\label{sec:additional_experiments}
\subsection{Simulating the different covariance structures}\label{sec:covstr}
Following the settings discussed in \citet{shencgssl}, for the AR$(1)$ model, we specifically set $(\Omega^{-1})_{k,k'} = 0.7^{|k-k'|}$ so that $\omega_{k,k'}=0$ whenever $|k-k'| > 1$. In the AR$(2)$ model, we set $\omega_{k,k}=1,\ \omega_{k-1,k}=\omega_{k,k-1}=0.5$, and $\omega_{k-2,k}=\omega_{k,k-2}=0.25$. The block model is obtained by partitioning $\Omega^{-1}$ into four $q/2 \times q/2$ blocks and setting all entries in the off-diagonal blocks to zero. Then, we take $\sigma_{k,k} = 1$ and $\sigma_{k,k'}=0.5$ for $1 \le k \neq k' \le q/2$ and for $q/2 +1 \le k \neq k' \le q$. For the star graph, we choose $\omega_{k,k}=1, \ \omega_{1,k}=\omega_{k,1}=0.1$ for each $k>1$, and force the remaining off-diagonal elements of $\Omega$ equal to zero. The small-world and tree networks necessitate the initial generation of an appropriate random graph and then drawing $\Omega$ from a G-Wishart distribution with three degrees of freedom and an identity scale matrix \citep{Roverato2002}. The Watts-Strogatz model \citep{Watts1998} with a single community and rewiring probability of $0.1$ was used to generate the small-world graph. Finally, the tree graph was generated by running a loop-erased random walk on a complete graph.  

\subsection{Additional simulation results}

\begin{figure}[ht!]
    \centering
    \includegraphics[width=\linewidth]{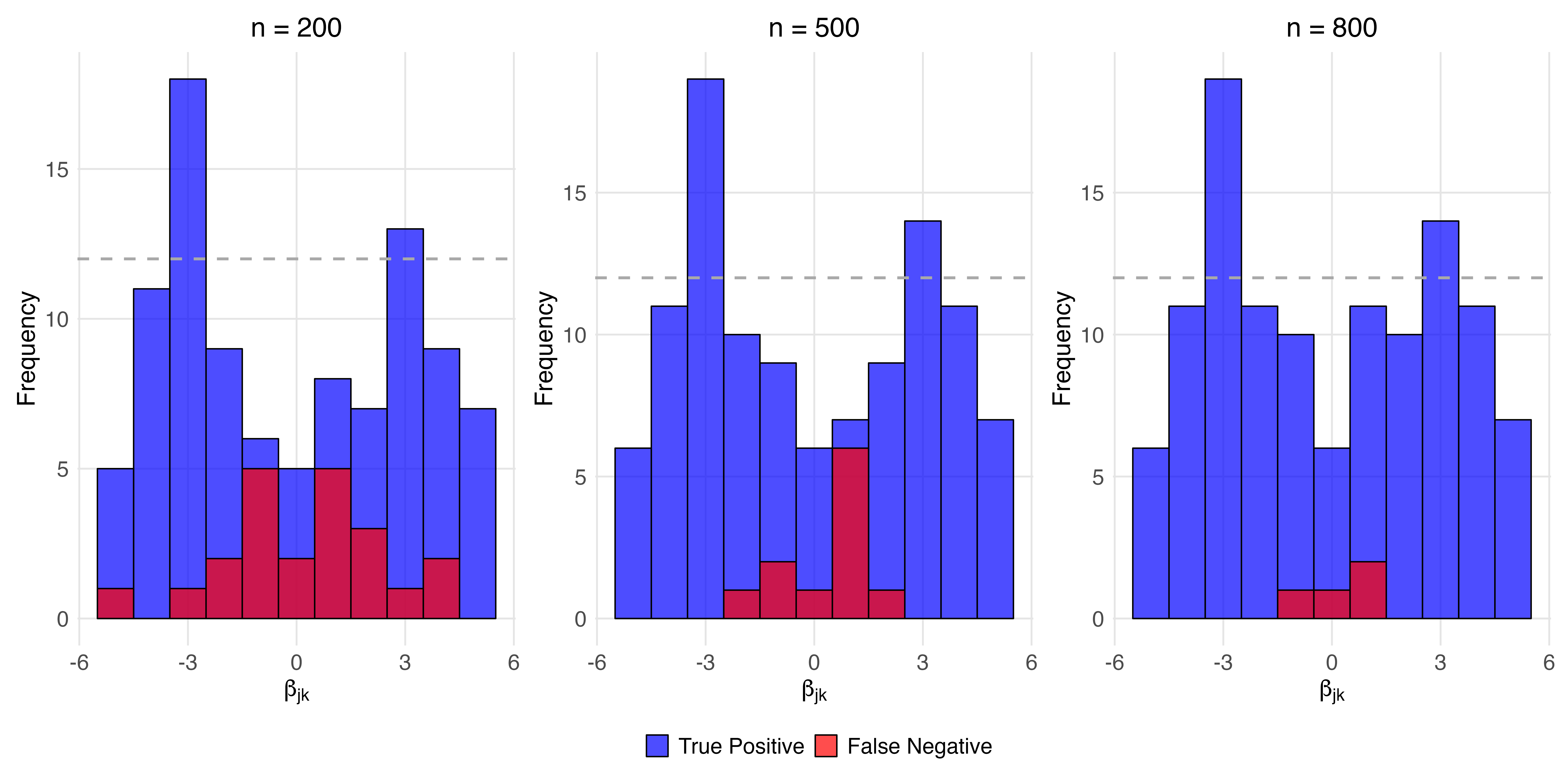}
    \caption{Variation of sensitivity for \texttt{mixed-mSSL} as the sample size $n$ increases for fixed $p,q$.}
    \label{fig:ARsensplot}
\end{figure}
The left panel of Figure \ref{fig:ARsensplot} shows a histogram of \texttt{mixed-mSSL} true positives (blue) and false negatives (red) for a single AR(1) replication. It pinpoints the source of lower \texttt{SEN} under $\mathcal U[-5,5]$: limited power for very small signals at $n=200$. Holding $(\bm B,\Omega)$ fixed and increasing $n$ from $200$ to $500$ and $800$ concentrates the false–negative $\beta_{jk}$ values closer to $0$, i.e., detection of small effects improves with $n$, as anticipated by Theorem 5.
\begin{table}[htbp]
\centering
\caption{Support‐recovery for $\Omega$ under \texttt{mixed-mSSL} across six graph structures and three $(n,p,q)$ settings.  For each block, we report sensitivity (SEN), specificity (SPEC), precision (PREC) and accuracy (ACC) under two signal regimes: $\mathcal{U}[-5,5]$ vs. $\mathcal{U}[-5,-2]\cup[2,5]$.}
\label{table:omega_support}
\small
\begin{tabular}{@{}lcccccccc@{}}
\toprule
\multicolumn{1}{c}{\textbf{Scenario}}
 & \multicolumn{4}{c}{$\mathcal{U}[-5,5]$}
 & \multicolumn{4}{c}{$\mathcal{U}[-5,-2]\cup[2,5]$} \\
\cmidrule(lr){2-5} \cmidrule(lr){6-9}
 & SEN & SPEC & PREC & ACC
 & SEN & SPEC & PREC & ACC \\
\midrule
\multicolumn{9}{c}{$(n,p,q) = (200,500,4)$}\\
\midrule
AR1           & 0.23 & 0.81 & 0.51 & 0.52 & 0.26 & 0.76 & 0.52 & 0.51 \\
AR2           & 0.22 & 0.77 & 0.83 & 0.31 & 0.25 & 0.77 & 0.84 & 0.33 \\
Block Diagonal& 0.25 & 0.80 & 0.35 & 0.62 & 0.25 & 0.76 & 0.33 & 0.59 \\
Star Graph    & 0.19 & 0.75 & 0.43 & 0.47 & 0.20 & 0.72 & 0.42 & 0.46 \\
Small World   & 0.23 & 0.80 & 0.70 & 0.42 & 0.25 & 0.75 & 0.69 & 0.42 \\
Tree Network  & 0.94 & NaN & 1.00 & 0.94 & 0.94 & NaN & 1.00 & 0.94 \\
\midrule
\multicolumn{9}{c}{$(n,p,q) =(500,1000,4)$}\\
\midrule
AR1           & 0.52 & 0.58 & 0.58 & 0.55 & 0.47 & 0.54 & 0.50 & 0.50 \\
AR2           & 0.48 & 0.63 & 0.89 & 0.51 & 0.46 & 0.53 & 0.83 & 0.48 \\
Block Diagonal& 0.49 & 0.56 & 0.40 & 0.54 & 0.49 & 0.55 & 0.34 & 0.53 \\
Star Graph    & 0.43 & 0.50 & 0.45 & 0.46 & 0.45 & 0.50 & 0.46 & 0.46 \\
Small World   & 0.49 & 0.52 & 0.68 & 0.50 & 0.50 & 0.53 & 0.68 & 0.51 \\
Tree Network  & 0.47 & NaN & 1.00 & 0.47 & 0.47 & NaN & 1.00 & 0.47 \\
\midrule
\multicolumn{9}{c}{$(n,p,q) = (800,1000,6)$}\\
\midrule
AR1           & 0.18 & 0.81 & 0.34 & 0.60 & 0.31 & 0.66 & 0.32 & 0.55 \\
AR2           & 0.20 & 0.82 & 0.66 & 0.45 & 0.30 & 0.66 & 0.56 & 0.45 \\
Block Diagonal& 0.17 & 0.81 & 0.37 & 0.56 & 0.30 & 0.66 & 0.36 & 0.52 \\
Star Graph    & 0.22 & 0.83 & 0.40 & 0.63 & 0.29 & 0.67 & 0.30 & 0.55 \\
Small World   & 0.18 & 0.80 & 0.40 & 0.56 & 0.32 & 0.68 & 0.39 & 0.53 \\
Tree Network  & 0.36 & NaN & 1.00 & 0.36 & 0.55 & NaN & 1.00 & 0.55 \\
\bottomrule
\end{tabular}
\end{table}

The support recovery performance for $\Omega$ in \Cref{table:omega_support} under \texttt{mixed-mSSL} shows consistently high specificity across most graph structures, indicating a strong ability to avoid false positives when detecting conditional independencies among responses. In particular, specificity values are generally high across all settings and graph types, except in the Tree Network scenario where it is reported as \texttt{NaN}. This \texttt{NaN} arises because the true precision matrix in the Tree Network is fully dense, i.e., there are no true zero entries in the off-diagonal positions. 

\begin{table}[htbp]
\centering
\caption{Support recovery and predictive performance across different covariance structures with $(n,p,q)=(500,1000,4)$ under the signal setting $\mathcal{U}[-5,5]$.}
\label{table:p1000q4unif}
\begin{tabular}{@{}l l cccc c ccc@{}}
\toprule
\multirow{2}{*}{\textbf{Scenario}} & \multirow{2}{*}{\textbf{Method}}
 & \multicolumn{4}{c}{\textbf{Support recovery}} 
 & 
 & \multicolumn{3}{c}{\textbf{Predictive}} \\
\cmidrule(lr){3-7} \cmidrule(lr){8-10}
  &   & \textbf{SEN} & \textbf{SPEC} & \textbf{PREC} & \textbf{ACC} 
  & \textbf{TIME(s)} & \textbf{RFE} & \textbf{RMSE} & \textbf{AUC} \\
\midrule
\multirow{4}{*}{AR1}
  & \texttt{mt-MBSP}     & 0.20 & 0.94 & 0.62 & \textbf{0.72} & 1239.60 & 62.05 & 1.64 & 0.59 \\
  & \texttt{sepSSL}      & 0.42 & 0.78 & 0.46 & 0.67 &  \textbf{2.60} & 61.66 & 1.05 & 0.54 \\
  & \texttt{sepglm}      & \textbf{0.55} & 0.75 & 0.48 & 0.69 & 10.22 & 61.48 & 0.49 & 0.61 \\
  & \texttt{mixed-mSSL}  & 0.16 & \textbf{0.95} & \textbf{0.67} & 0.71 & 277.20 & \textbf{61.42} & \textbf{0.48} & \textbf{0.63} \\
\midrule
\multirow{4}{*}{AR2}
  & \texttt{mt-MBSP}     & 0.19 & 0.95 & 0.64 & \textbf{0.72} & 915.60 & 62.04 & 1.64 & 0.59 \\
  & \texttt{sepSSL}      & 0.42 & 0.78 & 0.46 & 0.68 &  \textbf{1.60} & 61.66 & 1.05 & 0.55 \\
  & \texttt{sepglm}      & \textbf{0.55}  & 0.75 & 0.48 & 0.69 & 5.76 & 61.48 & \textbf{0.49} & 0.60 \\
  & \texttt{mixed-mSSL}  & 0.15 & \textbf{0.95} & \textbf{0.67} & 0.71 & 375.60 & \textbf{61.42} & \textbf{0.49} & \textbf{0.63} \\
\midrule
\multirow{4}{*}{Block Diagonal}
  & \texttt{mt-MBSP}     & 0.20 & 0.94 & 0.61 & \textbf{0.72} & 871.74 & 62.04 & 1.64 & 0.59 \\
  & \texttt{sepSSL}      & 0.43 & 0.78 & 0.46 & 0.67 & \textbf{1.47} & 61.67 & 1.04 & 0.54 \\
  & \texttt{sepglm}      & \textbf{0.55} & 0.75 & 0.49 & 0.69 & 5.30 & 61.48 & 0.49 & 0.60 \\
  & \texttt{mixed-mSSL}  & 0.16 & \textbf{0.95} & \textbf{0.65} & 0.71 & 14.52 & \textbf{61.42} & \textbf{0.48} & \textbf{0.63} \\
\midrule
\multirow{4}{*}{Star Graph}
  & \texttt{mt-MBSP}     & 0.19 & 0.94 & 0.63 & \textbf{0.71} & 449.40 & 62.04 & 1.64 & 0.59 \\
  & \texttt{sepSSL}      & 0.43 & 0.79 & 0.46 & 0.68 &  \textbf{1.09} & 61.66 & 1.05 & 0.54 \\
  & \texttt{sepglm}      & \textbf{0.54} & 0.75 & 0.48 & 0.69 & 3.51 & 61.49 & 0.49 & 0.59 \\
  & \texttt{mixed-mSSL}  & 0.16 & \textbf{0.95} & \textbf{0.67} & \textbf{0.71} & 19.97 & \textbf{61.43} & \textbf{0.48} & \textbf{0.63} \\
\midrule
\multirow{4}{*}{Small World}
  & \texttt{mt-MBSP}     & 0.19 & 0.95 & 0.62 & \textbf{0.72} & 459.00 & 62.05 & 1.64 & 0.60 \\
  & \texttt{sepSSL}      & 0.42 & 0.78 & 0.46 & 0.68 &  \textbf{1.15} & 61.66 & 1.03 & 0.54 \\
  & \texttt{sepglm}      & \textbf{0.54} & 0.75 & 0.49 & 0.69 & 3.63 & 61.48 & \textbf{0.50} & 0.60 \\
  & \texttt{mixed-mSSL}  & 0.16 & \textbf{0.95} & \textbf{0.65} & 0.71 & 24.58 & \textbf{61.43} & \textbf{0.50} & \textbf{0.62} \\
\midrule
\multirow{4}{*}{Tree Network}
  & \texttt{mt-MBSP}     & 0.19 & 0.95 & 0.63 & \textbf{0.72} & 491.40 & 62.04 & 1.64 & 0.59 \\
  & \texttt{sepSSL}      & 0.43 & 0.78 & 0.46 & 0.68 &  \textbf{1.23} & 61.66 & 1.05 & 0.54 \\
  & \texttt{sepglm}      & \textbf{0.55} & 0.75 & 0.48 & 0.69 & 3.93 & 61.48 & 0.49 & 0.59 \\
  & \texttt{mixed-mSSL}  & 0.15 & \textbf{0.95} & \textbf{0.68} & \textbf{0.72} & 26.87 & \textbf{61.42} & \textbf{0.48} & \textbf{0.63} \\
\bottomrule
\end{tabular}
\end{table}
\Cref{table:p1000q4unif} shows that \texttt{sepglm} finds the largest share of true signals (highest \texttt{SEN}) but at the expense of many false positives, whereas \texttt{mixed-mSSL} keeps the highest \texttt{SPEC} or \texttt{PREC} and nearly the best \texttt{ACC} sacrificing \texttt{SEN}. \texttt{mt-MBSP} displays a similar conservative behavior to \texttt{mixed-mSSL}, and \texttt{sepSSL} sits in between with moderate \texttt{SEN} but low \texttt{SPEC} and \texttt{PREC}. \texttt{mixed-mSSL} still consistently achieves the best predictive accuracy metrics.

\begin{table}[htbp]
\centering
\caption{Support recovery and predictive performance across different covariance structures with $(n,p,q)=(500,1000,4)$ under the signal setting $\mathcal{U}[-5,-2] \cup [2,5]$.}
\label{table:p1000q4disjt}
\begin{tabular}{@{}l l cccc c ccc@{}}
\toprule
\multirow{2}{*}{\textbf{Scenario}} & \multirow{2}{*}{\textbf{Method}}
 & \multicolumn{4}{c}{\textbf{Support recovery}} 
 & 
 & \multicolumn{3}{c}{\textbf{Predictive}} \\
\cmidrule(lr){3-7} \cmidrule(lr){8-10}
  &   & \textbf{SEN} & \textbf{SPEC} & \textbf{PREC} & \textbf{ACC} 
  & \textbf{TIME(s)} & \textbf{RFE} & \textbf{RMSE} & \textbf{AUC} \\
\midrule
\multirow{4}{*}{AR1}
  & \texttt{mt-MBSP}     & 0.19 & \textbf{0.93} & \textbf{0.55} & \textbf{0.71} & 460.80 & 77.50 & 1.64 & 0.59 \\
  & \texttt{sepSSL}      & 0.43 & 0.78 & 0.46 & 0.67 &  \textbf{1.09} & 77.20 & 1.08 & 0.55 \\
  & \texttt{sepglm}      & \textbf{0.55} & 0.75 & 0.48 & 0.69 & 3.64 & 77.05 & 0.64 & 0.60 \\
  & \texttt{mixed-mSSL}  & 0.32 & 0.86 & 0.52 & 0.70 & 5.84 & \textbf{76.99} & \textbf{0.65} & \textbf{0.63} \\
\midrule
\multirow{4}{*}{AR2}
  & \texttt{mt-MBSP}     & 0.19 & \textbf{0.93} & \textbf{0.56} & \textbf{0.71} & 442.80 & 77.49 & 1.64 & 0.59 \\
  & \texttt{sepSSL}      & 0.43 & 0.78 & 0.46 & 0.67 &  \textbf{1.08} & 77.20 & 1.14 & 0.53 \\
  & \texttt{sepglm}      & \textbf{0.55}  & 0.74 & 0.48 & 0.68 & 3.57 & 77.07 & 0.66 & 0.60 \\
  & \texttt{mixed-mSSL}  & 0.33 & 0.86 & 0.51 & \textbf{0.71} & 5.74 & \textbf{76.99} & \textbf{0.65} & \textbf{0.62} \\
\midrule
\multirow{4}{*}{Block Diagonal}
  & \texttt{mt-MBSP}     & 0.19 & \textbf{0.93} & \textbf{0.55} & \textbf{0.71} & 477.60 & 77.50 & 1.64 & 0.59 \\
  & \texttt{sepSSL}      & 0.43 & 0.78 & 0.46 & 0.67 & \textbf{1.18} & 77.20 & 1.06 & 0.53 \\
  & \texttt{sepglm}      & \textbf{0.55} & 0.75 & 0.48 & 0.69 & 4.00 & 77.07 & \textbf{0.65} & 0.60 \\
  & \texttt{mixed-mSSL}  & 0.33 & 0.86 & 0.51 & 0.70 & 6.12 & \textbf{76.99} & \textbf{0.65} & \textbf{0.62} \\
\midrule
\multirow{4}{*}{Star Graph}
  & \texttt{mt-MBSP}     & 0.19 & \textbf{0.93} & \textbf{0.56} & \textbf{0.71} & 510.24 & 77.50 & 1.64 & 0.58 \\
  & \texttt{sepSSL}      & 0.43 & 0.78 & 0.46 & 0.67 &  \textbf{1.29} & 77.20 & 1.08 & 0.53 \\
  & \texttt{sepglm}      & \textbf{0.54} & 0.75 & 0.48 & 0.69 & 4.35 & 77.06 & \textbf{0.64} & 0.60 \\
  & \texttt{mixed-mSSL}  & 0.33 & 0.86 & 0.52 & \textbf{0.71} & 7.27 & \textbf{76.99} & \textbf{0.64} & \textbf{0.62} \\
\midrule
\multirow{4}{*}{Small World}
  & \texttt{mt-MBSP}     & 0.19 & \textbf{0.93} & \textbf{0.54} & \textbf{0.70} & 475.32 & 77.50 & 1.64 & 0.59 \\
  & \texttt{sepSSL}      & 0.43 & 0.78 & 0.46 & 0.67 &  \textbf{1.25} & 77.20 & 1.12 & 0.53 \\
  & \texttt{sepglm}      & \textbf{0.54} & 0.75 & 0.48 & 0.69 & 4.12 & 77.07 & \textbf{0.65} & 0.60 \\
  & \texttt{mixed-mSSL}  & 0.32 & 0.86 & 0.51 & \textbf{0.70} & 6.28 & \textbf{76.99} & \textbf{0.65} & \textbf{0.62} \\
\midrule
\multirow{4}{*}{Tree Network}
  & \texttt{mt-MBSP}     & 0.18 & \textbf{0.93} & \textbf{0.55} & \textbf{0.71} & 486.30 & 77.49 & 1.64 & 0.59 \\
  & \texttt{sepSSL}      & 0.44 & 0.78 & 0.46 & 0.67 &  \textbf{1.34} & 77.20 & 1.14 & 0.53 \\
  & \texttt{sepglm}      & \textbf{0.55} & 0.74 & 0.48 & 0.68 & 4.20 & 77.06 & \textbf{0.64} & 0.60 \\
  & \texttt{mixed-mSSL}  & 0.33 & 0.86 & 0.52 & 0.70 & 7.42 & \textbf{76.99} & 0.65 & \textbf{0.63} \\
\bottomrule
\end{tabular}
\end{table}
Overall, in \Cref{table:p1000q4disjt}, the disjoint-uniform signals make all the methods slightly bolder (higher \texttt{SEN}) compared to \Cref{table:p1000q4unif}. \texttt{mixed-mSSL} offers a decent balance between finding real effects and controlling false discoveries, while also giving the best out-of-sample predictions.

\begin{table}[htbp]
\centering
\caption{Support recovery and predictive performance across different covariance structures with $(n,p,q)=(800,1000,6)$ under the signal setting $\mathcal{U}[-5,5]$.}
\label{table:p1000q6unif}
\begin{tabular}{@{}l l cccc c ccc@{}}
\toprule
\multirow{2}{*}{\textbf{Scenario}} & \multirow{2}{*}{\textbf{Method}}
 & \multicolumn{4}{c}{\textbf{Support recovery}} 
 & 
 & \multicolumn{3}{c}{\textbf{Predictive}} \\
\cmidrule(lr){3-7} \cmidrule(lr){8-10}
  &   & \textbf{SEN} & \textbf{SPEC} & \textbf{PREC} & \textbf{ACC} 
  & \textbf{TIME(s)} & \textbf{RFE} & \textbf{RMSE} & \textbf{AUC} \\
\midrule
\multirow{4}{*}{AR1}
  & \texttt{mt-MBSP}     & 0.49 & \textbf{0.98} & \textbf{0.92} & \textbf{0.83} & 4269.60 & 79.04 & 1.44 & 0.60 \\
  & \texttt{sepSSL}      & 0.58 & 0.72 & 0.47 & 0.68 &  \textbf{6.76} & 78.27 & 0.92 & 0.59 \\
  & \texttt{sepglm}      & \textbf{0.73} & 0.75 & 0.56 & 0.75 & 33.12 & 78.06 & 0.15 & \textbf{0.65} \\
  & \texttt{mixed-mSSL}  & 0.51 & 0.86 & 0.60 & 0.75 & 442.56 & \textbf{77.97} & \textbf{0.13} & \textbf{0.65} \\
\midrule
\multirow{4}{*}{AR2}
  & \texttt{mt-MBSP}     & 0.49 & \textbf{0.97} & \textbf{0.90} & \textbf{0.83} & 2127.60 & 79.04 & 1.44 & 0.61 \\
  & \texttt{sepSSL}      & 0.58 & 0.70 & 0.45 & 0.66 &  \textbf{4.02} & 78.27 & 0.92 & 0.59 \\
  & \texttt{sepglm}      & \textbf{0.73}  & 0.75 & 0.56 & 0.74 & 13.96 & 78.06 & 0.15 & \textbf{0.66} \\
  & \texttt{mixed-mSSL}  & 0.50 & 0.86 & 0.60 & 0.75 & 269.40 & \textbf{77.98} & \textbf{0.14} & 0.65 \\
\midrule
\multirow{4}{*}{Block Diagonal}
  & \texttt{mt-MBSP}     & 0.49 & \textbf{0.98} & \textbf{0.92} & \textbf{0.83} & 4302.00 & 79.04 & 1.44 & 0.60 \\
  & \texttt{sepSSL}      & 0.58 & 0.72 & 0.47 & 0.68 & \textbf{6.78} & 78.27 & 0.92 & 0.59 \\
  & \texttt{sepglm}      & \textbf{0.73} & 0.75 & 0.56 & 0.74 & 32.99 & 78.06 & 0.15 & 0.65 \\
  & \texttt{mixed-mSSL}  & 0.50 & 0.86 & 0.61 & 0.75 & 431.28 & \textbf{77.97} & \textbf{0.13} & \textbf{0.66} \\
\midrule
\multirow{4}{*}{Star Graph}
  & \texttt{mt-MBSP}     & 0.49 & \textbf{0.98} & \textbf{0.92} & \textbf{0.83} & 2274.60 & 79.04 & 1.44 & 0.61 \\
  & \texttt{sepSSL}      & 0.59 & 0.72 & 0.47 & 0.68 &  \textbf{4.19} & 78.27 & 0.91 & 0.59 \\
  & \texttt{sepglm}      & \textbf{0.73} & 0.75 & 0.56 & 0.75 & 15.40 & 78.06 & 0.15 & \textbf{0.65} \\
  & \texttt{mixed-mSSL}  & 0.50 & 0.86 & 0.61 & 0.76 & 271.80 & \textbf{77.97} & \textbf{0.13} & \textbf{0.65} \\
\midrule
\multirow{4}{*}{Small World}
  & \texttt{mt-MBSP}     & 0.47 & \textbf{0.94} & \textbf{0.77} & \textbf{0.80} & 1419.00 & 79.04 & 1.45 & 0.61 \\
  & \texttt{sepSSL}      & 0.58 & 0.69 & 0.45 & 0.66 &  \textbf{2.58} & 78.27 & 0.94 & 0.58 \\
  & \texttt{sepglm}      & \textbf{0.73} & 0.73 & 0.54 & 0.73 & 8.78 & 78.07 & \textbf{0.17} & \textbf{0.65} \\
  & \texttt{mixed-mSSL}  & 0.49 & 0.83 & 0.55 & 0.73 & 159.00 & \textbf{77.99} & 0.18 & \textbf{0.65} \\
\midrule
\multirow{4}{*}{Tree Network}
  & \texttt{mt-MBSP}     & 0.49 & \textbf{0.98} & \textbf{0.92} & \textbf{0.83} & 2110.20 & 79.04 & 1.44 & 0.60 \\
  & \texttt{sepSSL}      & 0.58 & 0.71 & 0.47 & 0.67 &  \textbf{3.90} & 78.27 & 0.92 & 0.59 \\
  & \texttt{sepglm}      & \textbf{0.73} & 0.75 & 0.56 & 0.74 & 13.74 & 78.06 & 0.15 & \textbf{0.66} \\
  & \texttt{mixed-mSSL}  & 0.50 & 0.86 & 0.61 & 0.75 & 227.88 & \textbf{77.97} & \textbf{0.13} & 0.65 \\
\bottomrule
\end{tabular}
\end{table}

\begin{table}[H]
\centering
\caption{Support recovery and predictive performance across different covariance structures with $(n,p,q)=(800,1000,6)$ under the signal setting $\mathcal{U}[-5,-2] \cup [2,5]$.}
\label{table:p1000q6disjt}
\begin{tabular}{@{}l l cccc c ccc@{}}
\toprule
\multirow{2}{*}{\textbf{Scenario}} & \multirow{2}{*}{\textbf{Method}}
 & \multicolumn{4}{c}{\textbf{Support recovery}} 
 & 
 & \multicolumn{3}{c}{\textbf{Predictive}} \\
\cmidrule(lr){3-7} \cmidrule(lr){8-10}
  &   & \textbf{SEN} & \textbf{SPEC} & \textbf{PREC} & \textbf{ACC} 
  & \textbf{TIME(s)} & \textbf{RFE} & \textbf{RMSE} & \textbf{AUC} \\
\midrule
\multirow{4}{*}{AR1}
  & \texttt{mt-MBSP}     & 0.53 & \textbf{0.98} & \textbf{0.94} & \textbf{0.85} & 2127.00 & 98.91 & 1.44 & 0.60 \\
  & \texttt{sepSSL}      & 0.60 & 0.70 & 0.46 & 0.67 &  \textbf{4.06} & 98.23 & 0.96 & 0.57 \\
  & \texttt{sepglm}      & \textbf{0.78} & 0.73 & 0.55 & 0.74 & 15.51 & 97.94 & 0.17 & 0.64 \\
  & \texttt{mixed-mSSL}  & 0.51 & 0.86 & 0.61 & 0.76 & 354.30 & \textbf{97.82} & \textbf{0.13} & \textbf{0.65} \\
\midrule
\multirow{4}{*}{AR2}
  & \texttt{mt-MBSP}     & 0.53 & \textbf{0.98} & \textbf{0.93} & \textbf{0.84} & 2064.60 & 98.92 & 1.44 & 0.60 \\
  & \texttt{sepSSL}      & 0.60 & 0.69 & 0.45 & 0.66 &  \textbf{3.83} & 98.23 & 0.96 & 0.57 \\
  & \texttt{sepglm}      & \textbf{0.78}  & 0.72 & 0.55 & 0.74 & 14.25 & 97.94 & 0.18 & \textbf{0.65} \\
  & \texttt{mixed-mSSL}  & 0.51 & 0.86 & 0.60 & 0.75 & 342.66 & \textbf{97.82} & \textbf{0.13} & \textbf{0.65} \\
\midrule
\multirow{4}{*}{Block Diagonal}
  & \texttt{mt-MBSP}     & 0.53 & \textbf{0.98} & \textbf{0.94} & \textbf{0.85} & 2046.60 & 98.91 & 1.44 & 0.60 \\
  & \texttt{sepSSL}      & 0.60 & 0.69 & 0.45 & 0.66 & \textbf{3.69} & 98.23 & 0.96 & 0.58 \\
  & \texttt{sepglm}      & \textbf{0.78} & 0.73 & 0.55 & 0.74 & 14.47 & 97.94 & 0.17 & \textbf{0.65} \\
  & \texttt{mixed-mSSL}  & 0.50 & 0.86 & 0.60 & 0.76 & 319.20 & \textbf{97.82} & \textbf{0.13} & 0.64 \\
\midrule
\multirow{4}{*}{Star Graph}
  & \texttt{mt-MBSP}     & 0.53 & \textbf{0.98} & \textbf{0.94} & \textbf{0.85} & 1956.60 & 98.91 & 1.45 & 0.59 \\
  & \texttt{sepSSL}      & 0.60 & 0.71 & 0.47 & 0.68 &  \textbf{3.67} & 98.23 & 0.95 & 0.58 \\
  & \texttt{sepglm}      & \textbf{0.78} & 0.73 & 0.55 & 0.74 & 13.76 & 97.95 & 0.19 & \textbf{0.65} \\
  & \texttt{mixed-mSSL}  & 0.51 & 0.86 & 0.61 & 0.76 & 294.96 & \textbf{97.84} & \textbf{0.17} & \textbf{0.65} \\
\midrule
\multirow{4}{*}{Small World}
  & \texttt{mt-MBSP}     & 0.52 & \textbf{0.93} & \textbf{0.78} & \textbf{0.81} & 1759.68 & 98.92 & 1.44 & 0.60 \\
  & \texttt{sepSSL}      & 0.60 & 0.68 & 0.45 & 0.66 &  \textbf{3.51} & 98.23 & 0.95 & 0.59 \\
  & \texttt{sepglm}      & \textbf{0.78} & 0.71 & 0.54 & 0.73 & 12.45 & 97.95 & 0.19 & 0.64 \\
  & \texttt{mixed-mSSL}  & 0.51 & 0.83 & 0.57 & 0.74 & 261.60 & \textbf{97.84} & \textbf{0.17} & \textbf{0.65} \\
\midrule
\multirow{4}{*}{Tree Network}
  & \texttt{mt-MBSP}     & 0.53 & \textbf{0.98} & \textbf{0.94} & \textbf{0.85} & 1807.74 & 98.91 & 1.44 & 0.60 \\
  & \texttt{sepSSL}      & 0.60 & 0.69 & 0.46 & 0.67 &  \textbf{3.44} & 98.23 & 0.98 & 0.58 \\
  & \texttt{sepglm}      & \textbf{0.78} & 0.72 & 0.55 & 0.74 & 13.89 & 97.94 & 0.17 & 0.64 \\
  & \texttt{mixed-mSSL}  & 0.51 & 0.87 & 0.61 & 0.76 & 260.10 & \textbf{97.82} & 0.12 & \textbf{0.65} \\
\bottomrule
\end{tabular}
\end{table}
The common story for \Cref{table:p1000q6unif} and \Cref{table:p1000q6disjt} is the superior predictive performance for \texttt{mixed-mSSL} and a balanced trade-off between \texttt{SEN}, \texttt{SPEC} and \texttt{PREC}. It is worth noting that in this large $n$ and large $p$ setting, \texttt{mt-MBSP} trades enormous computation for near-perfect \texttt{PREC} and \texttt{SPEC} but middling prediction.

\subsection{Hyperparameter Sensitivity Analysis}
Below is a sensitivity analysis for the two spike–slab penalty hyperparameters in the AR1 setting with $(n,p,q) = (200,500,4)$ under the signal setting $\mathcal{U}[-5,-2]\cup[2,5]$. We compare the default penalties $(\lambda_0,\lambda_1)$ elaborated in Section 3 to halved penalties $(\lambda_0/2,\lambda_1/2)$ and doubled penalties $(2 \lambda_0, 2\lambda_1)$. Table \ref{table:hp_sensitivity} shows that halving both spike‐penalty parameters $(\lambda_0/2,\lambda_1/2)$ yields higher sensitivity  at the expense of specificity and precision, while doubling them $(2 \lambda_0, 2 \lambda_1)$ produces the reverse effect—a conservative fit with very high specificity but low sensitivity. Notably, the predictive metrics remain largely stable across this two‐fold range of penalties, demonstrating that \texttt{mixed-mSSL}’s out‐of‐sample performance is robust to moderate hyperparameter changes. 
\begin{table}[H]
\centering
\caption{Hyperparameter sensitivity for \texttt{mixed-mSSL} under AR(1), \((n,p,q)=(200,500,4)\).}
\label{table:hp_sensitivity}
\small
\begin{tabular}{@{}lccccccc@{}}
\toprule
\textbf{Penalties} 
  & \multicolumn{4}{c}{\textbf{Support}} 
  & \multicolumn{3}{c}{\textbf{Predictive}} \\
\cmidrule(lr){2-5} \cmidrule(lr){6-8}
  & SEN & SPEC & PREC & ACC 
  & RFE & RMSE & AUC  \\
\midrule
\(\bigl(\tfrac{\lambda_0}{2},\tfrac{\lambda_1}{2}\bigr)\)
  & 0.42 & 0.80 & 0.47 & 0.69 
  & 54.57 & 0.78 & 0.60 \\
\((\lambda_0,\lambda_1)\) (default)
  & 0.35 & 0.86 & 0.51 & 0.70 
  & 54.57 & 0.77 & 0.60 \\
\(\bigl(2\lambda_0,2\lambda_1\bigr)\)
  & 0.10 & 0.99 & 0.78 & 0.70 
  & 54.60 & 0.80 & 0.58 \\
\bottomrule
\end{tabular}
\end{table}

\setcounter{figure}{0}
\setcounter{equation}{0}
\setcounter{table}{0}
\section{Additional details for the real data analyses}
\label{sec:additional_real}
\subsection{Results for the chronic kidney disease (CKD) application} 
The top ten important biomarkers identified by \citet{CKDdata}, which are treated as the ``gold-standard" for our classification metrics were: Albumin (AL), Hemoglobin (HEMO), Packed Cell Volume (PCV), Red Blood Cell Count (RBCC), Serum Creatinine (SC), Blood Glucose Random (BGR), Blood Urea (BU), Sodium (SOD), White Blood Cell Count (WBCC), and Hypertension (HTN).
Table \ref{tab:ckd_selected_list} summarizes the set of selected biomarkers for our CKD analysis.
\begin{table}[H]
\centering
\caption{CKD biomarkers selected by each method}
\label{tab:ckd_selected_list}
\begin{tabular}{@{}lp{0.75\textwidth}@{}}
\toprule
\textbf{Method}   & \textbf{Selected biomarkers} \\
\midrule
\texttt{mixed-mSSL} & AL, RBCC, HEMO, DM, BGR, PCV, HTN \\[4pt]
\texttt{mt-MBSP}    & AL, SC, HEMO, DM, SOD \\[4pt]
\texttt{sepSSL}     & AL, HEMO, DM, PCV, HTN, BP \\[4pt]
\texttt{sepglm}     & BP, AL, SU, RBC, PCC, BA, BGR, SC, SOD, HEMO, \\[4pt]
                    & PCV, WBCC, RBCC, HTN, DM, APPET \\[4pt]
\bottomrule
\end{tabular}
\end{table}
The key to these biomarker abbreviations are listed below: 
\begin{table}[H]
\centering
\caption{CKD biomarker abbreviations}
\label{tab:ckd_abbrev_key}
\small
\begin{tabular}{@{}ll@{}}
\toprule
\textbf{Code} & \textbf{Description} \\ \midrule
BP    & Blood pressure (mm/Hg) \\
AL    & Albumin (qualitative) \\
SU    & Sugar (qualitative) \\
RBC   & Red blood cells (qualitative) \\
RBCC  & Red blood cell count ($\times10^{6}/\mu$L) \\
PCC   & Pus cell clumps (qualitative) \\
BA    & Bacteria (qualitative) \\
BGR   & Random blood glucose (mg/dL) \\
SC    & Serum creatinine (mg/dL) \\
SOD   & Sodium (mEq/L) \\
HEMO  & Hemoglobin (g/dL) \\
PCV   & Packed cell volume (\%) \\
WBCC  & White blood cell count ($\times10^{3}/\mu$L) \\
HTN   & Hypertension status (yes/no) \\
DM    & Diabetes mellitus status (yes/no) \\
APPET & Appetite (good/poor) \\ \bottomrule
\end{tabular}
\end{table}

\end{document}